\newtheorem{definition}{\sc Definition}  [section]
\newtheorem{lemma}{\sc Lemma}  [section]
\newtheorem{claim}{\sc Claim}  [section]
\newtheorem{theorem}{\sc Theorem}  
\newtheorem*{theorem*}{\sc Theorem}
\newtheorem{corollary}{\sc Corollary} [section]
\newtheorem{proposition}{\sc Proposition}[section]
\newtheorem{remark}{\sc Remark}[section]
\newcommand{\threshold}{\mathcal{I}^{\beta,\epsilon}}
\newcommand{\scl}{s}
\newcolumntype{C}{>{\centering\arraybackslash}p{1em}}
\newcommand{\Xomit}[1]{}
\newcommand{\randbud}[1]{\mathcal{I}_{#1}}
\newcommand{\z}{{\bf z}}
\newcommand{\y}{{\bf y}}
\newcommand{\x}{{\bf x}}
\newcommand{\prices}{{\bf p}}
\newcommand{\Set}{V}
\newcommand{\V}{V}
\newcommand{\A}{A}
\def\final{0}  
\def\iflong{\iffalse}
\newcommand{\thanh}[1]{{\color{blue}[{\small Thanh: \bf #1}]\marginpar{\color{red}*}}}
\newcommand{\ysl}[1]{{\color{brown}[{\small Young-San: \bf #1}]\marginpar{\color{red}*}}}
\newcommand{\haoyu}[1]{{\color{olive}[{\small Haoyu: \bf #1}]\marginpar{\color{red}*}}}
\newcommand{\thanh}[1]{}
\newcommand{\ysl}[1]{}
\newcommand{\haoyu}[1]{}
\title{A few good choices}
\author{Haoyu Song}
\thanks{Department of Computer Science, Purdue University; \href{mailto:song522@purdue.edu}{song522@purdue.edu}}
\author{Th\`{a}nh Nguyen}
\thanks{Daniels School of Business, Purdue University; \href{mailto:nguye161@purdue.edu}{nguye161@purdue.edu}}
\author{Young-San Lin}
\thanks{
Melbourne Business School; \href{mailto:y.lin@mbs.edu}{y.lin@mbs.edu}}
\date{\today}
\begin{document}

\onehalfspacing

\begin{abstract}

A \emph{Condorcet winning set} addresses the Condorcet paradox by selecting a few candidates—rather than a single winner—such that no unselected alternative is preferred to all of them by a majority of voters. This idea extends to \emph{$\alpha$-undominated sets}, which ensure the same property for any $\alpha$-fraction of voters and are guaranteed to exist in constant size for any $\alpha$. However, the requirement that an outsider be preferred to \emph{every} member of the set can be overly restrictive and difficult to justify in many applications. Motivated by this, we introduce a more flexible notion: \emph{$(t, \alpha)$-undominated sets}. Here, each voter compares an outsider to their $t$-th most preferred member of the set, and the set is undominated if no outsider is preferred by more than an $\alpha$-fraction of voters. This framework subsumes prior definitions, recovering Condorcet winning sets when $(t = 1, \alpha = 1/2)$ and $\alpha$-undominated sets when $t = 1$, and introduces a new, tunable notion of collective acceptability for $t > 1$. We establish three main results:
\begin{itemize}
    \item We prove that a \((t, \alpha)\)-undominated set of size \(O(t/\alpha)\) exists for all values of \(t\) and \(\alpha\).
    
    \item We show that as \(t\) becomes large, the minimum size of such a set approaches \(t/\alpha\), which is asymptotically optimal.
    
    \item In the special case \(t = 1\), we improve the bound on the size of an \(\alpha\)-undominated set given by \cite{sixCandidates}. As a consequence,  we show that a Condorcet winning set of five candidates exists, improving their bound of six.
\end{itemize}

\end{abstract}

\maketitle

\section{Introduction}



Computational social choice is an emerging area at the intersection of economics and computer science, with broad applications ranging from voting and public goods provision to fairness in machine learning and recent advances in AI-human alignment. At its core is the fundamental problem of aggregating heterogeneous preferences in a fair and principled way. Yet, the generality of the social choice framework  often leads to strong impossibility results, which necessitate the development of compromise solutions or approximation approaches.

A common strategy in response to such challenges is to allow for the selection of multiple candidates, rather than a single winner, to better accommodate conflicting preferences and ensure broader acceptability. For example, in contexts like prize distribution, awarding a group of candidates  may better reflect the preferences of the  judging panel when there is no clear front-runner. Similarly, in committee selection, slightly increasing the committee size can lead to more effective and representative outcomes. Even when only one or a few candidates must ultimately be chosen, presenting a shortlist for further evaluation is common and can improve the decision-making process.

The idea of selecting more than one candidate has deep roots in voting theory, motivated in large part by impossibility results that emerge when restricting attention to a single winner. A prominent example is the Condorcet paradox \citep{de1785essai}, which illustrates that majority preferences can be cyclical, preventing any single candidate from defeating all others in pairwise comparisons. More strikingly, generalized versions reveal that in some voting instances, no single candidate can even secure support from a positive fraction of voters.
Despite these challenges, producing a voting outcome that garners support from a sufficiently large fraction of the electorate remains a central objective. One approach to achieving this is to relax the single-winner constraint and instead allow for the selection of multiple candidates.

One such relaxation extends the notion of a Condorcet winner to a Condorcet winning set (committee), as introduced by \citet{elkind2015condorcet}. A Condorcet winning committee is a set of candidates such that no candidate outside the set is preferred to \emph{all} members of the set by a majority of voters. Identifying the minimal number $k$ for which a Condorcet winning committee of size $k$ always exists across all elections remains an open problem. The current best-known upper bound is 6 established by \cite{sixCandidates}, while the lower bound is $3$.

The Condorcet winning set, as described above, is not the only way to define outcomes with majority support. It boils down to how the comparison between an alternative and a set of alternatives are defined. For example, \cite{mcgarvey1953theorem} studies an alternative definition, requiring that for every candidate outside the selected set, there exists at least one candidate within the set that gains majority support over the external candidate. This is a stronger requirement, as it demands that all voters compare the outside candidate to a single member of the committee. However, \cite{mcgarvey1953theorem} shows that such a committee of constant size does not necessarily exist. \cite{charikar2025approximately,bourneuf2025dense} show that if the majority condition is relaxed to $\frac{1}{2} - \epsilon$ for any positive $\epsilon$, a constant-size committee will always exist. However, a drawback of this stronger requirement is that it typically leads to very large bounds on committee size for small $\epsilon$, limiting practical applicability.\footnote{For instance, with $\epsilon = 0.01$, \cite{bourneuf2025dense} gives a bound of 5{,}227{,}032, while \cite{charikar2025approximately} yields a bound of 3{,}927.}

A more serious limitation of both concepts discussed above is their reliance on the assumption that each voter compares the outside alternative with only a single member of the selected committee. This assumption becomes especially problematic in the context of large committees. For example, if an alternative is excluded from a committee of 20 members, it is not well justified to attribute this outcome to a preference for just one selected candidate. More plausibly, the majority supports a combination of multiple committee members over the outside alternative. Ignoring this aggregated preference risks misrepresenting the collective rationale behind the committee’s composition.  A particularly relevant application is the selection of papers for a conference or research proposals for funding, where decisions are based on diverse and sometimes conflicting expert evaluations. In these settings, the selected set of candidates corresponds to the accepted papers or funded projects. To justify the rejection of a given submission, it is not sufficient to argue that it is weaker than the single best entry. Rather, the decision should be evaluated in the context of the entire selected set. For example, one may want to assess whether the rejected paper had less support than the median accepted submission, or little support compared with those in the top quartile.





\subsection*{Our Contributions}

The goal of this paper is to introduce a general framework for analyzing Condorcet-style winning sets—one that allows voters to compare a single alternative against a set of selected candidates. This framework involves two key generalizations. First, it extends the notion of how a voter may prefer a committee over an alternative outside the committee. Second, it generalizes the concept of a \emph{majority} to any fixed fraction of the electorate. Specifically, we define a \emph{$(t, \alpha)$-undominated set} as a subset $C$ of candidates with $|C| \ge t$, such that for every candidate $a \notin C$, at least a $1 - \alpha$ fraction of voters prefer at least $t$ members of $C$ over $a$. The notion of $(t, \alpha)$-undominance captures both key generalizations. The central question we study is:
\begin{center}
    {\it What is the minimal size of a $(t, \alpha)$-undominated set that exists in all voting instances?}
\end{center}

A special case arises when $t = 1$, in which case a $(1, \alpha)$-undominated set is simply referred to as an \emph{$\alpha$-undominated set}, introduced in \cite{elkind2015condorcet,elkind2011choosing}. 
When $t = 1$ and $\alpha = 1/2$, the concept coincides with the classical Condorcet winning set. 
To the best of our knowledge, the case of $t > 1$ has not been previously explored.

\Xomit{
Generally speaking, the central question we investigate is:
\begin{center}
    {\it What is the minimum size for a generalized Condorcet winning set in any voting instance?}
\end{center}

To generalize \emph{majority}, it is natural to consider a parameter $\alpha \in [0,1]$ served as a threshold for the proportion of voters who deviate from the selected committee to a single alternative. A committee is \emph{$\alpha$-undominated} if for any fixed outside alternative, the fraction of voters who prefer the outside alternative to the top choice in the committee is at most $\alpha$. A $\frac{1}{2}$-undominated set is equivalent to a Condorcet winning set. This notion is helpful to tackle the following question raised by \cite{elkind2015condorcet}: 
\begin{center}
\textit{Given a target size $k$, what is the smallest $\alpha$ for which there exists a set of size $k$ which is $\alpha$-undominated for any election instance $\mathcal{E}$?}
\end{center}
}

In this paper, we establish three main results. We begin with the case $t = 1$ and prove the following theorem in Section~\ref{sec:condorcet}.

\begin{restatable}{theorem}{thmundom} \label{thm:undominate}
    Given $k \in \mathbb{N}$ and $\beta \in [0,1]$, a $(\beta + (1 - \beta)^k)$-undominated committee of size $k$ exists.
\end{restatable}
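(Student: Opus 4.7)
The plan is to construct $C$ as $k$ i.i.d.\ draws from a distribution $x^\ast$ on candidates obtained via LP duality, and to control the bad fraction by partitioning voters into ``hard'' and ``easy'' with respect to each outside alternative.

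\smallskip

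The first step is to establish, by a minimax argument, the existence of a distribution $x^\ast \in \Delta(\text{candidates})$ with the following \emph{simultaneous coverage} property: for every alternative $a$, the fraction of voters $i$ with $\Pr_{c \sim x^\ast}[c \succeq_i a] < \beta$ is at most $\beta$. Equivalently, writing $q_i(a) := \Pr_{c \sim x^\ast}[a \succ_i c]$, one has $|\{i : q_i(a) > 1-\beta\}|/n \le \beta$ uniformly in $a$. Given such $x^\ast$, sample $c_1,\ldots,c_k$ i.i.d.\ from $x^\ast$. For any fixed alternative $a$, split the voters into the ``hard'' set $B_a := \{i : q_i(a) > 1-\beta\}$, of size at most $\beta n$ (each contributing at most $1$ to $f_a(C)$), and its complement $G_a$, for each of whose voters the probability that $a \succ_i c_j$ for every $j$ is at most $(1-\beta)^k$. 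Summing yields $\mathbb{E}[f_a(C)] \le \beta + (1-\beta)^k$ uniformly in $a$.

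\smallskip

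The final, delicate step is to pass from this per-$a$ expectation to a single deterministic committee $C$ satisfying $\max_a f_a(C) \le \beta + (1-\beta)^k$. Because a naive union bound over $a$ is too lossy, I would instead build the committee adaptively, maintaining the inductive invariant $\max_a f_a(C_t) \le \beta + (1-\beta)^t$ for every $t=0,1,\ldots,k$; the base case $t=0$ is immediate since $f_a(\emptyset) = 1$. At step $t$, the existence of $c_t$ preserving the invariant should follow by re-applying the Step 1 LP argument to an instance reweighted to concentrate on the residual bad sets $B_a^{(t-1)}$ for those alternatives $a$ whose current bad fraction still exceeds $\beta + (1-\beta)^t$. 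The principal obstacle will be Step 1 itself: the simultaneous tail bound on $x^\ast$ across all $a$ is strictly stronger than the guarantee of the classical Condorcet tournament LP and calls for a minimax game where the adversary picks a voter-alternative pair rather than just an alternative; the secondary obstacle is ensuring that the reweighted LP in the adaptive step yields a genuinely deterministic next candidate rather than only an in-expectation bound.
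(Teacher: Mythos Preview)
Your high-level plan --- find a ``universal'' lottery $x^\ast$ on candidates, sample $k$ times, and split voters into a small hard set and a well-covered easy set --- is exactly the paper's strategy. But both of the obstacles you flag are real, and you resolve neither; the paper handles them via ideas you are missing.

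\textbf{Step 1.} The simultaneous tail property you want for $x^\ast$ is correct: the paper's $x^\ast$ is the common allocation $\y$ of a Lindahl equilibrium with ordinal preferences (LEO) under a threshold income distribution, and a price-counting argument shows that for every $a$ at most (roughly) $\beta n$ voters have $q_i(a)>1-\beta$. However, this property is \emph{not} linear in $x^\ast$ --- it counts, across voters, violations of linear thresholds --- so a straight minimax/LP does not obviously deliver it. The paper obtains it from a Kakutani fixed point (existence of LEO) together with the market-clearing and price-budget identities; the personalized prices $p_{v,a}$ are what make the counting go through. Absent an argument of that kind, your Step~1 is unproved.

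\textbf{Step 5.} This is the more serious gap. Your partition $B_a\cup G_a$ depends on $a$, which is precisely why you only get a per-$a$ expectation and are pushed toward an adaptive construction. The paper's move is to \emph{decouple}: for each voter $i$ define a threshold candidate $a_i^\ast$ as the $\succ_i$-best candidate with $q_i(a_i^\ast)\le 1-\beta$ (in the paper, the ``boundary candidate'', defined via prices). Say $i$ is \emph{covered} by $C$ if some $c\in C$ satisfies $c\succeq_i a_i^\ast$; crucially this does not depend on $a$. One checks that if $i$ is covered and $a\succ_i C$ then $a\succ_i a_i^\ast$, i.e.\ $i\in B_a$. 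Hence
\[
\{i:a\succ_i C\}\ \subseteq\ B_a\ \cup\ \{i:\text{$i$ uncovered by }C\},
\]
so $f_a(C)\le \beta + \lvert\{\text{uncovered}\}\rvert/n$ \emph{uniformly in $a$}. Now a single averaging over $C\sim (x^\ast)^{\otimes k}$ shows the uncovered fraction has expectation at most $(1-\beta)^k$, and some deterministic $C$ achieves $f_a(C)\le \beta+(1-\beta)^k$ for all $a$ simultaneously. No adaptive or reweighted scheme is needed. Your proposed inductive invariant $\max_a f_a(C_t)\le\beta+(1-\beta)^t$ is, by contrast, too strong to maintain: step $t=1$ is vacuous, and step $t=2$ already asks a single added candidate to push every $f_a$ from (possibly) near $1$ down to $\beta+(1-\beta)^2$, which there is no reason to believe is achievable.
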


This theorem improves the undominance ratio in \cite{sixCandidates} for all \(k\), establishes the existence of a Condorcet winning set of size 5, and tightens the bound of 6 in that paper.

We study the case $t \ge 2$ in Section~\ref{sec:generalized_cond} and prove the following result.

\begin{restatable}{theorem}{thmgeneralupperbound} \label{thm:generalized-cond}
Given $\alpha \in (0,1]$, there exists a set of size $\lfloor \delta(t)\cdot \frac{t}{\alpha} \rfloor$ which is $(t,\alpha)$-undominated. Here, $1<\delta(t)\le 4.75$ for all integral $t\ge 2$  and $\delta(t) \rightarrow 1$ as $t\rightarrow \infty$.
\end{restatable}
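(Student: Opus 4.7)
My plan is to prove Theorem~\ref{thm:generalized-cond} via the probabilistic method, combined with a minimax (LP-duality) reformulation that promotes an ``on-average over outsiders'' bound to the uniform bound the statement requires. The construction I would first attempt is to sample $C$ uniformly at random from all $k$-subsets of the candidate pool, with $k = \lfloor \delta(t)\cdot t/\alpha\rfloor$, and argue that with positive probability the resulting committee is $(t,\alpha)$-undominated.

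For a fixed voter $i$ and outsider $a$, let $r_i(a)$ denote the number of candidates that voter $i$ prefers to $a$, so that $r_i(\cdot)$ is a permutation of $\{0,1,\ldots,m-1\}$ when $m$ is the number of candidates. The count of members of $C$ that voter $i$ ranks above $a$ is then hypergeometric and well-approximated by $\mathrm{Bin}(k, r_i(a)/m)$, and voter $i$ is $t$-dominated by $a$ precisely when this count is less than $t$. Averaging the domination probability over outsiders for a fixed voter yields the beta-binomial identity
\[
    \sum_a \Pr\bigl[\mathrm{Bin}(k,\, r_i(a)/m) < t\bigr] \;\approx\; m \int_0^1 \Pr[\mathrm{Bin}(k,p) < t]\,dp \;=\; \frac{m\,t}{k+1},
\]
so the \emph{average} dominated-voter fraction across outsiders is $\approx t/(k+1)$, which matches $\alpha$ exactly when $k\approx t/\alpha$. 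This already suggests that $\delta(t)\to 1$ should be attainable asymptotically.

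The main obstacle, and where the slack $\delta(t) > 1$ is spent, is lifting this average-over-$a$ bound to a uniform bound over every outsider $a\notin C$ without paying a $\log m$ factor from a naive union bound. My plan is to invoke the minimax theorem: a $(t,\alpha)$-undominated $k$-committee exists iff for every distribution $\mu$ over outsiders there is a committee $C$ of size $k$ with $\mathbb{E}_{a\sim \mu}[f(C,a)] \le \alpha$, where $f(C,a)$ is the fraction of voters $t$-dominated by $a$ in $C$. Extending the averaging identity above to arbitrary $\mu$-weighted sums and combining it with Chernoff-type tail bounds on $\Pr[\mathrm{Bin}(k,p)<t]$ should absorb the per-outsider fluctuations into the multiplicative slack $\delta(t)$.

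The concrete numeric bounds on $\delta(t)$ then emerge from the underlying tail analysis. Explicit Chernoff (or direct binomial-tail) computations, valid for every integral $t\ge 2$, yield $\delta(t)\le 4.75$, with the worst case occurring at small $t$; for the asymptotic regime, Berry--Esseen / central-limit estimates show that $\Pr[\mathrm{Bin}(k,p) < t]$ is driven below $\alpha$ by a multiplicative inflation of only $1+o(1)$ beyond $t/\alpha$, giving $\delta(t)\to 1$. The most delicate technical step is verifying the weighted identity against adversarial $\mu$ concentrated on a few ``hard'' outsiders, and extracting the universal constant $4.75$ cleanly from the small-$t$ Chernoff estimates.
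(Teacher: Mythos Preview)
Your minimax equivalence fails in the direction you need. From ``for every $\mu$ there exists $C_\mu$ with $\mathbb{E}_{a\sim\mu}[f(C_\mu,a)]\le\alpha$'' the minimax theorem only yields a \emph{distribution} $\mathcal{D}$ over committees with $\max_a \mathbb{E}_{C\sim\mathcal{D}}[f(C,a)]\le\alpha$, not a single committee $C^\star$ with $\max_a f(C^\star,a)\le\alpha$; a $2\times 2$ payoff matrix with rows $(0,2\alpha)$ and $(2\alpha,0)$ already shows the gap between the mixed and pure values. Moreover, the uniform random $k$-subset cannot serve as a witness distribution even in expectation: if some candidate $a^\ast$ is every voter's top choice, then $r_i(a^\ast)=0$ for all $i$ and $f(C,a^\ast)=1$ whenever $a^\ast\notin C$, so $\mathbb{E}_C[f(C,a^\ast)]\approx 1$. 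Your beta-binomial identity relies on the fact that averaging over \emph{all} $a$ makes $r_i(a)$ sweep $\{0,\dots,m-1\}$ uniformly; an adversarial $\mu$ destroys this, and no Chernoff slack can close a gap of order $1$.

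The paper supplies precisely the missing instance-adaptive structure. It computes a scaled Lindahl equilibrium (SLEO) with budget $B$, yielding a fractional committee $\mathbf{y}\in[0,1]^{|A|}$ together with personalized prices satisfying $\sum_v p_{v,c}\le \gamma t\,n/B$ \emph{simultaneously for every} non-saturated outsider $c$. After dependently rounding $\mathbf{y}$, any voter who is ``$t$-covered'' yet still $t$-prefers an outsider $c$ must have $p_{v,c}\ge 1-\varepsilon$; the price bound therefore caps the number of such voters by $\tfrac{\gamma t}{B(1-\varepsilon)}\,n$ uniformly in $c$, with no union bound. The constant $4.75$ and the limit $\delta(t)\to 1$ then arise from balancing a Chernoff bound on the uncovered voters (Claim~\ref{claim:rounding-probability}) against this price bound (Claim~\ref{claim:t-cover}), together with an iterative refinement (Algorithm~\ref{alg:iterative}) that is genuinely needed when $\alpha$ is small relative to $t$.
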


In Section~\ref{sec:generalized_cond}, we provide the precise formulation of $\delta(t)$ and compute its values for small values of $t$. For instance, we find that $\delta(2) = 4.75$, $\delta(3)=4.11$ and $\delta(4)=3.69$. Overall, the result implies that selecting $O(t/\alpha)$ candidates suffices to guarantee a $(t,\alpha)$-undominated committee. 

We also establish the following lower bound to show the asymptotic tightness of this guarantee.

\begin{restatable}{theorem}{thmgenerallowerbound} \label{thm:k-alpha-lower}
If $\alpha < \frac{t+1}{k+1}$, then there exist elections that admit no $(t,\alpha)$-undominated committee of size $k$.
\end{restatable}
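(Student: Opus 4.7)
The plan is to establish the lower bound by constructing, for each $\alpha < (t+1)/(k+1)$, an explicit election in which every committee $C$ of size $k$ has some outsider $a \notin C$ with $f_C(a) > \alpha$, where $f_C(a)$ denotes the fraction of voters who rank $a$ among their top $t$ choices in $C \cup \{a\}$. The main tool is a double-counting identity: for any committee $C$ and any voter $i$, let $p_i(C)$ be the position in voter $i$'s full ranking of their $t$-th most preferred member of $C$. Since an outsider $a$ lies in the top $t$ of $C \cup \{a\}$ precisely when $a$ is ranked above position $p_i(C)$, and each voter contributes exactly $p_i(C) - t$ such outsiders, we obtain
\[ \sum_{a \notin C} f_C(a) \;=\; \frac{1}{n} \sum_{i=1}^n \bigl(p_i(C) - t\bigr). \]
By pigeonhole, $\max_{a \notin C} f_C(a) \ge \frac{1}{m-k} \sum_{a \notin C} f_C(a)$, so it suffices to exhibit an instance in which the right-hand side exceeds $(m-k)(t+1)/(k+1)$ for every choice of $C$.

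The construction I have in mind uses $m$ candidates (with $m$ chosen sufficiently larger than $k$) partitioned into $k+1$ disjoint blocks $B_1, \ldots, B_{k+1}$ of size $t+1$, together with $k+1$ voter groups, where the voters in group $j$ rank the candidates of $B_j$ in their top $t+1$ positions. The pigeonhole principle guarantees that for any committee $C$ of size $k$, at least one block $B_{j^*}$ is disjoint from $C$; voters in group $j^*$ then rank every candidate of $B_{j^*}$ above every member of $C$, so each of the $t+1$ outsiders in $B_{j^*}$ appears in the top $t$ of $C \cup \{a\}$ for these voters. A careful accounting of how voters in the covered groups $j \neq j^*$ also contribute---namely, when $|B_j \cap C| \le t-1$, the corresponding voters still place certain block-$j$ outsiders above their $t$-th best $C$-member---yields the required bound of $(t+1)/(k+1)$.

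The main obstacle is verifying the averaging bound uniformly across all committees $C$, including those whose members are spread evenly across blocks. A naive block-plus-pigeonhole argument produces only $1/(k+1)$ from the single uncovered block, so the improvement to $(t+1)/(k+1)$ must exploit two effects simultaneously: that the uncovered block contains $t+1$ candidates each of which scores $1/(k+1)$, and that the covered blocks with few intersections with $C$ contribute additional mass to outsiders in the uncovered block. Carrying out this bookkeeping for every $C$---particularly the hardest case in which $C$ places exactly one candidate in each of $k$ distinct blocks---is the technical heart of the argument, and letting $m$ grow (while keeping the block structure) lets the $\epsilon$-slack in $\alpha < (t+1)/(k+1)$ be absorbed cleanly.
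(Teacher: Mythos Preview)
Your averaging route cannot reach $(t+1)/(k+1)$. For \emph{any} instance on $m$ candidates, summing your identity over all $k$-subsets $C$ gives
\[
\frac{1}{\binom{m}{k}}\sum_{|C|=k}\ \frac{1}{n}\sum_{i}\bigl(p_i(C)-t\bigr)\;=\;\frac{t(m-k)}{k+1},
\]
because for each fixed voter $\sum_C p_i(C)=t\binom{m+1}{k+1}$ (the expectation of the $t$-th order statistic of a uniform $k$-subset of $[m]$, times $\binom{m}{k}$). Hence some committee $C^\star$ has $\sum_{a\notin C^\star}f_{C^\star}(a)\le t(m-k)/(k+1)$, and pigeonholing over its $m-k$ outsiders yields only $\max_a f_{C^\star}(a)\ge t/(k+1)$. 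So the plan ``make the right-hand side exceed $(m-k)(t+1)/(k+1)$ for every $C$'' is infeasible no matter how the preferences are designed or how large $m$ is taken.

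Your block construction hits the same ceiling. With $t=2$, $k=3$, four blocks of size three, cyclic secondary ordering, and $C$ consisting of one candidate from each of three blocks, a direct check gives $\max_a f_C(a)=\tfrac12=t/(k+1)$, not $\tfrac34$: only two of the four voter groups $2$-prefer any outsider in the empty block, because the other two covered groups already see two committee members before that block appears in their ordering. The paper's construction differs in two essential ways. The $k+1$ rows have large size $\ell$ (not $t+1$) with a cyclic within-row ordering, and the argument does \emph{not} average: a sliding-window pigeonhole finds $t+1$ \emph{consecutive} rows containing at most $t$ committee members in total, and then a single explicit outsider in the last such row is shown to be $t$-preferred by every voter in those $t+1$ rows, yielding the fraction $\frac{t+1}{k+1}\bigl(1-\tfrac{1}{\ell}\bigr)$ directly. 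The large $\ell$ and the cyclic column structure are precisely what handle the boundary case where the window sum equals~$t$.
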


Theorem~\ref{thm:k-alpha-lower} states that any \((t, \alpha)\)-undominated committee must contain at least \(\left\lceil \frac{t+1}{\alpha} - 1 \right\rceil\) candidates. Combined with Theorem~\ref{thm:generalized-cond}, this yields an asymptotically tight bound of \(\frac{t}{\alpha}\) for the minimum committee size.

This asymptotic bound provides an interpretation of fairness in group selection and highlights a sharp trade-off. Specifically, a candidate can be legitimately excluded only if fewer than an \( \alpha \)-fraction of voters rank them above the committee’s most-preferred \( \alpha \)-fraction. In other words, exclusion is justified when the candidate fails to outperform the committee’s top tier in the eyes of a sufficiently large portion of the electorate.

A particularly transparent case arises when \(\alpha = 1/2\): a candidate is excluded only if a majority of voters prefer the median member of the committee over that candidate. More generally, the parameter \(\alpha\) serves a dual role---governing both the benchmark for comparison and the threshold of support---highlighting a delicate balance between depth of preference and breadth of agreement. Crucially, the \(\frac{t}{\alpha}\) bound is not only intuitive but also asymptotically optimal; no smaller committee can satisfy this level of undominance.

\subsection*{Technical Overview}

The technical contribution of our paper lies in adapting the Lindahl Equilibrium with Ordinal preferences (LEO), recently introduced by \cite{nguyen-song}, to the single-winner setting. In our formulation, LEO consists of a continuous distribution of income (or tokens) assigned to each voter, along with personalized prices for each voter-candidate pair. Voters consume probabilistically—that is, they receive a lottery over candidates determined by their income distribution and individual prices. A centralized agent then selects a lottery over candidates that maximizes total revenue.
A condition analogous to market clearing must be satisfied: the consumption of each voter must closely match the selected lottery. Discrepancies are allowed only in cases where the price of a candidate is zero.

This framework is inspired by  the classical Lindahl equilibrium, where voters have convex preferences over lotteries, receive a fixed endowment of tokens (normalized to one), and choose lotteries based on personalized prices. The key difference is to replace deterministic incomes with randomized ones, which allows us to accommodate purely ordinal preferences without relying on a cardinal utility representation or convex relaxation.

To construct a Condorcet winning set within this framework, we introduce a \emph{threshold distribution}, which is a continuous approximation of a discrete distribution placing probability \(\beta\) on 0 and \(1 - \beta\) on 1. This construction enables us to show that if a candidate falls within a voter's top \(1 - \beta\) percentile of the lottery constructed from LEO, the price they face for that candidate is close to 1. We leverage this fact and the property of a LEO to bound the number of voters who place an “outside” candidate within their top \(1 - \beta\) percentile. To construct a Condorcet winning set, we carefully choose the parameter \(\beta\) and then randomly select \(k\) candidates from the support of the LEO outcome.

Compared to \cite{sixCandidates}, our approach not only yields a stronger result but is also simpler and more amenable to generalization—particularly to the construction of \((t, \alpha)\)-undominated sets. Specifically, in the generalized setting—where comparisons between the selected set and an outside candidate depend on the top \(t\) members of the set—we modify LEO into a scaled version, which we call SLEO (Scaled Lindahl Equilibrium with Ordinal preferences). The key motivation for this modification is to ensure that each voter's consumption is spread across \(O(t)\) candidates, rather than concentrating entirely on their single most preferred option. This diffusion allows us to bound the price of any outside candidate that is ranked above one of the top \(t\) candidates in the selected set.

However, this modification introduces a tradeoff: the SLEO outcome is no longer a lottery over committees but a fractional allocation across individual candidates, with total weight at most equal to the committee size. To handle this, we take two additional technical steps:  
(a) we convert the fractional solution into a randomized committee of the desired size using dependent rounding with negative correlation, and  
(b) we implement a more refined, adaptive iterative procedure to construct the final committee while preserving the desired properties.

\subsection*{Related Literature}





This paper contributes to the literature on (computational) social choice by addressing several fundamental questions related to committee selection and preference aggregation.

\subsubsection*{Condorcet winning set} One of the most well-known properties in this literature is the Condorcet property, which requires that the chosen outcome be preferred by a majority over any other alternative. However, such an outcome is often unattainable \citep{de1785essai}.
 To address this, the literature restricts preference domains or adopts weaker axioms, such as {Condorcet consistency}, which selects a Condorcet winner when one exists\footnote{Randomization is also widely studied to resolve the Condorcet paradox; see \cite{brandt2017rolling} for a survey.}. We follow \cite{elkind2015condorcet} to extend the set of possible outcomes to sets of candidates. \cite{elkind2015condorcet} show that  a {Condorcet winning set} of size at most the logarithmic of the number of candidates  always exists. A constant-size Condorcet winning set (of size 32) was implicitly established in \cite{ApproStable}, recently improved to 6 by \cite{sixCandidates}. Our results further reduce this bound to 5, although a gap remains between this and the lower bound of~3.

\subsubsection*{Undominated and dominating set} The notion of majority support naturally extends to any \(\alpha\) fraction, giving rise to the concept of an \(\alpha\)-undominated set, as studied by~\cite{sixCandidates}. In our framework, this corresponds to the special case \( t = 1 \). Even in this case, our approach yields tighter bounds for all \(\alpha\) compared with prior work.

As discussed in the introduction, a stronger notion is that of an \(\alpha\)-dominating set: a set \( C \) is \(\alpha\)-dominating if, for every \( a \notin C \), there exists \( b \in C \) such that at least an \(\alpha\) fraction of voters prefer \( b \) to \( a \). For \(\alpha \geq \frac{1}{2}\), the smallest such set can be arbitrarily large~\cite{mcgarvey1953theorem}. In contrast, recent work~\cite{bourneuf2025dense, charikar2025approximately} shows that for any \(\alpha < \frac{1}{2}\), constant-size \(\alpha\)-dominating sets exist, with size depending on \(\epsilon = \frac{1}{2} - \alpha\).

Our notion of a \((t, \alpha)\)-undominated set is incomparable to \(\alpha\)-domination. It is stronger in requiring each outside candidate to be compared against the top \( t \) candidates in the set, but weaker in allowing each voter to compare with a different subset of \( t \) candidates. A key advantage of our notion is that a constant-size \((t, \alpha)\)-undominated set exists for any fixed \( t \) and \( \alpha \in (0,1)\).

\subsubsection*{Other concepts} Several alternatives to Condorcet winners have been proposed, though many face similar limitations: non-existence or the need to select a large number of candidates, as in $\alpha$-dominating sets for $\alpha \geq \frac{1}{2}$. For example, \cite{fishburn1981majority} introduced the idea of a \emph{majority committee}—a set preferred by a majority over any other set of the same size—though this requires preferences over committees and may not always exist. The Smith set~\citep{smith1973aggregation}, a minimal set beating all outsiders in pairwise comparisons, offers a stronger form of the Condorcet principle. The \emph{bipartisan set}~\citep{laffond1993bipartisan}, defined via maximal lotteries, is a subset of the Smith set and captures randomized generalizations. We also refer the reader to~\cite{brandt2016handbook} for a comprehensive survey of related concepts based on tournament graph structures.

\subsubsection*{Committee Selection with Ranking Preference}

The question of relaxing the choice set to allow the selection of multiple candidates is closely related to the settings of committee selection and, more broadly, participatory budgeting. In particular, this connection is captured through the concept of the \emph{core}. A key difference is that in participatory budgeting, the core is typically more demanding, as it permits deviations by sets of candidates rather than by individual candidates. Defining such a core requires extending preferences to sets of candidates, which often leads to weaker approximation guarantees compared to the Condorcet-style conditions we consider. For example,~\cite{ApproStable} establish a 16-approximation guarantee for  committee selection with ranking-based preferences, which is later improved to 9.8217 by \cite{sixCandidates}  and to 5.03 by \cite{nguyen-song}. However, this only implies a bound of 11 on the size of a Condorcet-winning set—significantly looser than the bound of 5 that we establish in this paper. More generally, in terms of $\alpha$-undominance set, our approach can also be shown to generate tighter bound for any $\alpha\geq 0.093.$

Our techniques are related to those in~\cite{nguyen-song}, who introduce the Lindahl Equilibrium with Ordinal preferences (LEO) to study general monotonic participatory budgeting. In contrast, our single-winner setting allows for tighter bounds, enabled by a more specialized and refined analysis. Moreover, as we show in this paper, the technique in \cite{nguyen-song} does not directly yield a $(t,\alpha)$-undominated set in our setting and requires  modifications to the concept of LEO to be applicable.

\subsubsection*{Intransitive dice}
The connection between voting theory and the Condorcet paradox has a surprising parallel in the phenomenon of \emph{intransitive dice}, as observed in~\cite{sixCandidates, charikar2025approximately}. In such a setup, each die has a higher chance of beating the next in a cycle—yet paradoxically, the last die outperforms the first.
This curious behavior, popularized by~\cite{gardner1970paradox}, has inspired a substantial body of mathematical research. For a deeper exploration of this link, see~\cite{charikar2025approximately}. Uncovering the full extent of its relevance to our setting presents an intriguing avenue for future investigation.

\subsection*{Organization}

In Section \ref{sec:pre}, we introduce necessary notations and our main technical tool, Lindahl Equilibrium with Ordinal Preferences (LEO). In Section \ref{sec:condorcet}, we prove that five candidates suffice for a Condorcet winning set. In Section \ref{sec:generalized_cond}, we prove that $O(t/\alpha)$ candidates are sufficient and required to ensure a $(t,\alpha)$-undominated set.

\section{Notations and Preliminaries} \label{sec:pre}
Let \( \V = \{1, 2, \ldots, n\} \) denote the set of voters (agents), and let \( A \) represent the set of \( m \) alternatives (or candidates).  Each agent \( v \in \V \) has a strict preference relation \( \succ_v \) over the alternatives in \( A \). \( a \succeq_v b \) means that either $a=b$ or agent \( v \) prefers \( a \) to \( b \). An election instance is denoted by $\mathcal{E}=(V,A,\succ)$.


Throughout the paper, we use bold lowercase letters such as $\x, \y$ to denote a vector and regular lowercase letters such as $x_{i,j}, y_k$ to denote a coordinate of a vector.

\Xomit{
Building on these notations, in Section \ref{sec:prelim-condorcet}, we introduce the Condorcet winning set notion and its generalization; in Section \ref{sec:prelim-leo}, we introduce the notion of Lindahl Equilibrium with Ordinal Preferences, our main technical engine to show the existence of the Condorcet winning set.
}

\Xomit{
\subsection{Condorcet Winning Set} \label{sec:prelim-condorcet}

\thanh{Maybe, we do not need this definition here}

Given an election $\mathcal{E}=(V,A,\succ)$, a Condorcet winning set is a set of candidates $C \subseteq A$ such that for every candidate $a \in A \setminus C$, the minority of the voters prefer $a$ to any candidates in $C$, i.e., $\lvert\{v \in V \mid a\succ_v C\}\rvert < \frac{n}{2}$ for all $a \in A$.

A generalization of the Condorcet winning set is the notion of an $\alpha$-undominated set.

\begin{definition}
A subset $C \subseteq A$ is an $\alpha$-undominated set if for every candidate $a \in A \setminus C$, 
$$\lvert\{v \in V : a \succ_v C \}\rvert \le  \lfloor \alpha n \rfloor .$$
\end{definition}
We note that a $\frac{1}{2}$-undominated set is equivalent to a Condorcet winning set. This notion is helpful to tackle the following question raised by \cite{elkind2015condorcet}: 
\begin{center}
\textit{Given a target size $k$, what is the smallest $\alpha$ for which there exists a set of size $k$ which is $\alpha$-undominated for any election instance $\mathcal{E}$?}
\end{center}
We provide a positive answer to this question in Section \ref{sec:condorcet}, Theorem \ref{thm:undominate}.

To generalize the notion of $\alpha$-undominance, we 
}

\subsection*{Lindahl Equilibrium with Ordinal Preferences} \label{sec:prelim-leo}

We adapt the concept of Lindahl Equilibrium under Ordinal Preferences (LEO), introduced in \cite{nguyen-song}, to our setting. As discussed in the introduction, LEO itself is an ordinal adaptation of the classical Lindahl equilibrium \citep{Lindahl70}.


We assume that each voter selects only individual candidates or an outside option, denoted by $\emptyset$.  Recall that each $v \in V$ is associated with a strict preference $\succ_v$ over $A$. We extend the use of $\succ_v$ over $A \cup \{\emptyset\}$ by having $\emptyset$ as the least preferred, while the ranking over $A$ remains the same. 

Each voter $v$ is endowed with a random income  $\randbud{v}$ supported on $[0,1]$  and has a  \textit{personalized} price vector $\prices_v \in \mathbb{R}_{+}^{|A \cup \{\emptyset\}|}$. We use $p_{v,a}$ to denote the personalized price of $a\in A \cup \{\emptyset\}$ for voter $v$ with $p_{v,\emptyset} = 0$. We assume that voters share a common income distribution and therefore denote it as $\randbud{}$. 


Under a fixed income $b \in \mathbb{R}_+$, the voter $v$'s demand is $ \max_{\succ_{v}} \{a: a \in A \cup \{\emptyset\} \text{ and } p_{v,a} \leq b  \}$, which is her most preferred and affordable alternative within her income. Notice that when the prices of all candidates are greater than $b$, the demand will be the outside option, $\emptyset$. Given the random income distribution  $\randbud{}$, voter $v$'s \textit{random demand} is defined as
$$
\mathcal{D}_{v}(\prices_v, \randbud{}) := \left\{\max_{\succ_{v}} \{a: a \in A \cup \{\emptyset\} \text{ and } p_{v,a} \leq b \} \mid b \sim \randbud{} \right\}.
$$  


Thus, under the assumptions that the random income distribution is strictly positive with probability 1 and that $p_{v,\emptyset} = 0$, the resulting random demand is non-empty and constitutes a lottery on $A \cup \{\emptyset\}$.

Lindahl equilibrium also involves a centralized agent called the \emph{producer}, who decides on the quantity of each candidate in $A$ to produce. From the producer's perspective, there is a unit cost associated with each candidate $a \in A$. The producer has a total budget of $B >0$ and can only produce the quantity $\z\in  \mathbb{R}_+^{|A|} $ satisfying $\mathbf{1}^T \z = B$. Thus, the budget $B$ determines the size of the committee. The producer aims to maximize total revenue. Therefore, the set of optimal strategies for the producer is 
$$
\mathcal{P}(B,\prices):= \arg\max_{z\in  \mathbb{R}_+^{|A|}} \sum_{a\in A} \left(\sum_{v\in \V}p_{v,a}\right)\cdot z_a  \text{ subject to } \mathbf{1}^T \z = B.
$$

We have the following definition of the Lindahl equilibrium with ordinal preference (LEO) for our setting adapted from \cite{nguyen-song}. 


\begin{definition} \label{def:LEO}
Given a random income $\randbud{}$ supported on $[0,1]$ and a total budget $B>0$, the Lindahl equilibrium with ordinal preference (LEO) $(\x, \y, \prices)$ consists of personalized prices $\prices_{v} \in [0,1]^{|A \cup \{\emptyset\}|}$ with  $p_{v,\emptyset} = 0$ and individual consumptions $\x_v \in [0,1]^{|A \cup \{\emptyset\}|}$ for each voter $v \in V$, and a common allocation $\y \in [0,B]^{|A|}$ such that the following holds: 
\begin{enumerate}[label=(\arabic*)]
    \item \label{leo-1} for each $a\in A \cup \{\emptyset\}$, $x_{v,a}=  \Pr[\mathcal{D}_{v}(\prices_v, \randbud{}) =a]$; 
    \item \label{leo-2} for each $a\in A$, $x_{v,a} \le  y_a $ and  if the strict inequality holds then  $p_{v,a} = 0$; 
    \item \label{leo-3} $\y\in \mathcal{P}(B,p)$.
\end{enumerate}
\end{definition}

Compared with the traditional Lindahl equilibrium with convex and continuous preferences, there are two key differences in our setting. In the classical Lindahl equilibrium, each voter has a fixed, constant income or tokens, that is normalized to 1, and there is a common allocation \( \y \) along with a set of individual prices such that: (i) the common allocation is the optimal allocation for each voter given their individual prices, and (ii) \( \y \) is also the revenue-maximizing allocation for the producer.

In our formulation, the first key departure from the classical setting is that each voter's income is modeled not as a fixed constant but as a random variable with a continuous distribution. This modification is essential because we work with ordinal preferences over discrete choices; introducing income uncertainty ensures that each voter's resulting random demand varies continuously with prices, without requiring an extension of preferences from discrete choices to lotteries. Second, we relax the requirement that each individual's demand must exactly match the common allocation. Instead, we adopt a condition analogous to market clearing, formalized as condition~\ref{leo-2} in Definition \ref{def:LEO}. 

Together, these two modifications are essential for applying Kakutani's fixed-point theorem and establishing the existence of a LEO. For completeness, the full proof is provided in Appendix~\ref{app:LEO}.


\begin{restatable}{theorem}{thmleo}(\cite{nguyen-song}) \label{thm:existence}
    Let $\randbud{}$ be a random income with support in the unit interval and its cumulative distribution function is continuous, then there exists a LEO.
\end{restatable}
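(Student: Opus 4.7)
The plan is to apply Kakutani's fixed-point theorem to a correspondence on the product of three compact, convex sets: the price profiles $\prices = (\prices_v)_{v \in \V}$ with each $\prices_v \in [0,1]^{|A|}$ (and $p_{v,\emptyset} = 0$ held fixed), the consumption profiles $\x = (\x_v)_{v \in \V}$ with each $\x_v$ in the simplex of distributions over $A \cup \{\emptyset\}$, and the common allocation $\y \in \{y \in \R_+^{|A|} : \mathbf{1}^T y = B\}$. Capping prices at $1$ loses no generality, since $\Pr[\randbud{} \ge 1] = 0$ under the continuous-cdf assumption means that any price above $1$ induces the same demand behaviour as price $1$.

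I would build the fixed-point correspondence from three pieces. First, the consumer's response $\prices_v \mapsto f_v(\prices_v)$ with $f_v(\prices_v)_a := \Pr[\mathcal{D}_v(\prices_v, \randbud{}) = a]$; writing the alternatives strictly preferred to $a$ under $\succ_v$ as $a_1, \ldots, a_{i-1}$, a direct calculation yields $f_v(\prices_v)_a = \max\{0,\, F(\min_{j < i} p_{v,a_j}) - F(p_{v,a})\}$, where $F$ is the cdf of $\randbud{}$. Continuity of $F$ makes each $f_v$ a continuous function of $\prices_v$. Second, the producer's response $\prices \mapsto G(\prices) := \mathcal{P}(B, \prices)$ is the set of maximizers of a linear objective over a compact polytope; it is nonempty, compact, convex-valued, and upper hemicontinuous by Berge's maximum theorem. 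Third, the Walrasian-style auctioneer response $(\x, \y) \mapsto H(\x, \y)$ with $H_v(\x, \y) := \arg\max_{\prices_v \in [0,1]^{|A|}} \sum_{a \in A} p_{v,a}(x_{v,a} - y_a)$ is again the argmax of a linear objective and enjoys the same properties.

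Kakutani applied to the product correspondence yields a fixed point $(\prices^*, \x^*, \y^*)$. Condition \ref{leo-1} of Definition~\ref{def:LEO} is immediate from the consumer map and \ref{leo-3} from the producer map. The delicate step is condition \ref{leo-2}: the price LP forces $p_{v,a}^* = 1$ whenever $x_{v,a}^* > y_a^*$, but then candidate $a$ costs exactly the a.s.\ upper bound of the income, so $a$ is unaffordable with probability one and $f_v(\prices_v^*)_a = x_{v,a}^* = 0$, contradicting $x_{v,a}^* > y_a^* \ge 0$. Hence $x_{v,a}^* \le y_a^*$ for every pair $(v,a)$. Symmetrically, $x_{v,a}^* < y_a^*$ drives $p_{v,a}^* = 0$ at the argmax, delivering the complementary-slackness half of \ref{leo-2}.

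The main obstacle I anticipate is the consumer-side continuity: at a fixed income the demand is a step function of prices with jumps at ties, and only the randomization over a continuously distributed income smooths these jumps into the continuous $f_v$ displayed above. Handling ties in prices among alternatives with a known preference ordering---so that the closed form for $f_v$ remains valid and the positive-part truncation absorbs the degenerate cases---is the one genuinely subtle verification. The Berge-type upper-hemicontinuity and convexity checks for the producer and auctioneer pieces are then routine given their linear structure.
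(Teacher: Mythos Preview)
Your approach is correct and structurally identical to the paper's: both apply Kakutani to a product correspondence with a continuous consumer-demand map, a Berge-type producer best-response, and a price-adjustment component, then read off conditions \ref{leo-1}--\ref{leo-3} from the fixed point. The one substantive difference is in the price component. You use a Walrasian-auctioneer argmax $H_v(\x,\y)=\arg\max_{\prices_v\in[0,1]^{|A|}}\sum_a p_{v,a}(x_{v,a}-y_a)$, which is set-valued but convex and upper hemicontinuous by linearity; the paper instead uses a single-valued t\^atonnement update $p'_{v,a}=\max\{\min\{1,\,p_{v,a}+(x'_{v,a}-y_a)\},0\}$, which is continuous. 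Both devices force $p_{v,a}^*=1$ when $x_{v,a}^*>y_a^*$ (yielding the same contradiction via $\Pr[\randbud{}\ge 1]=0$) and $p_{v,a}^*=0$ when $x_{v,a}^*<y_a^*$, so the verification of condition~\ref{leo-2} is identical in spirit. Your auctioneer formulation is arguably the more classical competitive-equilibrium device; the paper's clip function avoids set-valuedness in that coordinate at the cost of introducing an auxiliary dependence on the current price. Either works, and your identification of the demand-continuity step as the only genuinely delicate point matches the paper's emphasis.
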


\section{Five Candidates Suffice to Win a Voter Majority}
\label{sec:condorcet}

We start with some necessary notations. Given an election $\mathcal{E}=(V,A,\succ)$, for \(v \in V\),\( C \subseteq A \), and \( a \in A \setminus C \), we write \( a \succ_v C \) to indicate that \( v \) strictly prefers \( a \) to all alternatives in \( C \), i.e., \( a \succ_v c \) for all \( c \in C \). A Condorcet winning set is a set of candidates $C \subseteq A$ such that for every candidate $a \in A \setminus C$, the minority of the voters prefer $a$ to any candidates in $C$, i.e., $\lvert\{v \in V \mid a\succ_v C\}\rvert \le \lfloor \frac{n}{2} \rfloor$ for all $a \in A$.
A generalization of the Condorcet winning set is the notion of \emph{$\alpha$-undominated set}.

\begin{definition}
A subset $C \subseteq A$ is an \emph{$\alpha$-undominated set} if for every candidate $a \in A \setminus C$, 
$$\lvert\{v \in V : a \succ_v C \}\rvert \le  \lfloor \alpha n \rfloor .$$ 
\end{definition}

We note that a $\frac{1}{2}$-undominated set is equivalent to a Condorcet winning set.  
Our main result in this section is  the following theorem.



\thmundom*

Given $k \in \mathbb{N}$, let $\alpha(k)$ be the undominance ratio that we aim to minimize. From Theorem \ref{thm:undominate}, we have $\alpha(k):= \inf_{\beta \in [0,1]}\{\beta + (1-\beta)^k\}$. We compare our $\alpha(k)$ with the one from \cite{sixCandidates} Theorem 5 in Table \ref{table:und-ratio}. 

\begin{table}[!htb]
\begin{center}
    \begin{tabular}{|c|c|c|}
        \hline
        $k$ & $\alpha(k)$ from Theorem \ref{thm:undominate} & $\alpha(k)$ from \cite{sixCandidates} Theorem 5 \\
        \hline
        $2$ & 0.75 & 0.798134 \\
        \hline
        $3$ & 0.615100 & 0.673795 \\
        \hline
        $4$ & 0.527530 & 0.588554 \\
        \hline
        $5$ & 0.465008 & 0.525719 \\
        \hline
        $6$ & 0.417645 & 0.477066 \\
        \hline
        $7$ & 0.380269 & 0.438041 \\
        \hline
        $8$ & 0.349879 & 0.405896 \\
        \hline
    \end{tabular}
\end{center}
\caption{Bounds on the undominance ratio for size $k$.} \label{table:und-ratio}
\end{table}

Moreover, we obtain the following result as a direct consequence of Theorem~\ref{thm:undominate}.
\begin{corollary}
There exists a 5-candidate Condorcet winning set for any election.
\end{corollary}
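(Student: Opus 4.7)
The plan is to derive the corollary directly from Theorem~\ref{thm:undominate} by choosing $k = 5$ and an optimal $\beta$. Since a Condorcet winning set is exactly a $\tfrac{1}{2}$-undominated set, and since $\alpha$-undominance is monotone in $\alpha$ (immediate from the definition, because $\alpha \le \alpha'$ implies $\lfloor \alpha n\rfloor \le \lfloor \alpha' n\rfloor$), it suffices to produce some $\beta^\ast \in [0,1]$ for which $\beta^\ast + (1-\beta^\ast)^5 \le \tfrac{1}{2}$ and then invoke Theorem~\ref{thm:undominate} at $(k,\beta)=(5,\beta^\ast)$.

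The next step is a one-variable optimization of $f(\beta) := \beta + (1-\beta)^5$ on $[0,1]$. Differentiating gives $f'(\beta) = 1 - 5(1-\beta)^4$, so the unique interior critical point (a minimum, as $f$ is convex) is $\beta^\ast = 1 - 5^{-1/4}$. Substituting back,
\[
f(\beta^\ast) \;=\; 1 - 5^{-1/4} + 5^{-5/4} \;=\; 1 - \tfrac{4}{5}\,5^{-1/4} \;\approx\; 0.4650,
\]
which matches the $k=5$ entry in Table~\ref{table:und-ratio} and is strictly below $\tfrac{1}{2}$. Applying Theorem~\ref{thm:undominate} with these parameters yields a size-$5$ committee that is $0.4650$-undominated, which by monotonicity is $\tfrac{1}{2}$-undominated, i.e., a Condorcet winning set of size $5$.

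There is essentially no obstacle beyond the calculus check and the numerical comparison with $\tfrac{1}{2}$; all heavy lifting has already been done in Theorem~\ref{thm:undominate}. What is worth pointing out in the write-up is why the same argument stops at $k=5$: for $k=4$ the table entry $\alpha(4) \approx 0.5275 > \tfrac{1}{2}$ shows that minimizing $\beta + (1-\beta)^4$ does not break the majority threshold, so Theorem~\ref{thm:undominate} does not, by this direct route, yield a Condorcet winning set of size $4$. Thus $k=5$ is the smallest size for which the bound in Theorem~\ref{thm:undominate} dips below $\tfrac{1}{2}$, explaining why $5$ is the new upper bound obtained from this framework.
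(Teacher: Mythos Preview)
Your proof is correct and matches the paper's approach: the corollary is stated there as a direct consequence of Theorem~\ref{thm:undominate}, with the table entry $\alpha(5)\approx 0.465008<\tfrac12$ doing the work, and you have simply made the optimization explicit. The additional remarks on monotonicity and on why $k=4$ fails are accurate and helpful but not required.
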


To prove Theorem~\ref{thm:undominate}, we use our LEO with a total budget $B = 1$ and a specially designed income distribution, which we call the \emph{threshold distribution}. The threshold distribution $\threshold$ has two parameters, $\beta$ and $\epsilon$, and serves as a smoothed approximation of a Bernoulli distribution that assigns income $1$ with probability $\beta$ and $0$ with probability $1 - \beta$. The small parameter $\epsilon$ ensures that the distribution remains continuous.

\begin{definition}
    Given \(\beta, \varepsilon\in(0,1)\), the \emph{threshold distribution} \(\threshold\) satisfies the following properties.
    \begin{enumerate}
        \item The random variable $X \sim \threshold$ has a continuous CDF $F(x) = \Pr[X \le x]$.
        \item $F(1-\beta) = 1 - \varepsilon$, i.e., $\Pr[X \ge 1 - \varepsilon] = \beta$.
        \item The expected value of $X$ satisfies $\mathbb{E}[X] \le \beta$, that is, $\int_{0}^1F(x)dx\le \beta$.
    \end{enumerate}
\end{definition}
\begin{figure}[htbp] 
    \begin{center}
        \includegraphics[width=0.4\linewidth]{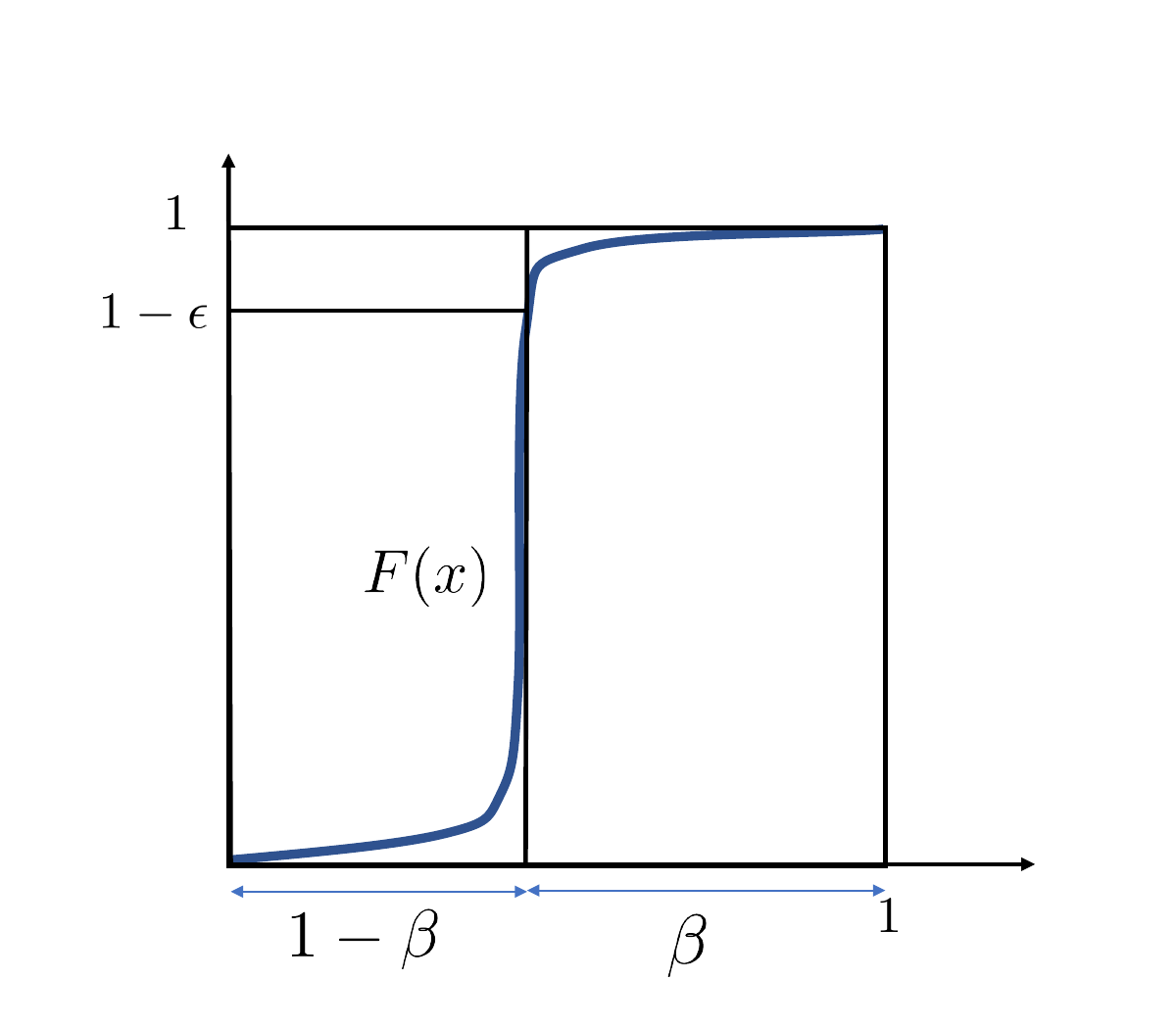}
    \captionof{figure}{Threshold Income distribution}
    \label{fig:001}
    \end{center}
\end{figure}
It is straightforward to construct \(\threshold\) for any \( \beta, \varepsilon \in (0,1) \). Consider the distribution \( \mathcal{I}^{\beta,0} \), which assigns probability \( \beta \) to 1 and \( 1 - \beta \) to 0. While this distribution satisfies the last two properties, it is not continuous. However, a slight perturbation by \( \varepsilon \) ensures continuity (see Figure~\ref{fig:001}).

We first show that  when $B=1$, the common allocation $\y$ of a LEO coincides will all individual random demand and forms a lottery over the alternatives.

\begin{claim} \label{cl:lottery}
    If  $(\x,\y,\prices)$ is a LEO with $B=1$, then  $\sum_{a \in A} y_a = 1$ and $x_{v,a} = y_a$ for all $a\in A, v \in V$.
\end{claim}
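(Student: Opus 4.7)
The plan is to handle the two conclusions separately. The first equality $\sum_{a\in A} y_a = 1$ is immediate: condition \ref{leo-3} of the LEO definition requires $\y \in \mathcal{P}(B,\prices)$, and the set $\mathcal{P}(B,\prices)$ is defined by the producer feasibility constraint $\mathbf{1}^T\y=B$, which equals $1$ here. So the work lies entirely in establishing $x_{v,a}=y_a$ for each $v$ and each $a\in A$.

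My strategy for the coordinate-wise equality is to reduce it to a single statement: $x_{v,\emptyset}=0$ for every voter $v$. Indeed, condition \ref{leo-1} says $x_{v,\cdot}$ is the distribution of $\mathcal{D}_v(\prices_v,\randbud{})$ on $A\cup\{\emptyset\}$, so $\sum_{a\in A} x_{v,a} = 1 - x_{v,\emptyset}$. If I can show $x_{v,\emptyset}=0$, then $\sum_{a\in A} x_{v,a}=1=\sum_{a\in A} y_a$, and this forces the termwise inequality $x_{v,a}\le y_a$ from condition \ref{leo-2} to hold with equality for every $a$.

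To show $x_{v,\emptyset}=0$, I plan to argue by contradiction using the complementary-slackness flavor of condition \ref{leo-2}. Suppose $x_{v,\emptyset}>0$ for some $v$. Then $\sum_{a\in A} x_{v,a}<1=\sum_{a\in A} y_a$, so at least one coordinate is slack: there exists $a^*\in A$ with $x_{v,a^*}<y_{a^*}$. Condition \ref{leo-2} then forces $p_{v,a^*}=0$. But this means $a^*$ is affordable at every realized income $b\ge 0$, so the set $\{a\in A\cup\{\emptyset\}: p_{v,a}\le b\}$ always contains an element of $A$. Since $\emptyset$ is the $\succ_v$-least preferred alternative by construction, the demand $\mathcal{D}_v(\prices_v,b)$ is never equal to $\emptyset$, which gives $x_{v,\emptyset}=0$, contradicting the assumption.

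The only subtle step is the middle one: connecting the market-clearing slackness with the structure of $\mathcal{D}_v$. The key observation is that a single free candidate suffices to knock the outside option out of the support of the random demand, because $\emptyset$ sits at the bottom of every voter's preference. Nothing in the argument uses continuity of $\randbud{}$ or any property specific to the threshold distribution $\threshold$, so I expect no real obstacle; the claim should follow directly from the three defining conditions of a LEO together with $B=1$.
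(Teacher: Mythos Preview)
Your proposal is correct and follows essentially the same argument as the paper: both use that any slack $x_{v,a^*}<y_{a^*}$ forces $p_{v,a^*}=0$, which makes $\emptyset$ impossible as a demand and hence contradicts $x_{v,\emptyset}>0$. The only difference is cosmetic ordering---you first reduce to showing $x_{v,\emptyset}=0$ and then find the slack coordinate, whereas the paper starts by assuming slack and then deduces $x_{v,\emptyset}>0$; the logical content is identical.
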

\begin{proof}
First, by definition, \(\sum_{a \in A} y_a = B = 1\). Second, suppose \(x_{v,a} < y_a\). Then, by condition~\ref{leo-2} of the LEO, it must be that \(p_{v,a} = 0\). Moreover, \(\sum_{a \in A} x_{v,a} < 1\), which implies that voter \(v\) consumes \(\emptyset\) with positive probability. This contradicts the definition of random demand for \(v\), because if \(p_{v,a} = 0\), then \(v\) should not have chosen \(\emptyset\), since \(a\) was free and strictly preferred.
\end{proof}

To prove Theorem~\ref{thm:undominate}, we consider a random committee formed by independently sampling $k$ alternatives from the lottery induced by the LEO mechanism with total budget $B = 1$ and random income drawn from the threshold distribution $\threshold$. We then show, via an expectation argument, that there exists at least one realization of this random process that satisfies the conditions of Theorem~\ref{thm:undominate}.

The argument bounding the number of voters who may prefer an outside candidate proceeds in two parts. First, we define a notion of \emph{coverage}: intuitively a voter is said to be covered by the committee if at least one of its members lies within the top $\beta$ percentile of the voter’s preferences under the LEO lottery. The probability that a voter is covered can be bounded based on $\beta$, and by applying a standard averaging argument, we obtain a deterministic committee that fails to cover only a small fraction of voters. This yields the first bound.

Second, for the voters who are covered, we show that only a limited number can strictly prefer some outside candidate over the entire committee. If too many did, the total price they assign to that candidate would exceed the total available income. However, a key property of the LEO is that the total price any candidate can receive across all voters is bounded above by the total income, which is at most $\beta n$. This provides the second bound. The sum of the two bounds then gives the desired result.

We introduce the notion of \emph{boundary candidates}.

\begin{definition} \label{def:boundary}
    Given a LEO $(\x,\y,\prices)$, for each voter $v \in \V$, the \emph{boundary candidate} $a_v$ is the most preferred candidate with price at most $1 - \varepsilon$, that is, $a_v := \max_{\succ_v} \{a \in A\cup \{\emptyset\} \mid p_{a,v} \le 1 - \varepsilon\}$.
\end{definition}

The following claim is a direct  consequence  of this definition.  It  states that the price of a candidate that is more preferred than the boundary candidate is high.

\begin{claim} \label{cl:high-price}
    Given a LEO $(\x,\y,\prices)$, for any \(v \in V\) and $a \in A$ such that $a \succ_v a_v$, \(p_{v,a} > 1 - \varepsilon\).
\end{claim}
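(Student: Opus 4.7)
The plan is to argue by direct contradiction, unpacking Definition \ref{def:boundary}. That definition specifies $a_v$ as the $\succ_v$-maximal element of the set $S_v := \{a' \in A \cup \{\emptyset\} : p_{v,a'} \le 1 - \varepsilon\}$. Observe first that $S_v$ is non-empty and $a_v$ is well-defined: since $p_{v,\emptyset} = 0 \le 1 - \varepsilon$, we have $\emptyset \in S_v$. Moreover, because $\emptyset$ is ranked strictly below every alternative in $A$, the assumption $a \succ_v a_v$ with $a \in A$ is a genuine pairwise comparison rather than a vacuous one (in particular, $a_v$ could equal $\emptyset$, in which case the hypothesis is automatically satisfied for every $a \in A$ with $p_{v,a} > 1 - \varepsilon$).

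The steps I would carry out are the following. First, fix $v \in V$ and $a \in A$ satisfying $a \succ_v a_v$. Second, suppose for contradiction that $p_{v,a} \le 1 - \varepsilon$; then by the very definition of $S_v$ we would have $a \in S_v$. Third, combine this membership with the hypothesis $a \succ_v a_v$: we would obtain an element of $S_v$ strictly more $\succ_v$-preferred than the $\succ_v$-maximum $a_v$ of $S_v$, which is impossible. This contradiction forces $p_{v,a} > 1 - \varepsilon$, establishing the claim.

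I do not anticipate any real obstacle here; the statement is essentially a contrapositive rephrasing of the defining property of $a_v$. The only minor point worth flagging in the write-up is the well-definedness of $a_v$ as a $\succ_v$-maximum, which relies on $\emptyset \in S_v$ (ensuring non-emptiness) together with the strictness of $\succ_v$ on $A \cup \{\emptyset\}$ (ensuring uniqueness of the maximum). Once these are noted, the argument is essentially a single line.
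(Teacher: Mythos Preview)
Your proposal is correct and follows exactly the same contradiction argument as the paper: assume $p_{v,a} \le 1 - \varepsilon$, observe that this places $a$ in the set over which $a_v$ is the $\succ_v$-maximum, and derive a contradiction with $a \succ_v a_v$. The only difference is that you spell out the well-definedness of $a_v$ via $\emptyset \in S_v$, which the paper leaves implicit.
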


\begin{proof}
    For the sake of contradiction, suppose $p_{a,v} \le 1 - \varepsilon$. Then, by Definition \ref{def:boundary}, we have that $a \preceq_v a_v$, contradicting $a \succ_v a_v$.
\end{proof}

Now we are ready to define \emph{covered} and \emph{uncovered} voters.

\begin{definition}Given a LEO $(\x,\y,\prices)$ and a set of candidates $C \subseteq \A$, a voter $v \in V$ is \emph{covered} by $C$ if there is a candidate $c \in C$ such that $c \succeq_v a_v$, and \emph{uncovered} by $C$ otherwise. 
\end{definition}

Equipped with these notations, we are ready to upper-bound the number of voters preferring an outside candidate to a committee.


\begin{lemma}\label{lemma:02} Given a LEO $(\x,\y,\prices)$  with  $B=1$, income distribution $\threshold$, 
and $C \subseteq A$, for any $a \in A \setminus C$, the number of voters $v \in V$ that strictly prefer $a$ to all candidates in $C$, i.e., $a \succ_v C$, is at most
    $$
    \frac{\beta n}{1-\varepsilon} + |\{v\in \V: v \text{ is uncovered by  } C\}|.
    $$    
\end{lemma}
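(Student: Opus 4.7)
\smallskip

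The plan is to partition the voters who strictly prefer $a$ to every candidate in $C$ into those covered by $C$ and those uncovered by $C$. The uncovered ones contribute at most $|\{v : v \text{ is uncovered by } C\}|$ trivially, so the main work is to bound the count of covered voters who strictly prefer $a$ to $C$ by $\beta n/(1-\varepsilon)$.

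\smallskip

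For the covered voters, I would use Claim~\ref{cl:high-price} to turn coverage into a lower bound on prices. If $v$ is covered by $C$, there is some $c \in C$ with $c \succeq_v a_v$; combining with $a \succ_v C$ (so $a \succ_v c$) gives $a \succ_v a_v$, and Claim~\ref{cl:high-price} yields $p_{v,a} > 1 - \varepsilon$. Summing only over these covered voters $v$ with $a \succ_v C$, if there are $N$ of them, we get
$$\sum_{v \in V} p_{v,a} \; \ge \; \sum_{\substack{v \text{ covered} \\ a \succ_v C}} p_{v,a} \; > \; (1-\varepsilon)\, N.$$

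\smallskip

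The next step, which is the crux of the argument, is to bound $\sum_{v\in V} p_{v,a} \le \beta n$ for \emph{every} candidate $a \in A$. I would derive this from the two LEO conditions combined with the income distribution $\threshold$. By condition~\ref{leo-2} of Definition~\ref{def:LEO}, whenever $x_{v,a} < y_a$ we have $p_{v,a} = 0$, so $p_{v,a}\, y_a = p_{v,a}\, x_{v,a}$ for all $v$ and $a$. Hence the producer's revenue equals
$$\sum_{a \in A}\Bigl(\sum_{v \in V} p_{v,a}\Bigr) y_a \; = \; \sum_{v \in V} \sum_{a \in A} p_{v,a}\, x_{v,a}.$$
For each voter $v$, the inner sum is the expected cost paid by $v$'s random demand, which cannot exceed her expected income $\mathbb{E}[\threshold] \le \beta$ (because at any realized income $b$, the demand is by definition affordable, so the price paid is at most $b$). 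Therefore the total revenue is at most $\beta n$. Condition~\ref{leo-3} of Definition~\ref{def:LEO} with $B=1$ and $\sum_{a}y_a = 1$ means the producer would otherwise concentrate all budget on a single revenue-maximizing candidate, so $\max_{a \in A} \sum_{v} p_{v,a}$ equals the revenue, and in particular $\sum_{v} p_{v,a} \le \beta n$ for our chosen $a$.

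\smallskip

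Combining the two inequalities yields $(1-\varepsilon) N < \beta n$, so $N \le \beta n / (1-\varepsilon)$, and adding the uncovered contribution gives the stated bound. The only subtle point in the argument is the identity $p_{v,a}\,y_a = p_{v,a}\,x_{v,a}$ used to equate revenue with expected spending; I would be careful to invoke condition~\ref{leo-2} explicitly, since it is what turns market-clearing slack into zero prices and makes the Walrasian-style computation go through.
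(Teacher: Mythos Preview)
Your proposal is correct and follows essentially the same approach as the paper: partition into covered and uncovered voters, use Claim~\ref{cl:high-price} to get $p_{v,a} > 1-\varepsilon$ for covered voters with $a \succ_v C$, bound the producer's revenue by $\beta n$ via expected income, and use producer optimality at $B=1$ to conclude $\sum_v p_{v,a} \le \beta n$ for every $a$. The one small difference is that the paper invokes Claim~\ref{cl:lottery} (which gives $x_{v,a} = y_a$ for all $v,a$) to equate revenue with total expected spending, whereas you derive $p_{v,a}\,y_a = p_{v,a}\,x_{v,a}$ directly from the complementary slackness in condition~\ref{leo-2}; your route is slightly more economical since it avoids the stronger equality, but the structure of the argument is identical.
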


\begin{proof}
Under a LEO $(\x,\y,\prices)$, the expected revenue obtained by the producer is equal to the expected total price of all voters' random demand. This is because by Claim \ref{cl:lottery}, we have
\[\sum_{a\in A} \left(\sum_{v\in V}p_{v,a}\right) y_a = \sum_{a\in A} \sum_{v\in V}p_{v,a} x_{v,a} =
 \sum_{v\in V} \sum_{a\in A} p_{v,a} x_{v,a} .\]

Note that for each voter $v$, the expression $\sum_{a \in A} p_{v,a} x_{v,a}$ represents the expected cost of their random demand under the LEO lottery. This expected cost must not exceed their expected income $\mathbb{E}_{X \sim \threshold}[X]$, which is at most $\beta$.
Thus
$$
\sum_{a\in A} \left(\sum_{v\in V}p_{v,a}\right) y_a \le n\beta.
$$

First, we show that the total price assigned to each candidate \(a \in A\), i.e., \(\sum_{v \in V} p_{v,a}\), is at most \(n\beta\). Suppose, for the sake of contradiction, that there exists some \(a \in A\) such that \(\sum_{v \in V} p_{v,a} > n\beta\). Then the producer could choose the deterministic allocation with \(y_a = 1\) and \(y_{a'} = 0\) for all \(a' \ne a\). The expected revenue of this allocation would exceed \(n\beta\), which contradicts the fact that \(\y\) is the revenue-maximizing lottery. Therefore, for all \(a \in A\), we must have \(\sum_{v \in V} p_{v,a} \le n\beta\).

Notice that if \(v\) is covered by \(C\) and \(a \succ_v C\), then $a \succ_v a_v$. By Claim \ref{cl:high-price}, $p_{v,a} > 1- \varepsilon$. 
By the argument above we have that the total price for each candidate $a \in A$,  is at most \(n\beta\). 
Therefore, among the covered voters, at most \(\frac{\beta n}{1-\varepsilon}\) prefer \(a\) over \(C\). In the worst case, all the uncovered voters prefer \(a\) over \(C\). This implies Lemma~\ref{lemma:02}.
\end{proof}

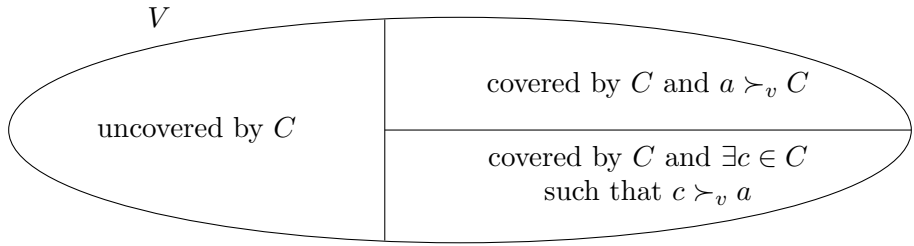
\begin{figure}[h]
    \centering
\begin{tikzpicture}
        \draw (0:0) ellipse (6 and 1.5);
        \node at (-4,1.5) {$V$};
        \draw (-1,1.47) -- (-1,-1.47);
        \draw (6,0) -- (-1,0);
        \node at (-3.5,0) {uncovered by $C$};
        \node[align=center] at (2.5,0.6) {covered by $C$ and $a \succ_v C$};
        \node[align=center] at (2.5,-0.6) {covered by $C$ and $\exists c \in C$ \\ such that $c \succ_v a$};
\end{tikzpicture}
    \caption{Voters $v \in V$ in 3 categories based on a fixed $a \in A \setminus C$: (1) uncovered by $C$, (2) covered by $C$ and $a \succ_v C$, or (3) covered by $C$ and $\exists c \in C$ such that $c \succ_v a$. Note that there are at most $\frac{\beta n}{1-\varepsilon}$ voters in category (2) by Lemma \ref{lemma:02}.}
    \label{fig:enter-label}
\end{figure}



Now we are ready to prove Theorem \ref{thm:undominate}.

\begin{proof}[Proof of Theorem \ref{thm:undominate}.]
Let $(\x,\y,\prices)$ be a LEO with $B=1$ and $\randbud{}=\threshold$. From Claim \ref{cl:lottery},
we regard $\y$ as a lottery for selecting a single candidate. We sample $k$ times independently according to $y$. Let the chosen candidate in the $i$-th lottery be $a_i$. Let $\Tilde{C} = \{a_i \mid i \in [k]\}$ be the random candidate set.

We now bound the probability that $\Tilde{C}$ does not cover $v$, for every voter $v\in \V$.
Recall that $a_v = \max_{\succ_v} \{a \in A \cup \{\emptyset\} \mid p_{v,a} \le 1 - \varepsilon\}$. 

 Observe that when a voter's realized income is at least \(1 - \varepsilon\), voter \(v\) can afford alternative \(a_v\) or some more preferred option. Since this occurs with probability at least \(\beta\), it follows that the probability that a randomly selected alternative from the random demand of $v$, which is  the lottery  $\x_v$, is strictly worse than \(a_v\) is at most \(1 - \beta\).

By Claim~\ref{cl:lottery}, the common allocation $\y$ coincides with each voter's individual random demand $\x_v$. Hence, the probability that a random candidate drawn from $\y$ is worse than $a_v$ according to $\succ_v$ is at most $1-\beta$.
Since $\Tilde{C}$ picks $k$ candidates independently, the probability that $\Tilde{C}$ does not cover $v$ is at most $(1-\beta)^k$.
 
Thus, in expectation over $\Tilde{C}$, there are at most $(1-\beta)^k n$ uncovered voters. Hence, there must be a subset of candidates $C \subseteq A$ of size at most $k$, such that at most $(1-\beta)^k n$ voters are uncovered. By Lemma~\ref{lemma:02}, for every $a\in A \setminus C$, the number of voters strictly preferring $a$ to $C$ is at most $(\frac{\beta}{1-\varepsilon}+ (1-\beta)^k)n$. Because $\frac{\beta}{1-\varepsilon} \to \beta$ as $\varepsilon \to 0$, there is a small $\varepsilon > 0$, so that $\lfloor (\frac{\beta}{1-\varepsilon}+ (1-\beta)^k)\cdot n \rfloor= \lfloor (\beta+ (1-\beta)^k)\cdot n \rfloor$. With this, the proof is completed.
\end{proof}

\section{Generalizing Condorcet Winning Sets}\label{sec:generalized_cond}

This section studies a generalization of the Condorcet winning set. Specifically, we define a new comparison between a set of candidates and a single candidate outside the set, based not on the best candidate in the set but on the $t$-th best. We formalize this idea in the following definition.

\begin{definition}
Given $1\le t \le n$, a voter $v \in V$, a subset of candidates $C \subseteq A$ with $|C| \ge t$, and a candidate $a \in A \setminus C$,  
we say that voter $v$ \emph{$t$-prefers} $C$ to $a$, and write $C \succ_v^t a$,  
if there are at least $t$ candidates in $C$ that $v$ strictly prefers to $a$; that is,
\[
C \succ_v^t a \quad \text{if and only if} \quad \left| \{ c \in C : c \succ_v a \} \right| \ge t.
\]
Otherwise, we say that $v$ $t$-prefers $a$ to $C$, and write $a \succ_v^t C$.
\end{definition}
 
\begin{definition}
        A subset $C \subseteq A$ with $|C|\ge t$ is a $(t,\alpha)$-undominated set if for every candidate $a \in A \setminus C$, 
$$\lvert\{v \in V :  a \succ_v^t C \}\rvert \le  \lfloor \alpha n \rfloor .$$ 
\end{definition}

A special case is when $t = 1$: in this case, $C \succ^1_v a$ means that there is at least one candidate in $C$ that is preferred to $a$, while $a \succ^1_v C$ means that $a$ is preferred to all candidates in $C$.  
Thus, when $t = 1$, the definition of a $(t,\alpha)$-undominated set coincides with the notion of $\alpha$-undominance introduced in Section~\ref{sec:condorcet}.

Our main result of this section is the following.

\thmgeneralupperbound*

Moreover, we provide a lower bound for the size of $(t,\alpha)$-undominated sets, derived by adapting the example provided in \cite{sixCandidates}. The construction is presented in Appendix \ref{app:pf-thm:k-alpha-lower}.

\thmgenerallowerbound*

Given Theorem~\ref{thm:k-alpha-lower}, Theorem~\ref{thm:generalized-cond} is asymptotically tight for every $\alpha$ as $t$ becomes large; that is, the size of a $(t,\alpha)$-undominated set is on the order of $t/\alpha$. For a small value of $t$, the expression for $\delta(t)$, which is derived from the Chernoff bound, is somewhat complex. The exact formulation is provided in the proof at the end of this section. The function $\delta(t)$ steadily decreases as $t$ grows, a trend which can be seen in the following table.

\begin{table}[!htb]
\begin{center}
    \begin{tabular}{|c|c|c|c|c|c|c|c|}
        \hline
        $t$  & 2 & 3 & 4 & 5 & 6 & 7 & 8 \\
        \hline
        $\delta(t)$  & 4.75 & 4.11 & 3.69 & 3.41 & 3.19 & 3.02 & 2.89 \\
        \hline
    \end{tabular}
\end{center}
\caption{Value of $\delta(t)$ in Theorem \ref{thm:generalized-cond}.}
\label{table:transposed-generalized-size}
\end{table}


Before presenting the proof, we first highlight the challenges in generalizing the result from the previous section.
A natural but flawed approach is to iteratively apply the $t = 1$ result: construct an $\alpha$-undominated subcommittee for $t = 1$, remove its members, and repeat this process $t$ times. The union of these $t$ subcommittees has size $O(t)$, and it may seem that for any candidate outside this union, there are $t$ candidates—one from each subcommittee—who are preferred to it. However, this reasoning breaks down. While each subcommittee may contain a candidate preferred to the outside candidate by many voters, the sets of such voters can vary across subcommittees. Their intersection may be small or even empty. Consequently, we cannot guarantee that a sufficiently large number of voters prefer at least $t$ candidates in the union over the outside candidate.\footnote{Moreover, even if this argument were valid, it would not yield a tight bound. As we show, the bound in our theorem becomes asymptotically tight as $t$ grows large: the required committee size approaches the optimal scaling of $t/\alpha$.}

To prove Theorem~\ref{thm:generalized-cond}, we adopt a strategy similar to that used in Section~\ref{sec:condorcet}. We begin with a LEO-style fractional solution and iteratively construct an integral solution from it. However, a direct application of the LEO approach from the previous section does not suffice in our setting. We will need to make three 
key modifications.

First,  in order to ensure a bound on the number of voters who have fewer than \( t \) good candidates in the selected set, we must modify the LEO so that voters “consume” multiple candidates in the equilibrium.  In standard LEO, each voter consumes at most one unit of candidates in total. To address this, we introduce a scaled version of LEO, which we call \(s\)-SLEO. This variant replaces the condition \(\x_v \le \y\) with \(s \x_v \le \y\), for some scaling factor \(s = O(t)\). This modification ensures that each voter effectively allocates their spending across their top \(t\) candidates.

Second, the fractional solution obtained is no longer a lottery over individual candidates, since we are not selecting a single candidate but a set of candidates. Therefore, we need to convert the fractional solution into a lottery over sets of candidates. We achieve this using dependent rounding with negative correlation. 

Third, unlike in the previous section where sampling was independent, we now use an adaptive, iterative procedure to optimize the constant factor in the approximation.



\subsection*{Scaled LEO and Dependent Rounding}
We modify LEO to a version called scaled LEO (SLEO) as follows. First, we  modify the optimal strategy for the producer: 
$$
\mathcal{P}^S(B,\prices):= \arg\max_{\z \in \mathbb{R}_+^{|A|}} \sum_{a\in A}(\sum_{v\in \V} p_{v,a}) z_a  \text{ subject to } \mathbf{1}^T \z = B \text{ and } \z \leq \mathbf{1}.
$$


Note that in this setting, unlike LEO, we explicitly enforce that the producer allocates \emph{at most one unit} of each candidate. We introduce the following $s$-SLEO notation. 




\begin{definition} \label{def:sSLEO}
Given a random income $\randbud{}$ supported on $[0,1]$, a total budget $B>0$, and $s > 0$, an $\scl$-SLEO $(\x, \y, \prices)$ consists of personalized prices $\prices_{v} \in [0,1]^{|A \cup \{\emptyset\}|}$ with  $p_{v,\emptyset} = 0$ and individual consumptions $\x_v \in [0,1]^{|A \cup \{\emptyset\}|}$ for each voter $v \in V$, and a common allocation $\y \in [0,1]^{|A|}$ such that
\begin{enumerate}[label=(\arabic*)]
    \item \label{ssleo-1} $x_{v,a}=  \Pr[\mathcal{D}_{v}(\prices_v, \randbud{}) =a]$ for each $a\in A \cup \{\emptyset\}$, $v\in V$;
    \item \label{ssleo-2} For each $a\in A$, $\scl\cdot x_{v,a} \leq  y_a $, and if the strict inequality holds then  $p_{v,a} = 0$;
    \item \label{ssleo-3} $\y\in   \mathcal{P}^S(B,\prices)$.
\end{enumerate}
\end{definition}

We put the proof of the existence of $\scl$-SLEO in Appendix \ref{app:pf-thm:sSLEO}, which is a slight adjustment of the proof of Theorem \ref{thm:existence} for the existence of LEO.

\begin{restatable}{theorem}{thmssleo} \label{thm:sSLEO}
Let $B \le |A|$, $\scl > 0$, and $\randbud{}$ be a random income with support in the unit interval and its cumulative distribution function is continuous, then there exists an $s$-SLEO.
\end{restatable}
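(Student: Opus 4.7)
The plan is to mirror the Kakutani fixed-point argument used to establish Theorem~\ref{thm:existence}, adjusting the state space and the price-update correspondence to accommodate the two changes in the definition of $s$-SLEO relative to LEO, namely (i) the per-candidate capacity $\y\le\mathbf{1}$ baked into $\mathcal{P}^S(B,\prices)$ and (ii) the scale factor $s$ in condition~\ref{ssleo-2}. The assumption $B\le|A|$ is used in exactly one place: it guarantees that the producer's feasible polytope $\{\z\in[0,1]^{|A|}:\mathbf{1}^T\z=B\}$ is non-empty and compact, so $\mathcal{P}^S(B,\prices)$ is a non-empty, convex, upper hemi-continuous correspondence of $\prices$.

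First I would fix the compact convex domain consisting of personalized prices $\prices_v\in[0,1]^{|A\cup\{\emptyset\}|}$ with $p_{v,\emptyset}\equiv 0$, individual lotteries $\x_v\in\Delta(A\cup\{\emptyset\})$, and allocations $\y$ in the producer polytope above. I would then assemble a joint correspondence from three pieces. For each voter, $\prices_v \mapsto \x_v$ where $x_{v,a}=\Pr[\mathcal{D}_v(\prices_v,\randbud{})=a]$; since $\randbud{}$ has a continuous CDF, this map is single-valued and continuous in $\prices_v$. For the producer, $\prices\mapsto \mathcal{P}^S(B,\prices)$, which is convex-valued and upper hemi-continuous by linear programming. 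For the price update, mimicking the Walrasian-auctioneer step in the LEO proof, each $\prices_v$ is chosen to lie in
\[
\arg\max_{\prices_v\in[0,1]^{|A\cup\{\emptyset\}|},\,p_{v,\emptyset}=0}\ \sum_{a\in A} p_{v,a}\bigl(s\,x_{v,a}-y_a\bigr),
\]
which pushes prices up on candidates with $s x_{v,a}>y_a$ and down on those with $s x_{v,a}<y_a$.

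Kakutani's theorem yields a fixed point, and I would verify the three $s$-SLEO properties at that point. Conditions~\ref{ssleo-1} and~\ref{ssleo-3} hold by construction. For Condition~\ref{ssleo-2} I would argue in two steps. First, $s x_{v,a}\le y_a$ for every $(v,a)$: if instead $s x_{v,a}>y_a$, the price update would force $p_{v,a}=1$, rendering $a$ unaffordable whenever the realized income of $v$ is strictly less than $1$; since the CDF of $\randbud{}$ is continuous we have $\Pr[\randbud{}=1]=0$, so $x_{v,a}=0$, contradicting $s x_{v,a}>y_a\ge 0$. Second, the same price-update step forces $p_{v,a}=0$ whenever $s x_{v,a}<y_a$, which is precisely the complementary slackness required in~\ref{ssleo-2}.

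The main obstacle will be handling the scale factor $s$ in tandem with the producer cap $\y\le\mathbf{1}$: in the LEO proof the producer could freely concentrate $B$ on a single candidate, and the bound on voters' income directly controlled prices, whereas here the cap creates extra slack that has to be absorbed by the price-update correspondence. Once one checks that the auctioneer step above still yields an upper hemi-continuous, non-empty, convex-valued correspondence and that continuity of the CDF of $\randbud{}$ is preserved under the rescaling (both of which require only minor bookkeeping relative to Theorem~\ref{thm:existence}), Kakutani goes through and the verification above completes the proof.
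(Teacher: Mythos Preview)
Your proposal is correct and follows essentially the paper's argument: Kakutani on $(\x,\y,\prices)$ with $\y$ restricted to the capacitated polytope $\{\z\in[0,1]^{|A|}:\mathbf{1}^T\z=B\}$ (non-empty since $B\le|A|$), then the same ``$p_{v,a}=1\Rightarrow x_{v,a}=0$'' contradiction to verify condition~\ref{ssleo-2}. The one minor difference is that the paper's price-update component is the continuous clipped t\^atonnement map $p'_{v,a}=\max\{0,\min\{1,\,p_{v,a}+(s\,x'_{v,a}-y_a)\}\}$ rather than your auctioneer $\arg\max$; both encode the same complementarity at a fixed point, so this is a stylistic rather than substantive variation.
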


It is important to note that an $\scl$-SLEO yields a fractional solution for each candidate $a\in A$ separately. We will convert it into a lottery over committees of size either $\lceil B \rceil$ or $\lfloor B\rfloor$ using the following dependent rounding process described in \cite{dependent06}.


\begin{proposition}\label{proposition:dep} \cite{dependent06}
Given $\y \in [0,1]^m$ s.t. $\mathbf{1}^T \y = B$, there exists a distribution $\widetilde{Y}$ for a binary vector $\tilde{\y} \in \{0,1\}^m$ satisfying the following properties: 
\begin{enumerate}[label=(\arabic*)]
    \item \label{dep-1} Preservation of marginals: $\underset{\tilde{\y} \sim \widetilde{Y}}{\mathbb{E}}[\Tilde{y}_k] = y_k$ for all $k\in [m]$.
    \item \label{dep-2} Preservation of weights: $\mathbf{1}^T \Tilde{\y} \in \{ \lceil B \rceil, \lfloor B\rfloor\}$.
    \item \label{dep-3} Negative correlation between entries: for any $W \subseteq [m]$ , 
$\underset{\tilde{\y} \sim \widetilde{Y}}{\Pr}\big[ \bigwedge_{k\in W}\Tilde{y}_k = 0 \big] \leq \underset{k\in W}{\prod}\big(1-y_k\big)$. 
\end{enumerate}
\end{proposition}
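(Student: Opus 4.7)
The plan is to construct the distribution $\widetilde{Y}$ via an iterative pipage-style exchange procedure. Start with $\y^{(0)} = \y$. At iteration $r$, pick any two coordinates $i \neq j$ with $y^{(r)}_i, y^{(r)}_j \in (0,1)$ both fractional, set $\alpha := \min\{1 - y^{(r)}_i,\ y^{(r)}_j\}$ and $\beta := \min\{y^{(r)}_i,\ 1 - y^{(r)}_j\}$, and perform the randomized update
\[
(y^{(r+1)}_i, y^{(r+1)}_j) = \begin{cases} (y^{(r)}_i + \alpha,\ y^{(r)}_j - \alpha) & \text{with prob. } \beta/(\alpha+\beta), \\ (y^{(r)}_i - \beta,\ y^{(r)}_j + \beta) & \text{with prob. } \alpha/(\alpha+\beta), \end{cases}
\]
leaving all other coordinates unchanged. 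Each such update drives at least one of the two chosen coordinates to $\{0,1\}$, so after at most $m$ iterations at most one fractional coordinate remains. Any residual fractional coordinate is then rounded independently to $1$ with probability equal to its value, producing the final $\tilde{\y} \in \{0,1\}^m$.

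For property \ref{dep-1}, a direct calculation gives $\mathbb{E}[y^{(r+1)}_i \mid \y^{(r)}] = \tfrac{\beta}{\alpha+\beta}(y^{(r)}_i+\alpha) + \tfrac{\alpha}{\alpha+\beta}(y^{(r)}_i-\beta) = y^{(r)}_i$, and symmetrically for $j$; all other coordinates are unchanged, making each $\{y^{(r)}_k\}_r$ a martingale. The final residual rounding also preserves the marginal, so $\mathbb{E}[\tilde{y}_k] = y_k$. For property \ref{dep-2}, every exchange preserves $y^{(r)}_i + y^{(r)}_j$, hence $\mathbf{1}^T \y^{(r)} = B$ throughout. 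If $B$ is an integer the procedure ends with all coordinates in $\{0,1\}$ summing to $B$; otherwise a single fractional coordinate of value $B - \lfloor B\rfloor$ is left, and its independent rounding yields a total of $\lceil B \rceil$ or $\lfloor B \rfloor$.

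The main step is property \ref{dep-3}, which I will establish by proving inductively the stronger claim
\[
\mathbb{E}\Big[\prod_{k \in W}\bigl(1 - y^{(r+1)}_k\bigr) \,\Big|\, \y^{(r)}\Big] \le \prod_{k \in W}\bigl(1 - y^{(r)}_k\bigr) \qquad \text{for every } W \subseteq [m].
\]
If neither of the two updated coordinates lies in $W$, the two sides coincide; if exactly one does, marginal preservation gives equality. The interesting case $i,j \in W$ reduces to showing
\[
\mathbb{E}\bigl[(1 - y^{(r+1)}_i)(1 - y^{(r+1)}_j)\,\big|\, \y^{(r)}\bigr] = (1 - y^{(r)}_i)(1 - y^{(r)}_j) - \alpha\beta,
\]
which follows from a direct expansion: the linear cross-terms from the two branches are negatives of each other and cancel, and the quadratic contributions combine to $-\alpha\beta$. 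Factoring out $\prod_{k \in W \setminus \{i,j\}}(1 - y^{(r)}_k)$ completes the inductive step. Chaining over all iterations, and noting that the final residual rounding preserves the product in expectation by independence, yields $\mathbb{E}\bigl[\prod_{k\in W}(1-\tilde{y}_k)\bigr] \le \prod_{k\in W}(1 - y_k)$; since $\tilde{y}_k \in \{0,1\}$ this expectation equals $\Pr[\bigwedge_{k\in W}\tilde{y}_k = 0]$, giving (3). The only delicate point is the sign in the pair computation, which goes in the correct direction precisely because $\alpha,\beta \ge 0$; everything else is routine bookkeeping.
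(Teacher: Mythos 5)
Your proof is correct. The paper itself offers no argument for this proposition---it is imported wholesale from \cite{dependent06}---and what you have written is essentially a self-contained reconstruction of the standard dependent-rounding (pipage-style) proof from that reference: the pairwise randomized exchange with probabilities $\beta/(\alpha+\beta)$ and $\alpha/(\alpha+\beta)$ gives marginal preservation, the invariance of $y_i+y_j$ gives the sum condition (with the single leftover fractional coordinate necessarily equal to $B-\lfloor B\rfloor$), and the pair identity $\mathbb{E}\bigl[(1-y_i')(1-y_j')\bigr]=(1-y_i)(1-y_j)-\alpha\beta$ together with the tower property yields the negative-correlation bound, since $\prod_{k\in W}(1-\tilde{y}_k)$ is exactly the indicator of $\bigwedge_{k\in W}\tilde{y}_k=0$ once the entries are binary. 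The only points worth stating explicitly are that the adaptive choice of the pair $(i,j)$ is harmless because your conditional inequality holds pointwise in $\y^{(r)}$, and that all iterates stay in $[0,1]^m$ so the factored-out product $\prod_{k\in W\setminus\{i,j\}}(1-y^{(r)}_k)$ is nonnegative; both are implicit in your write-up and easy to add.
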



The foundation of our construction is the combination of the \(s\)-SLEO and a dependent rounding procedure applied to the resulting fractional allocation.

\begin{definition}[Randomized \((B,s,\varepsilon)\)-Lindahl Committee] \label{def:rand-com}
Let \( s, B, \varepsilon >0 \). A \emph{randomized \((B,s,\varepsilon)\)-Lindahl committee} is a random subset \( C \subseteq A \) of size at most $\lceil B \rceil$  constructed as follows:
\begin{enumerate}
    \item Compute an \( s \)-SLEO with budget \( B \), where each voter's income distribution is the uniform distribution over the interval \([1 - \varepsilon, 1]\), denoted as $\mathbf{U}[1 - \varepsilon, 1]$.  Let \( \y \in [0,1]^{|A|} \) be the resulting common fractional allocation over the set of alternatives \( A \). 
    
    \item Apply a dependent rounding procedure to \( \y \) to obtain a committee $C$ of size at most $\lceil B \rceil$ by Proposition \ref{proposition:dep}. We denote this distribution on $C$ by \( \widetilde{C} \).
\end{enumerate}
\end{definition}

A key property of the randomized Lindahl committee is that, when we choose the budget \( B \) and the scaling factor \(\scl\) to be of order \(O(t)\), then with high probability, each voter will have a set of \(t\) \emph{good} candidates in the selected committee.  What does it mean for a set of candidates to be \emph{good} for a voter \(v\) under a realized committee $C$? Informally, it means that if there exists a candidate \(a \notin {C}\) whom \(v\) prefers over these \(t\) candidates in \({C}\), then the individual price of \(a\) in the SLEO satisfies \(p_{v,a} \geq 1 - \varepsilon\).  Why is this useful? We will show that if a candidate \(a\) is excluded from the randomized Lindahl committee with positive probability, then its total price across all voters is bounded by the average total price per candidate, which is at most \(\scl \cdot \frac{n}{B}\). This allows us to derive a bound on the number of voters who \(t\)-prefer a candidate outside the committee.

We start with the first claim bounding the probability that a voter has less than $t$ good candidates in the random Lindahl committee.

\begin{claim}\label{claim:rounding-probability}

Let \( t \geq 2 \) be an integer and \( \gamma \geq 1 \). Let \((\mathbf{x}, \mathbf{y}, \mathbf{p})\) be a \(\gamma t\)-SLEO with random income distribution \(\randbud{} = \mathbf{U}[1 - \varepsilon, 1]\), and let the  budget satisfy \( B \geq \gamma t \). Let \(\widetilde{C}\) be the distribution of the randomized \((B,\gamma t,\varepsilon)\)-Lindahl committee derived from Definition \ref{def:rand-com}. Then, for every voter \( v \in V \), if \( x_{v,\emptyset} = 0 \), the following holds:\footnote{We note that $\randbud{} = \mathbf{U}[1-\varepsilon,1]$ is used here to ensure the consistency with the randomized \((B,\gamma t,\varepsilon)\)-Lindahl committee derived from Definition \ref{def:rand-com}. Claim \ref{claim:rounding-probability} would still hold if the CDF of $\randbud{}$ is continuous and the rounding scheme on $\y$ satisfies the properties in Proposition \ref{proposition:dep}.}
$$
\underset{C\sim \widetilde{C}}{\Pr}\bigg[\bigg| C\cap \{a\in A:x_{v,a}>0\}  \bigg|< t\bigg]\leq \omega(\gamma, t), \text{ where }
\omega(\gamma,t) := \left(\frac{e^{-\frac{\gamma t-t+1}{\gamma t}}}{(\frac{t - 1}{\gamma t})^{\frac{t-1}{\gamma t}}}\right)^{\gamma t}.
$$


\end{claim}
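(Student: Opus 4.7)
The plan is to reduce the claim to a standard lower-tail Chernoff bound applied to the sum of the (negatively correlated) indicators produced by the dependent rounding procedure. First, I would define $S_v := \{a \in A : x_{v,a} > 0\}$, the support of voter $v$'s random demand. Because $x_{v,\emptyset}=0$ and $\mathcal{D}_v(\prices_v, \randbud{})$ is a probability distribution on $A\cup\{\emptyset\}$, we have $\sum_{a\in S_v} x_{v,a} = 1$. Applying condition~\ref{ssleo-2} of $\gamma t$-SLEO pointwise and summing over $S_v$ then yields
$$
\mu_v \;:=\; \sum_{a\in S_v} y_a \;\geq\; \gamma t\cdot\sum_{a\in S_v} x_{v,a}\;=\;\gamma t.
$$

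Next, let $Z_v := |C\cap S_v| = \sum_{a\in S_v}\tilde{y}_a$, where $\tilde{\y}$ denotes the binary vector produced by the rounding of Proposition~\ref{proposition:dep}. By the marginal-preservation property~\ref{dep-1}, $\mathbb{E}[Z_v]=\mu_v$, and the event $\{|C\cap S_v|<t\}$ coincides with $\{Z_v\leq t-1\}$. Because the rounding satisfies the ``zero-side'' negative correlation of property~\ref{dep-3}, a standard moment-generating-function argument (in the style of Panconesi--Srinivasan) shows that the multiplicative Chernoff lower-tail bound applies verbatim to $Z_v$. Setting $(1-\delta)\mu_v = t-1$ gives
$$
\Pr[Z_v\leq t-1] \;\leq\; \left(\frac{e^{-\delta}}{(1-\delta)^{1-\delta}}\right)^{\mu_v}
\;=\; \frac{e^{\,t-1-\mu_v}\,\mu_v^{\,t-1}}{(t-1)^{\,t-1}}.
$$

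To finish, I would argue that the right-hand side is monotone decreasing in $\mu_v$ on $[t-1,\infty)$. Writing $f(\mu) := e^{\,t-1-\mu}\,\mu^{\,t-1}/(t-1)^{\,t-1}$, one checks $f'(\mu)\propto (t-1-\mu)\,\mu^{\,t-2}$, which is negative for $\mu>t-1$. Since $\mu_v\geq \gamma t>t-1$, substituting $\mu_v=\gamma t$ is the worst case, and elementary algebra rewrites $f(\gamma t)$ exactly as the claimed expression $\omega(\gamma,t)$. The main obstacle is less conceptual than bookkeeping: one must confirm that property~\ref{dep-3}, as stated, is the precise hypothesis that licenses the lower-tail MGF inequality (the ``$\tilde{y}_a=0$'' side is exactly what the downward Chernoff derivation needs), and then verify that the optimized Chernoff bound at $(1-\delta)\mu_v=t-1$ reassembles into the somewhat unusual form of $\omega(\gamma,t)$ displayed in the statement.
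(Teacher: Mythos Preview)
Your proposal is correct and follows essentially the same route as the paper: define the support $S_v$, use $x_{v,\emptyset}=0$ together with condition~\ref{ssleo-2} to get $\sum_{a\in S_v} y_a\ge \gamma t$, invoke marginal preservation and negative correlation from Proposition~\ref{proposition:dep} to justify the lower-tail Chernoff bound, and evaluate at the threshold $t-1$. The only cosmetic difference is that the paper directly cites the Chernoff form valid for any $\mu\le \mathbb{E}[Z]$ and plugs in $\mu=\gamma t$, whereas you apply Chernoff at the true mean $\mu_v$ and then verify monotonicity to pass to $\gamma t$; these are two ways of saying the same thing.
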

\begin{proof}

For ease of notation, let $A^+:=\{a \in A: x_{v,a} > 0\}$. Let $\widetilde{Y}$ be the distribution of the 0-1 vectors corresponding to the distribution of the randomized committee $\widetilde{C}$, then we have 
$$\underset{C\sim \widetilde{C}}{\Pr}\bigg[| C \cap A^+ |< t  \bigg] 
= \underset{\Tilde{\y}\sim \widetilde{Y}}{\Pr}\bigg[\underset{a \in A^+}{\sum} \tilde{y}_a < t
\bigg].
$$

Because each voter consumes a lottery over $A\cup \{\emptyset\}$ according to condition \ref{ssleo-1} in Definition \ref{def:sSLEO}, if $x_{v,\emptyset} = 0$ then $\underset{a\in A}{\sum}x_{v,a} = 1$. According to condition \ref{ssleo-2} in Definition \ref{def:sSLEO}, $y_a\geq \gamma t\cdot x_{v,a}$ for each $a\in A$. Then we have 
$$\underset{a \in A^+}{\sum} y_a \geq \gamma t \cdot \underset{a\in A}{\sum}x_{v,a}  =\gamma t.$$
Hence, by condition \ref{dep-1} in Proposition \ref{proposition:dep},
$$
   \underset{\tilde{\y}\sim \widetilde{Y}}{\mathbb{E}} \left[\underset{a \in A^+}{\sum} \tilde{y}_a \right] = \underset{a \in A^+}{\sum} y_a \ge \gamma t.
$$
We also have for any $\mu>0$:
$$
\underset{\tilde{\y}\sim \widetilde{Y}}{\Pr}\bigg[\underset{a \in A^+}{\sum} \tilde{y}_a < t
\bigg]= \underset{\tilde{\y}\sim \widetilde{Y}}{\Pr}\bigg[\underset{a \in A^+}{\sum} \tilde{y}_a \le t-1 \bigg]=  \underset{\tilde{\y} \sim \widetilde{Y}}{\Pr}\bigg[\underset{a \in A^+}{\sum} \tilde{y}_a \leq \left(1-\frac{\mu-(t-1)}{\mu}\right) \mu\bigg].$$
The first equality holds because $t$ is integral and $\tilde{y}_a\in \{0,1\}$. 

Because \(\widetilde{Y}\) is negatively correlated over all coordinates according to condition \ref{dep-3} in Proposition \ref{proposition:dep}, we can apply the Chernoff inequality to any subset of coordinates. In particular, we apply it to the subset \(A^+\). We use the Chernoff inequality  for $Z:= \underset{a \in A^+}{\sum} \tilde{y}_a$ in the following form\footnote{See for example \cite{mitzenmacher1995probability}.}: 
$$
\text{for any } \delta \in (0,1) \text{ and } \mu \le \underset{\tilde{y} \sim \widetilde{Y}}{\mathbb{E}}[Z] : \;\;\Pr [Z\leq (1-\delta )\mu ] \leq \left({\frac {e^{-\delta }}{(1-\delta )^{1-\delta }}}\right)^\mu.
$$

Applying Chernoff inequality for  $Z:= \underset{a \in A^+}{\sum} \tilde{y}_a$ with $\mu=\gamma t$, $\delta = \frac{\mu-(t-1)}{\mu}=\frac{\gamma t-(t-1)}{\gamma t}$, we obtain
\[
\underset{\tilde{\y} \sim \widetilde{Y}}{\Pr}\bigg[\underset{a \in A^+}{\sum} \tilde{y}_a \leq \left(1-\frac{\mu-(t-1)}{\mu}\right)\mu\bigg]\leq \omega(\gamma,t).\]
\end{proof}


Similar to Section \ref{sec:condorcet}, we introduce the notion of \emph{boundary candidates}.

\begin{definition} \label{def:boundary-sleo}
    Given an $s$-SLEO $(\x,\y,\prices)$, for each voter $v \in \V$, the \emph{boundary candidate} $a_v$ is the most preferred candidate with price at most $1 - \varepsilon$, that is, $a_v := \max_{\succ_v} \{a \in A \cup \{\emptyset\} \mid p_{a,v} \le 1 - \varepsilon\}$.
\end{definition}

Next, we introduce the definition of $t$-covering.
\begin{definition}\label{def:t-cover}
Given an $\scl$-SLEO $(\x,\y,\prices)$ with the income $\randbud{} \sim \mathbf{U}[1-\varepsilon,1]$, let $C\subseteq \A$ be a set of candidates. We say that a voter $v \in V$ is $t$-\emph{covered} by $C$ if $C\succ_v^t a_v$.
\end{definition}

Using Claim \ref{claim:rounding-probability}, we now show the existence of a committee which $t$-covers a significant number of voters and contains all saturating candidates. 

\begin{claim}\label{claim:deviating_people}
Let \(\gamma \ge 1\) and \(t \ge 2\) be an integer.
Let $(\x, \y, \prices)$ be a $\gamma t$-SLEO with income distribution \(\randbud{} = \mathbf{U}[1 - \varepsilon, 1]\) and  \(B \geq \gamma t\). Then there exists a committee \(C\) of size at most \(\lceil B \rceil\) such that:
\begin{enumerate}
    \item \(C\) fails to $t$-cover at most \(\omega(\gamma, t) \cdot |V|\) voters,
      \item \(C\) contains every alternative \(a\) with \(y_a = 1\).  
\end{enumerate}
\end{claim}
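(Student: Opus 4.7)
The plan is to construct $C$ by sampling the randomized $(B,\gamma t,\varepsilon)$-Lindahl committee $\widetilde{C}$ of Definition~\ref{def:rand-com} and then derandomize via the probabilistic method. The proof splits cleanly into three pieces: automatic satisfaction of the second property, a per-voter bound on the probability of failing to be $t$-covered, and an averaging argument over all $n$ voters.

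First, I would verify the saturation property. For any alternative $a$ with $y_a=1$, property~\ref{dep-1} of Proposition~\ref{proposition:dep} gives $\mathbb{E}[\tilde{y}_a]=y_a=1$; since $\tilde{y}_a\in\{0,1\}$, this forces $\tilde{y}_a=1$ almost surely, so $a\in C$ in every realization of $\widetilde{C}$. Property~\ref{dep-2} also yields $|C|\le\lceil B\rceil$ deterministically.

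Next, I would bound, for each voter $v\in V$, the probability that $\widetilde{C}$ fails to $t$-cover $v$, i.e.\ that $|\{c\in C:c\succ_v a_v\}|<t$. I split by cases on $x_{v,\emptyset}$. If $x_{v,\emptyset}>0$, then since the income distribution $\mathbf{U}[1-\varepsilon,1]$ is entirely supported on values at which $a_v$ is affordable, the only way $\emptyset$ can be demanded is if $a_v=\emptyset$. In this case every $c\in C\subseteq A$ is strictly preferred to $\emptyset$, so $t$-coverage reduces to $|C|\ge t$, which holds because $|C|\ge\lfloor B\rfloor\ge\lfloor\gamma t\rfloor\ge t$ using $\gamma\ge 1$ and $t\ge 2$. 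If instead $x_{v,\emptyset}=0$, then Claim~\ref{claim:rounding-probability} bounds $\Pr_{C\sim\widetilde{C}}[|C\cap A^+_v|<t]\le\omega(\gamma,t)$ with $A^+_v:=\{a\in A:x_{v,a}>0\}$. On the complementary event I would convert this to $t$-coverage by observing that every $a\in A^+_v\setminus\{a_v\}$ satisfies $a\succ_v a_v$: $a_v$ is always affordable under income in $[1-\varepsilon,1]$, so $v$ never demands a strictly less preferred real candidate.

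Finally, linearity of expectation gives
\[
\mathbb{E}_{C\sim\widetilde{C}}\bigl[|\{v:v\text{ not $t$-covered by }C\}|\bigr]\le\omega(\gamma,t)\,|V|,
\]
so some realization $C^\star$ simultaneously satisfies both required properties: it $t$-covers all but at most $\omega(\gamma,t)|V|$ voters and contains every $a$ with $y_a=1$.

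The main obstacle is the reduction in the second step. Claim~\ref{claim:rounding-probability} controls $|C\cap A^+_v|$, but $t$-coverage demands $t$ candidates in $C$ that are \emph{strictly} preferred to $a_v$, while $a_v$ itself may belong to $A^+_v\cap C$. I would absorb this single possible overlap either by sharpening the Chernoff bound to a $t{+}1$ threshold at a negligible cost in $\omega(\gamma,t)$, or by exploiting the $s$-SLEO relation $\gamma t\cdot x_{v,a_v}\le y_{a_v}\le 1$ to show that the contribution of $a_v$ to $\sum_{a\in A^+_v}y_a$ is uniformly bounded and can be discarded from the Chernoff input without affecting the final estimate.
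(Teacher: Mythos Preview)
Your outline mirrors the paper's proof exactly: sample the randomized $(B,\gamma t,\varepsilon)$-Lindahl committee, observe that $y_a=1$ forces $a\in C$ deterministically by marginal preservation, dispose of the case $a_v=\emptyset$ via $|C|\ge\lfloor B\rfloor\ge t$, invoke Claim~\ref{claim:rounding-probability} for the remaining voters, and derandomize by linearity of expectation.

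The off-by-one obstacle you isolate at the end is genuine, and the paper does not address it: the paper simply asserts that failing to $t$-cover $v$ forces $|C'\cap A^+_v|<t$, but as you correctly argue, under $\mathbf{U}[1-\varepsilon,1]$ the boundary candidate $a_v$ is always affordable and hence demanded on a sub-interval of positive measure, so $a_v\in A^+_v$ and the implication can fail by one whenever $a_v\in C'$. Your two proposed repairs are both sound for $\gamma>1$: shifting the Chernoff threshold to $t{+}1$, or excising $a_v$ via $y_{a_v}\le 1$ so that $\sum_{a\in A^+_v\setminus\{a_v\}}y_a\ge\gamma t-1$ and rerunning Chernoff with $\mu=\gamma t-1$. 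Either costs only a cosmetic perturbation of $\omega(\gamma,t)$, which is absorbed by the downstream optimization over $\gamma$ in Theorem~\ref{thm:generalized-cond}. So on this point you have been more careful than the paper.
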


\begin{proof}

Let $C \sim \widetilde{C}$ be a realization of the randomized \((B,\gamma t,\varepsilon)\)-Lindahl committee corresponding to the given $\gamma t$-SLEO. Let $v$ be an arbitrary voter. If $a_v = \emptyset$, then $C$ $t$-covers $v$ with probability 1. This holds because the support of $C$ has size either $\lceil B \rceil$ or $\lfloor B \rfloor$, which is at least $t$. Hence, every realization $C$ is $t$-preferred to $a_v = \emptyset$.

So we suppose $a_v\neq \emptyset$. Notice that whenever a realization $C' \sim \widetilde{C}$ fails to $t$-cover $v$, it must follow that $\bigg| C' \cap \{a\in A:x_{v,a}>0\}  \bigg|< t$.
Then by Claim \ref{claim:rounding-probability}, $C'$ fails to cover $v$ with probability at most $\omega(\gamma,t)$. Then by linearity of expectation, the expected number of agents $C'$ fails to cover is at most $\omega(\gamma,t)\cdot n$. Consequently, there must exist a $C$ in the support of $\widetilde{C}$ which covers no more than $\omega(\gamma,t)\cdot n$ voters. Furthermore, for each $a$ with $y_a = 1$, it must be the case that $a\in C$, because the dependent rounding preserves marginals by condition \ref{dep-1}. This completes the proof. 
\end{proof}

\begin{claim}\label{claim:t-cover}
Given $\gamma > 1$ and $t \in \mathbb{N}$, let $(\x,\y,\prices)$ be a $\gamma t$-SLEO for a set of voters $V$ with income distribution $\randbud{} = \mathbf{U}[1-\varepsilon,1]$ and $B \geq \gamma t$. Let $C$ be an arbitrary committee that includes all candidates $a$ s.t. $y_a = 1$. Let $\mathbb{E}[\randbud{}]:=\mathbb{E}_{b \sim \randbud{}}[b]$ be the expected income. Then for  all $c \in A \setminus C$,  

\begin{enumerate}[label=(\roman*)]
    \item \label{sleo-cover-1} $\underset{v\in V}{\sum}p_{v,c}\leq \gamma t \cdot |V|\cdot \frac{1}{B}\cdot \mathbb{E}[\randbud{}]$. 
    \item \label{sleo-cover-2} There are at most $\frac{\gamma t}{B(1-\varepsilon)}\cdot |V|$ voters who are $t$-covered by $C$ and $t$-prefer $c$ to $C$.
\end{enumerate}
\end{claim}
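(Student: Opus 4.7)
The plan is to establish parts (i) and (ii) in sequence, with part (ii) emerging from part (i) combined with a boundary-candidate argument in the same spirit as Lemma~\ref{lemma:02}.

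For part (i), the strategy is to bound the producer's equilibrium revenue $R := \sum_{a\in A} (\sum_{v\in V} p_{v,a}) y_a$ in two different ways and compare them. First, by condition~\ref{ssleo-2} of Definition~\ref{def:sSLEO}, whenever $p_{v,a} > 0$ we must have $y_a = \gamma t \cdot x_{v,a}$, so
\[
R \;=\; \gamma t \sum_{v \in V} \sum_{a \in A} p_{v,a}\, x_{v,a} \;\leq\; \gamma t \cdot |V| \cdot \mathbb{E}[\randbud{}],
\]
since $\sum_a p_{v,a} x_{v,a}$ is the expected payment of voter $v$ at equilibrium and cannot exceed her expected income. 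Second, because $c \in A \setminus C$ and $C$ contains every $a$ with $y_a = 1$, we have $y_c < 1$. The producer's revenue-maximizing property (condition~\ref{ssleo-3}) then implies that shifting a small amount of mass from any $a$ with $y_a > 0$ to $c$ cannot increase revenue, so $\sum_v p_{v,c} \leq \sum_v p_{v,a}$ for every $a$ with $y_a > 0$. Averaging this inequality with weights $y_a$ and using $\sum_a y_a = B$ gives $B \cdot \sum_v p_{v,c} \leq R$, which rearranges to part~(i).

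For part (ii), I would argue that every voter $v$ who is $t$-covered by $C$ and who also $t$-prefers $c$ to $C$ must rank $c$ strictly above the boundary candidate $a_v$ of Definition~\ref{def:boundary-sleo}. Indeed, $C \succ_v^t a_v$ provides at least $t$ members of $C$ strictly preferred to $a_v$, while $c \succ_v^t C$ allows at most $t-1$ members of $C$ to be strictly preferred to $c$; a pigeonhole argument then yields some $b \in C$ with $a_v \prec_v b \prec_v c$, whence $c \succ_v a_v$. The SLEO analogue of Claim~\ref{cl:high-price} (whose proof uses only the definition of the boundary candidate and so carries over verbatim) then yields $p_{v,c} > 1 - \varepsilon$. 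Letting $V_c$ denote the set of such voters and combining with part~(i),
\[
(1-\varepsilon) |V_c| \;<\; \sum_{v \in V_c} p_{v,c} \;\leq\; \sum_{v \in V} p_{v,c} \;\leq\; \frac{\gamma t \cdot |V| \cdot \mathbb{E}[\randbud{}]}{B}.
\]
Since $\randbud{} = \mathbf{U}[1-\varepsilon,1]$ satisfies $\mathbb{E}[\randbud{}] \leq 1$, dividing by $(1-\varepsilon)$ gives the claimed bound.

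The main obstacle is the careful bookkeeping in part~(i): one must simultaneously invoke the market-clearing condition~\ref{ssleo-2} to equate $y_a$ with $\gamma t \cdot x_{v,a}$ on the support of $p_{v,a}$, and the LP-optimality condition~\ref{ssleo-3} to conclude that the total price of any excluded candidate is no more than the per-unit revenue $R/B$. Once part~(i) is in hand, the transitivity/pigeonhole step in part~(ii) is a direct extension of the boundary-candidate argument of Lemma~\ref{lemma:02}, with the comparison against a single top choice replaced by a comparison against the $t$-th best member of the committee.
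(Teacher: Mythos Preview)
Your proposal is correct and follows essentially the same approach as the paper's proof. In part~(i) you use a direct weighted-averaging argument where the paper phrases the same step by contradiction, and in part~(ii) you spell out the pigeonhole/transitivity step that the paper leaves implicit in its one-line appeal to Definition~\ref{def:t-cover}, but the underlying ideas are identical.
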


\begin{proof}
To show \ref{sleo-cover-1}, observe that every voter consumes the random demand under the random income $\randbud{}$. Thus, the expected spending is at most $\mathbb{E}[\randbud{}]$, that is, $\x_v^T \prices_v \le \mathbb{E}[\randbud{}]$.
Condition \ref{ssleo-2} in Definition \ref{def:sSLEO} implies that $x_{v,a} p_{v,a}=  \frac{y_{a} p_{v,a}}{\gamma t}$. Therefore, $\y^T \prices_v = \gamma t \cdot \x^T_v \prices_v \le \gamma t \cdot \mathbb{E}[\randbud{}]$. This implies that the total expected revenue
\[R=\sum_{a \in A} y_a \left( \underset{v\in V}{\sum}p_{v,a} \right) = \sum_{v \in V} \y^T \prices_v \le \gamma t \cdot |V|\cdot \mathbb{E}[\randbud{}].\]

Because \(C\) includes all candidates \(a\) with \(y_a = 1\), for \(c \in A \setminus C\), \(y_c < 1\). Notice that \( y_c \) has not reached its upper limit of 1, which implies that exchanging an \(\varepsilon\)-mass of some \( a' \) with \( y_{a'} > 0 \) for an additional \(\varepsilon\)-mass of \( c \) would not increase the total revenue. It follows that for any $a'$ with $y_{a'}>0$, $\underset{v\in V}{\sum}p_{v,a'}\geq \underset{v\in V}{\sum}p_{v,c}.$ Therefore, if $\underset{v\in V}{\sum}p_{v,c}> \gamma t |V| \cdot \frac{1}{B}\cdot \mathbb{E}[\randbud{}]$, we will have 
$$ R\geq B \cdot \underset{v\in V}{\sum}p_{v,c}>\gamma t \cdot |V| \cdot \mathbb{E}[\randbud{}].$$
This is a contradiction, and therefore we conclude that the aggregate price of $c$ is at most  $\gamma t \cdot |V| \cdot \mathbb{E}[\randbud{}].$


\ref{sleo-cover-2} is a simple consequence of \ref{sleo-cover-1}. By Definition \ref{def:t-cover}, if $C$ $t$-covers a voter $v$ and $c\succ^t_v C$, then $p_{v,c}\geq 1-\varepsilon$. Therefore, among the $t$-covered voters, at most $\frac{\gamma t}{B(1-\varepsilon)}\cdot |V|\cdot \mathbb{E}[\randbud{}]$ of them $t$-prefer $c$ over $C$.
As we choose $\randbud{} = \mathbf{U}[1-\varepsilon, 1]$, $\mathbb{E}[\randbud{}]<1$, and this completes the proof.  
\end{proof}

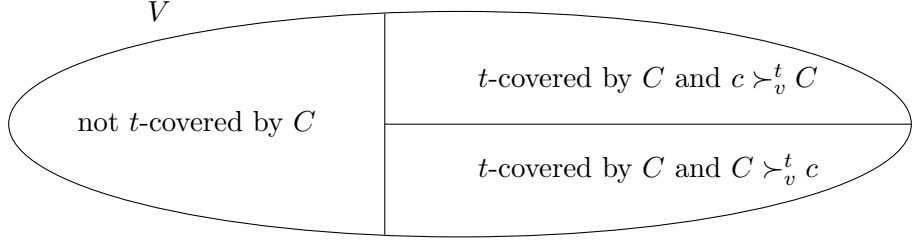
\begin{figure}
    \centering
\begin{tikzpicture}
        \draw (0:0) ellipse (6 and 1.5);
        \node at (-4,1.5) {$V$};
        \draw (-1,1.47) -- (-1,-1.47);
        \draw (6,0) -- (-1,0);
        \node[align=center] at (-3.5,0) {not $t$-covered by $C$};
        \node[align=center] at (2.5,0.6) {$t$-covered by $C$ and $c \succ^t_v C$};
        \node[align=center] at (2.5,-0.6) {$t$-covered by $C$ and $C \succ^t_v c$};
\end{tikzpicture}
    \caption{Voters $v \in V$ in 3 categories based on a fixed $c \in A \setminus C$: (1) not $t$-covered by $C$, (2) $t$-covered by $C$ and $c \succ^t_v C$, or (3) $t$-covered by $C$ and $C \succ^t_v c$. Note that there are at most $\omega(\gamma,t) \cdot |V|$ voters in category (1) by Claim \ref{claim:deviating_people} and at most $\frac{\gamma t}{B(1-\epsilon)} \cdot |V|$ voters in category (2) by Claim \ref{claim:t-cover}.}
    \label{fig:03}
\end{figure}

Claims~\ref{claim:deviating_people} and~\ref{claim:t-cover} are illustrated in Figure~\ref{fig:03}. Together, these results establish the existence of a committee \(C\) of size at most \(\lceil B \rceil\) that limits the number of voters who prefer $c \in A \setminus C$ to $C$. Specifically, Claim~\ref{claim:deviating_people} provides an upper bound of \(\omega(\gamma, t)\cdot |V|\) on the number of voters who are not \(t\)-covered by \(C\). Claim~\ref{claim:t-cover}, in turn, shows that for any candidate $c \in A \setminus C$, among the voters who are \(t\)-covered by \(C\), at most \(\frac{\gamma t}{B(1 - \varepsilon)} \cdot |V|\) of them \(t\)-prefer \(c\) over \(C\).



As in the result from Section~\ref{sec:condorcet}, combining these two bounds yields a bound on the total number of voters who might \(t\)-prefer \(c\) to \(C\). To ensure that this number is less than \(\alpha \cdot |V|\), it is necessary to choose \(\gamma\) and \(B\) as functions of \(\alpha\) and \(t\). Combining Claims \ref{claim:deviating_people} and~\ref{claim:t-cover} implies the following.


\begin{claim}\label{thm:repeated-sampling}
For any $\alpha\in (0,1)$, integer $t \ge 2$, and $\varepsilon>0$, if there exists a  $\gamma\geq 1$ and an integer $B\geq \gamma t$ s.t. $$ \alpha \geq \frac{\gamma t}{B} + \omega(\gamma, t),$$
then a $(t,\frac{\alpha}{1-\varepsilon})$-undominated set of size $B$ exists. 
\end{claim}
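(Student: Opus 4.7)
The plan is to directly combine the two preceding claims through a three-way partition of the electorate. Since $B$ is an integer with $B \geq \gamma t$, Theorem~\ref{thm:sSLEO} yields a $\gamma t$-SLEO $(\x,\y,\prices)$ with budget $B$ and income distribution $\randbud{} = \mathbf{U}[1-\varepsilon, 1]$ (whose CDF is continuous). I would apply Claim~\ref{claim:deviating_people} to this SLEO to produce a committee $C \subseteq A$ of size at most $\lceil B \rceil = B$ satisfying (i) $C$ fails to $t$-cover at most $\omega(\gamma,t)\cdot |V|$ voters, and (ii) $C$ contains every candidate $a$ with $y_a = 1$. Property (ii) is exactly the precondition that allows Claim~\ref{claim:t-cover} to be applied on the \emph{same} committee $C$.

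Next, fix any candidate $c \in A\setminus C$ and partition the voters into the three groups illustrated in Figure~\ref{fig:03}: voters not $t$-covered by $C$; voters $t$-covered by $C$ with $c \succ^t_v C$; and voters $t$-covered by $C$ with $C \succ^t_v c$. Only the first two groups can contain voters who $t$-prefer $c$ to $C$. Claim~\ref{claim:deviating_people} bounds the first group by $\omega(\gamma,t)\cdot |V|$, and part~\ref{sleo-cover-2} of Claim~\ref{claim:t-cover} bounds the second by $\frac{\gamma t}{B(1-\varepsilon)}\cdot |V|$. Summing and invoking $1-\varepsilon < 1$ together with the hypothesis $\alpha \geq \frac{\gamma t}{B} + \omega(\gamma,t)$ gives
\[
\bigl|\{v \in V : c \succ^t_v C\}\bigr| \;\leq\; \left(\omega(\gamma,t) + \frac{\gamma t}{B(1-\varepsilon)}\right) n \;\leq\; \frac{\omega(\gamma,t) + \gamma t/B}{1-\varepsilon}\cdot n \;\leq\; \frac{\alpha}{1-\varepsilon}\cdot n.
\]
Since the left-hand side is integer-valued, it is in fact at most $\lfloor \frac{\alpha}{1-\varepsilon} n \rfloor$, certifying that $C$ is $(t,\frac{\alpha}{1-\varepsilon})$-undominated. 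If $|C| < B$, I would pad $C$ with arbitrary candidates from $A\setminus C$ to achieve size exactly $B$; enlarging the committee can only weakly decrease $|\{v : c \succ^t_v C\}|$ for each remaining outside $c$, since adding a member cannot decrease the number of committee candidates preferred to $c$ by any voter, so undominance is preserved.

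The main ``obstacle'' is light: essentially all technical content has already been absorbed into Theorem~\ref{thm:sSLEO} (existence of the SLEO), Claim~\ref{claim:deviating_people} (covering via dependent rounding), and Claim~\ref{claim:t-cover} (price-based bound on defectors). What this proof must do is verify that a single committee $C$ simultaneously enjoys both the covering bound and the ``contains all saturated candidates'' condition---so that both claims apply at once---and then carry out the algebraic reorganization that absorbs the $\frac{1}{1-\varepsilon}$ factor, arising from the shifted income distribution $\mathbf{U}[1-\varepsilon,1]$, into the final undominance ratio $\frac{\alpha}{1-\varepsilon}$. The mild subtlety is the padding step, which must be argued via the monotonicity of $t$-preference under committee enlargement.
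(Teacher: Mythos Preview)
Your proposal is correct and mirrors the paper's approach exactly: the paper's proof of this claim is essentially the sentence ``follows from the discussion above,'' where that discussion is precisely the three-way partition of Figure~\ref{fig:03} combining Claims~\ref{claim:deviating_people} and~\ref{claim:t-cover}. Your explicit padding argument to reach size exactly $B$ and the absorption of $\omega(\gamma,t)$ into the $\frac{1}{1-\varepsilon}$ factor are details the paper leaves implicit but which you handle correctly.
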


Claim~\ref{thm:repeated-sampling} allows us to derive bounds for $(t,\alpha)$-undominated sets for certain values of $t$ and $\alpha$.
In general, when $\alpha$ is relatively large, the bound in Claim~\ref{thm:repeated-sampling} can be used to obtain a committee of size at most $\delta(t) \cdot t$, as stated in Theorem~\ref{thm:generalized-cond}.
Furthermore, as we show in the proof of Theorem~\ref{thm:generalized-cond}, the bound approaches the optimal value of $\frac{t}{\alpha}$ as $t \to \infty$.


However, this method has limitations. It does not guarantee the construction of a $(t,\alpha)$-undominated set of the desired size order \(O(\frac{t}{\alpha})\) when $\frac{1}{\alpha}$ is very big compared to $t$. For example, one can show that in order to be a $(t,\alpha)$-undominated set,  $B$ needs to be in the order $\frac{1}{\alpha}\cdot \log(\frac{1}{\alpha})$, which may well exceed \(O(\frac{t}{\alpha})\) when $\log(\frac{1}{\alpha})$ is relatively large to \(t\). 

To obtain a constant bound for all \( t \) and \( \alpha \), we incorporate SLEO into the framework of \cite{ApproStable} to derive a new iterative method for constructing the committee, which yields a tighter bound when \( \alpha \) is small. By combining the bound from Claim~\ref{thm:repeated-sampling} with the new bound described below, we show that the resulting committee satisfies the desired size requirement of \( O(t/\alpha) \).

\subsection*{Iterative Algorithm}

To overcome the issue described above, we give an iterative process that repeatedly applies Claims~\ref{claim:deviating_people} and~\ref{claim:t-cover}, each time restricting attention to the set of voters who are not \(t\)-covered by the current committee \(C\). Note that the number of such voters decreases by a constant factor in each iteration (specifically, by \(\omega(\gamma, t)\)). The process will continue until `` almost" all voters are covered, thus solving the issue of the one-step rounding.

To maintain control over the total committee size, we also reduce the budget \(B\) by a constant factor at each step, ensuring that the sequence \(\{B_i\}_{i=0}^{\infty}\) forms a geometric progression. This prevents the total size of the output committee from growing unbounded across iterations.

The precise iterative algorithm is described in Algorithm~\ref{alg:iterative}.


\begin{algorithm} 
\caption{Iterative Rounding with SLEO} \label{alg:iterative}
\KwData{A set $V$ of $n$ voters, $\alpha\leq 1$, integer $t \ge 2$, $\gamma, \tau \geq 1$ with $\omega(\gamma,t)\tau< 1$,  $\varepsilon \in (0,1)$} 
\KwResult{A $(t,\frac{\alpha}{1-\varepsilon})$-undominated set $C$ }
$V_0\leftarrow V$, $t\leftarrow 0$, $B_0 \gets \gamma\cdot \frac{1}{1-\omega(\gamma, t)\cdot \tau}\cdot \frac{t}{\alpha}$\\
\While{$|B_i|\geq \gamma t$}{
$B_i\leftarrow \frac{B_0}{\tau^i}$\\
Generate a $\gamma t$-SLEO $(\x,\y,\prices)$ for $V_i$ with budget $B_i$ and random income $\randbud{} = \mathbf{U}[1-\varepsilon,1]$\\
Find a set $C_i$ of size at most $\lceil B_i\rceil$ which:\\
\quad (i) contains all candidates $a$ with $y_a = 1$ and \\
\quad (ii) does not cover at most $\omega(\gamma,t)\cdot |V_i|$ voters  \\
$V_{i+1}\leftarrow V_i\setminus\{v \in V_{i}: C_i \text{ $t$-covers }v  \}$\\
$C\leftarrow C\cup C_i$\\
$i\leftarrow i+1$
}
return $C$
\end{algorithm}

The parameters \(\gamma, \tau \geq 1\) are selected later to minimize $\delta$ in Theorem \ref{thm:generalized-cond}.

The algorithms begin by considering the full set of voters and computing a \(\gamma t\)-SLEO with a total budget \(B_0\). It then converts this outcome into a randomized Lindahl committee. By Claim~\ref{claim:deviating_people}, there exists a realization of this randomized committee that fails to cover a certain fraction of the voters. We then restrict our attention to the uncovered voters and apply the same procedure, but with a reduced budget: \(B_1 = B_0 / \tau\). This step may again leave a portion of voters uncovered.

This process is repeated iteratively, each time applying the method to the currently uncovered voters and reducing the budget geometrically. The iteration continues until the current budget \(B_i\) becomes smaller than \(\gamma t\), at which point the process terminates.

To see the correctness of the algorithm, a crucial observation is that because $|V_i|$ decrease \textit{simultaneously} with $B_i$ for each candidate $a$, the number of $t$-covered voters preferring $a$ to $C_i$ also forms a geometric sequence.

\begin{claim}\label{lemma:geometric}
For each step  $i \ge 0$ of Algorithm~\ref{alg:iterative}, let $S(C_i,V_i)$ denote the set of voters among $V_i$ t-covered by $C_i$. Then for any $a\notin C_i$, it holds
$$ |v\in S(C_i,V_i): a\succ_v^t C_i|\leq \frac{\gamma t}{1-\varepsilon}\cdot [\omega(\gamma,t)\tau]^i\cdot \frac{|V|}{B_0}.$$ 
\end{claim}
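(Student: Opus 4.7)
The plan is to combine the two geometric decays hidden in the loop of Algorithm~\ref{alg:iterative} with the per-iteration bound from Claim~\ref{claim:t-cover}\ref{sleo-cover-2}. I would first track how $|V_i|$ and $B_i$ evolve, then apply Claim~\ref{claim:t-cover} to the $\gamma t$-SLEO computed at step $i$ on the restricted voter set $V_i$ with budget $B_i$, and finally plug in the explicit formulas.

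More concretely, the first step is to observe that by the construction of the while-loop, $V_{i+1}$ is exactly the set of voters in $V_i$ that $C_i$ fails to $t$-cover, so property (ii) of the set $C_i$ produced in the loop gives
\[
|V_{i+1}| \le \omega(\gamma,t)\cdot |V_i|,
\]
and hence $|V_i|\le \omega(\gamma,t)^i\cdot |V|$ by induction on $i$. At the same time, the update rule $B_i = B_0/\tau^i$ is direct from the algorithm. The second step is to apply Claim~\ref{claim:t-cover} to the $\gamma t$-SLEO $(\x,\y,\prices)$ generated at iteration $i$ for the voter set $V_i$ with income $\randbud{}=\mathbf{U}[1-\varepsilon,1]$ and budget $B_i\ge \gamma t$ (the latter being precisely the loop condition). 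Since the committee $C_i$ is chosen to include every candidate $a$ with $y_a=1$ (property (i) of the set produced in the loop), the hypotheses of Claim~\ref{claim:t-cover} are met, and part~\ref{sleo-cover-2} of that claim yields, for every $a\notin C_i$,
\[
|\{v\in S(C_i,V_i): a\succ^t_v C_i\}|\ \le\ \frac{\gamma t}{B_i(1-\varepsilon)}\cdot |V_i|.
\]

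The final step is a direct substitution: using $B_i=B_0/\tau^i$ and $|V_i|\le \omega(\gamma,t)^i|V|$,
\[
\frac{\gamma t}{B_i(1-\varepsilon)}\cdot |V_i|\ \le\ \frac{\gamma t}{1-\varepsilon}\cdot \frac{\tau^i}{B_0}\cdot \omega(\gamma,t)^i\cdot |V|\ =\ \frac{\gamma t}{1-\varepsilon}\cdot [\omega(\gamma,t)\tau]^i\cdot \frac{|V|}{B_0},
\]
which is exactly the claimed bound.

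There is no real technical obstacle here; the claim is essentially a bookkeeping statement about the algorithm. The only subtle point worth flagging is to verify that Claim~\ref{claim:t-cover} is actually applicable at each iteration --- this requires both the invariant $B_i\ge \gamma t$ (which is guaranteed by the loop guard) and the fact that $C_i$ contains every saturated candidate (which is guaranteed by property~(i) of the set returned inside the loop, and whose existence in turn follows from Claim~\ref{claim:deviating_people}). Once these are in hand, the proof of Claim~\ref{lemma:geometric} reduces to the single line of arithmetic above.
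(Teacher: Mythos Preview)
Your proposal is correct and follows essentially the same approach as the paper: bound $|V_i|\le \omega(\gamma,t)^i|V|$ from the per-round coverage guarantee, apply Claim~\ref{claim:t-cover}\ref{sleo-cover-2} to the SLEO at round $i$ (using that $C_i$ contains all saturated candidates and $B_i\ge \gamma t$), and substitute $B_i=B_0/\tau^i$. If anything, your write-up is slightly more careful than the paper's in explicitly verifying the hypotheses of Claim~\ref{claim:t-cover} at each iteration.
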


\begin{proof}
According to Claim \ref{claim:deviating_people}, for each $j$, we have 
$$|V_{i+1}|\leq \omega(\gamma,t)\cdot |V_i|.$$ 
Therefore, at round $i$, we can derive through induction that 
$$|V_i|\leq \omega(\gamma,t)^i \cdot |V|.$$ And by the construction of $C_i$, $a\notin C_i$ must have $y_a>0$. Then the lemma follows from Claim \ref{claim:t-cover}:
$$ |v\in S(C_i,V_i): a\succ_v^t C_i|\leq \frac{\gamma t}{B_i(1-\varepsilon)}\cdot |V_i| = \frac{\gamma t}{1-\varepsilon}\cdot [\omega(\gamma,t)\tau]^i\cdot \frac{|V|}{B_0}.$$
\end{proof}

\begin{restatable}{claim}{clmalgorithmcorrectness}\label{clm:algorithm-correctness}
Given a set $V$ of $n$ voters, $\alpha \leq 1$, integer $t \ge 2$,$\gamma,\tau\geq 1$ with $\omega(\gamma,t)\tau< 1$, and $\varepsilon>0$, the output of Algorithm \ref{alg:iterative} is a committee which is $(t,\frac{\alpha}{1-\varepsilon})$-undominated and of size at most $\frac{\gamma \tau}{\tau-1}\cdot\frac{1}{1-\omega(\gamma,t) \tau}\cdot \frac{t}{\alpha} + \log_{\tau}(\frac{1}{1-\omega(\gamma, t) \tau}\cdot \frac{1}{\alpha})$ . 
\end{restatable}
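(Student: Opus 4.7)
The plan is to verify two properties of the committee $C$ returned by Algorithm~\ref{alg:iterative}: its size is at most the stated quantity, and it is $(t,\frac{\alpha}{1-\varepsilon})$-undominated. For the size bound, let $I$ be the number of iterations executed. Termination requires $B_I<\gamma t$, and since $B_i=B_0/\tau^i$ with $B_0=\frac{\gamma t}{\alpha(1-\omega(\gamma,t)\tau)}$, this gives $\tau^I>\frac{1}{\alpha(1-\omega(\gamma,t)\tau)}$, so $I\le\log_\tau\!\bigl(\frac{1}{\alpha(1-\omega(\gamma,t)\tau)}\bigr)+O(1)$. The output satisfies $|C|\le\sum_{i=0}^{I-1}\lceil B_i\rceil\le I+\sum_{i=0}^{I-1}B_i\le I+\frac{B_0\tau}{\tau-1}$, which after substituting $B_0$ yields exactly the claimed expression.

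For undominance, fix $a\notin C$, so in particular $a\notin C_i$ for every $i$. Partition the voters as $V=V_I\,\cup\,\bigsqcup_{i=0}^{I-1}S_i$, where $S_i:=V_i\setminus V_{i+1}$ is the set of voters first $t$-covered at iteration $i$. Because $C_i\subseteq C$, $a\succ_v^t C$ implies $a\succ_v^t C_i$, so voters in $S_i$ with $a\succ_v^t C$ are among those counted by Claim~\ref{lemma:geometric}, giving
\[
|\{v\in S_i:a\succ_v^t C\}|\;\le\;\tfrac{\gamma t}{1-\varepsilon}\,[\omega(\gamma,t)\tau]^{\,i}\tfrac{|V|}{B_0}.
\]
Summing over $i=0,\ldots,I-1$ and keeping the partial-sum formula $\sum_{i=0}^{I-1}(\omega\tau)^i=\frac{1-(\omega\tau)^{I}}{1-\omega\tau}$ (I deliberately do not upper bound this by $\frac{1}{1-\omega\tau}$), and using $B_0=\frac{\gamma t}{\alpha(1-\omega\tau)}$, the total contribution from $\bigsqcup_i S_i$ simplifies to at most $\frac{\alpha(1-(\omega\tau)^I)}{1-\varepsilon}|V|$. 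For the remaining voters, iterating the bound $|V_{i+1}|\le\omega(\gamma,t)|V_i|$ from Claim~\ref{claim:deviating_people} gives $|V_I|\le\omega^I|V|$, and I bound these pessimistically as all possibly having $a\succ_v^t C$.

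Adding the two contributions, the undominance property reduces to the inequality
\[
\omega^{I}+\frac{\alpha(1-(\omega\tau)^{I})}{1-\varepsilon}\;\le\;\frac{\alpha}{1-\varepsilon},
\]
which rearranges to $\alpha\tau^{I}\ge 1-\varepsilon$. Since $\tau^{I}>\frac{1}{\alpha(1-\omega\tau)}\ge\frac{1}{\alpha}$, we have $\alpha\tau^{I}>1>1-\varepsilon$, so the inequality holds and the total number of voters $t$-preferring $a$ to $C$ is at most $\frac{\alpha}{1-\varepsilon}n$, as required.

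The main obstacle is accounting for the residual set $V_I$. If one used only the crude bound $\sum_{i=0}^{\infty}(\omega\tau)^i=\frac{1}{1-\omega\tau}$, the $S_i$ contributions alone would already saturate the allowance $\frac{\alpha}{1-\varepsilon}|V|$, leaving no slack for $|V_I|$. The key technical observation is that the factor $1-(\omega\tau)^{I}$ from the partial geometric sum exactly cancels the $|V_I|\le\omega^{I}|V|$ term, provided $B_0$ is chosen to be $\frac{\gamma t}{\alpha(1-\omega(\gamma,t)\tau)}$, which is precisely the value set in the algorithm.
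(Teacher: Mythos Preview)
Your proof is correct and follows the same overall approach as the paper. One small slip in the size bound: the condition $B_I<\gamma t$ yields a \emph{lower} bound $\tau^I>\frac{1}{\alpha(1-\omega\tau)}$, not an upper bound on $I$; the upper bound you need comes from the fact that iteration $I-1$ was actually executed, i.e., $B_{I-1}\ge\gamma t$, giving $I\le\log_\tau\bigl(\tfrac{1}{\alpha(1-\omega\tau)}\bigr)+1$. (You do correctly use the lower-bound direction later, in the undominance part.)

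Your undominance argument differs only cosmetically from the paper's. Instead of keeping the partial geometric sum and reducing to $\alpha\tau^I\ge 1-\varepsilon$, the paper rewrites $|V_I|\le\omega^I|V|=B_I\cdot\frac{|V|}{B_0}(\omega\tau)^I\le\gamma t\cdot\frac{|V|}{B_0}(\omega\tau)^I$ (using $B_I\le\gamma t$ from termination), recognizes this as the $I$-th term of the same geometric series, and then bounds everything by the infinite sum $\frac{1}{1-\omega\tau}$, which after substituting $B_0$ gives exactly $\frac{\alpha}{1-\varepsilon}|V|$. The two routes are equivalent and use the same ingredients; the paper's is a touch more compact, while yours makes the role of the termination condition more explicit.
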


The proof of correctness of the algorithm, which is mostly algebra of the iterative process, can be found in Appendix \ref{app:pf-clm:algorithm-correctness}. 

Below, we provide the proof of Theorem~\ref{thm:generalized-cond}. The function $\delta(t)$ is derived by optimizing the bounds established in Claim~\ref{clm:algorithm-correctness} and Claim~\ref{thm:repeated-sampling}, and then taking their minimum. The undominance ratio $\alpha$ is obtained by letting \(\varepsilon \to 0\).

\begin{proof}[Proof of Theorem \ref{thm:generalized-cond}] 


We assume that $\delta(t)\cdot \frac{t}{\alpha}<|A|$, otherwise the theorem trivially holds as we take the entire set $A$ as the outcome set.

Now define
\begin{equation} \label{eq:delta}
    \delta(t) := \max_{\alpha\in(0,1]}\min\{\frac{s_1(\alpha,t)}{t/\alpha}, \frac{s_2(\alpha,t)}{t/\alpha}\}
\end{equation}
where
\begin{align*}
    s_1(\alpha,t)&:= \min_{\gamma, \tau\geq 1} \frac{\gamma\cdot \tau}{\tau-1}\cdot \frac{1}{1- \omega(\gamma,t)\tau}\cdot \frac{t}{\alpha} + \log_{\tau}(\frac{1}{1-\omega(\gamma, t) \tau}\cdot \frac{1}{\alpha})  \text{ subject to } \tau \omega(\gamma,t)<1, \\
    s_2(\alpha,t)&:= \min_{\gamma\geq 1, B\in\mathbb{N}} B  \text{ subject to } \alpha \geq\frac{\gamma t}{B} + \omega(\gamma, t).
\end{align*}
The plots of \( s_1 \) and \( s_2 \), along with their comparison, are deferred to the Appendix.
\ref{app:plot}. 

By Claim \ref{clm:algorithm-correctness} and Claim \ref{thm:repeated-sampling}, for  an $\varepsilon>0$, there exists a committee $C_{\varepsilon}$ of size $\delta(t)\cdot \frac{t}{\alpha}$ s.t. for any $c \in A \setminus C_\varepsilon$, the number of voters 
$t$-preferring $c$ to $C_{\varepsilon}$  is at most $\frac{1}{1-\varepsilon}\cdot \alpha n$. Because the number of voters is integral, it is at most  $\lfloor \frac{1}{1-\varepsilon}\cdot \alpha n \rfloor$.

Observe that $\frac{1}{1-\varepsilon}\cdot \alpha n > \alpha n $ and it approaches $\alpha n$  as $\varepsilon$ goes to $0$, hence, there is a small enough $\varepsilon^*$ so that  $\lfloor \frac{1}{1-\varepsilon}\cdot \alpha n \rfloor= \lfloor \alpha n \rfloor$ for any $\varepsilon\le \varepsilon^*$. 

This implies that the set \( C_{\varepsilon^*} \) satisfies our desired property: for any \( c \notin C_{\varepsilon^*} \), the number of voters who \( t \)-prefer \( c \) to \( C_{\varepsilon^*} \) is at most \( \alpha n \).



We leave in Appendix \ref{app:pf-thm:generalized-cond} the parts of the proof showing that $\delta(t)\leq 4.75$ for any $t\geq 2$ and $\delta(t)\rightarrow 1$ when $t\rightarrow \infty$. \end{proof}

\section{Conclusion}
This paper systematically studies the problem of selecting undominated committees in social choice, aiming to overcome the Condorcet paradox by allowing for the selection of multiple candidates. Our framework not only generalizes the majority condition to accommodate any arbitrary fraction of voter support, but more importantly, introduces a richer comparison principle: evaluating an outside candidate against the \(t\)-th most-preferred member of the selected committee, rather than just the top-ranked one. This generalized notion has practical implications, providing a more nuanced basis for excluding candidates through comparisons with median or quantile-level members of the committee.  It also reveals a striking structural property: in the limit, a tight lower bound emerges that reflects the dual role of the threshold parameter, simultaneously capturing the breadth of voter support and the depth of preference comparisons.

Future work includes deriving tighter bounds for finite $t$ and different ways of comparing a set of candidates to a single one. More broadly, our results are based on a novel approach grounded in Lindahl equilibrium with ordinal preferences and its extensions. This technique is general and flexible, making it applicable beyond the specific context of committee selection. For instance, it has been applied to   participatory budgeting (\cite{nguyen-song}), where resources must be allocated fairly across competing projects without relying on cardinal utilities. It may also be applied to public goods provision, where the goal is to ensure fair and efficient outcomes based on individuals’ ranked preferences over outcomes. Furthermore, our framework has potential implications in algorithmic fairness, particularly in settings where decisions must be made under ordinal input data, such as in hiring, admissions, or recommendation systems. By bridging concepts from market equilibrium and ordinal social choice, the approach offers a unifying lens for designing fair mechanisms across a wide range of collective decision-making problems.

\newpage
\appendix
\section{Proof of Theorem \ref{thm:existence}} \label{app:LEO}

We choose to include renowned Kakutani's fixed point theorem and Maximum theorem here for the sake of completeness. 
\begin{theorem}[Kakutani's fixed point theorem, \cite{Kakutani41}]\label{thm:kakutani}
Let $S$ be a non-empty, compact, and convex subset of some Euclidean space $\mathbb{R}^n.$ Let $\psi:S\rightarrow 2^S \setminus \emptyset$ be a point-to-set function on $S$ such that 1) $\psi$ is upper-hemicontinuous and 2) $\psi(s)$ is non-empty and convex for all $s \in S$. Then there exists a \emph{fixed point} $s \in S$ such that $s \in \psi(s)$. 
\end{theorem}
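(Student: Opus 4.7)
The plan is to reduce Kakutani's theorem to Brouwer's fixed point theorem through a finite-dimensional approximation argument, then pass to a limit using upper-hemicontinuity. Throughout I use the characterization of upper-hemicontinuity for compact-valued correspondences: if $x_n \to x$, $y_n \in \psi(x_n)$, and $y_n \to y$, then $y \in \psi(x)$. I also observe (and will first remark) that, replacing $\psi(s)$ by its closure if necessary, we may assume each $\psi(s)$ is closed, hence compact (as a subset of the compact set $S$).

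First I would build, for each integer $k \ge 1$, a continuous approximation $f_k : S \to S$. To do this, take a simplicial subdivision $T_k$ of $S$ (obtained by embedding $S$ in a simplex and intersecting with a fine triangulation) with mesh at most $1/k$. For each vertex $v$ of $T_k$, pick an arbitrary point $g_k(v) \in \psi(v)$; this uses only the non-emptiness of $\psi$. Now extend $g_k$ to all of $S$ by barycentric interpolation on each simplex of $T_k$: if $x = \sum_i \lambda_i v_i$ with $\lambda_i \ge 0$, $\sum_i \lambda_i = 1$, and $v_i$ the vertices of the simplex containing $x$, set
\[
f_k(x) := \sum_i \lambda_i \, g_k(v_i).
\]
Since $S$ is convex and each $g_k(v_i) \in S$, we have $f_k(x) \in S$, and $f_k$ is continuous on $S$. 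Apply Brouwer's fixed point theorem (valid on the compact convex $S$) to obtain $x_k \in S$ with $f_k(x_k) = x_k$.

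Next I would take a limit. By compactness of $S$, pass to a subsequence (still indexed by $k$) with $x_k \to x^* \in S$. For the simplex $\sigma_k$ of $T_k$ containing $x_k$, let $v_{k,0}, \ldots, v_{k,n}$ be its vertices and $\lambda_{k,0}, \ldots, \lambda_{k,n}$ the barycentric coordinates of $x_k$, so that
\[
x_k = \sum_{i=0}^n \lambda_{k,i} v_{k,i} \quad \text{and} \quad x_k = f_k(x_k) = \sum_{i=0}^n \lambda_{k,i} \, g_k(v_{k,i}).
\]
Since the mesh is at most $1/k$, each $v_{k,i} \to x^*$. By compactness of $\overline{S}^{\,n+1}$ and the simplex of weights, I may extract a further subsequence along which $\lambda_{k,i} \to \lambda_i^*$ for each $i$ (with $\sum_i \lambda_i^* = 1$, $\lambda_i^* \ge 0$) and $g_k(v_{k,i}) \to y_i^* \in S$ for each $i$. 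Upper-hemicontinuity of $\psi$, applied to the sequence $v_{k,i} \to x^*$ with $g_k(v_{k,i}) \in \psi(v_{k,i})$, gives $y_i^* \in \psi(x^*)$ for every $i$. Then
\[
x^* = \lim_k x_k = \sum_{i=0}^n \lambda_i^* \, y_i^*,
\]
which is a convex combination of points in $\psi(x^*)$. Since $\psi(x^*)$ is convex, $x^* \in \psi(x^*)$.

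The main obstacle is the last step: ensuring that the limit lies in $\psi(x^*)$ rather than merely close to it. This is precisely where both hypotheses on $\psi$ are used in tandem — upper-hemicontinuity transfers the inclusions $g_k(v_{k,i}) \in \psi(v_{k,i})$ to the limit, while convexity lets us combine the $n+1$ limit points with weights $\lambda_i^*$ and still remain inside $\psi(x^*)$. A secondary technical nuisance is the reduction to the case where $S$ is itself a simplex (or sits inside one) so that a clean simplicial subdivision and barycentric interpolation are available; this can be handled by enclosing $S$ in a large simplex $\Delta \supseteq S$, triangulating $\Delta$, and either (a) composing with the nearest-point retraction $\Delta \to S$ (using convexity and closedness of $S$) or (b) using only those simplices that meet $S$, which is the standard simplicial-approximation setup and introduces no essential new difficulty.
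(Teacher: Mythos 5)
The paper does not actually prove this statement: it is quoted from \cite{Kakutani41} and used as a black box (in Appendix A for the existence of a LEO, and again for $s$-SLEO), so there is no internal proof to compare against. Your argument is the classical reduction to Brouwer's theorem---piecewise-linear selections on a triangulation of mesh $1/k$, a Brouwer fixed point $x_k$ of each approximation $f_k$, and a limit step combining the closed-graph property of $\psi$ with convexity of $\psi(x^*)$---and in substance it is correct; it is essentially Kakutani's original proof. Two caveats. First, your opening remark ``replacing $\psi(s)$ by its closure if necessary'' does not achieve the reduction you intend: a fixed point of the closure correspondence only gives $s\in\overline{\psi(s)}$, not $s\in\psi(s)$. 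The sequential closed-graph characterization of upper-hemicontinuity that your limit step uses is valid for closed- (equivalently compact-) valued correspondences into a compact set, so what your argument proves is the theorem under the standard additional hypothesis that each $\psi(s)$ is closed. That is how Kakutani's theorem is classically stated, and it is satisfied in this paper's applications (the nontrivial component of the correspondence is an $\arg\max$ set of a linear program over a compact polytope, hence compact), but you should either add closedness to the hypotheses or note that the statement as printed is slightly informal rather than pretend the closure trick removes it. Second, the triangulation nuisance you flag is real---intersecting a fine triangulation of an ambient simplex with $S$ does not triangulate $S$---but your fix (a) is sound: triangulate a simplex $\Delta\supseteq S$, select $g_k(v)\in\psi(r(v))$ where $r:\Delta\to S$ is the nearest-point retraction (continuous because $S$ is compact and convex), apply Brouwer on $\Delta$, observe that the fixed points lie in $S$ and that $r(v_{k,i})\to x^*$, and run the same limit argument. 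Your fix (b) is shakier, since the union of simplices meeting $S$ need not be convex, so Brouwer does not apply to it directly; go with (a).
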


\begin{theorem}[Maximum Theorem, \cite{maximum}]\label{thm:maximum}
Let $\mathbf{X}$ and $\mathbf{\Theta}$ be topological spaces, $f:\mathbf{X}\times \mathbf{\Theta}\rightarrow \mathbb{R}$ be a continuous function on the product $\mathbf{X}\times\mathbf{\Theta}$, and $\mathbf{C}:\mathbf{\Theta}\rightrightarrows \mathbf{X}$ be a compact valued correspondence such that $\mathbf{C}(\theta)\neq \emptyset$ for all $\theta \in\mathbf{\Theta}$. Define $f^*(\theta) = \sup\{f(x,\theta):x\in \mathbf{C}(\theta)\}$ and the correspondence $\mathbf{C}^*: \mathbf{\Theta} \rightrightarrows \mathbf{X}$ by $\mathbf{C}^*(\theta) = \{x \in \mathbf{C}(\theta): f(x,\theta) = f^*(\theta)\}$. If $\mathbf{C}$ is continuous at $\theta$, then $f^*$ is continuous and $\mathbf{C}^*$ is upper hemi-continuous, non-empty, and compact valued. As a consequence, the $\sup$ can be replaced by $\max$. 
\end{theorem}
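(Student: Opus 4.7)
The plan is to decompose the statement into four assertions and prove each from the hypotheses that $f$ is jointly continuous, $\mathbf{C}$ is nonempty and compact-valued, and $\mathbf{C}$ is both upper and lower hemi-continuous at $\theta$ (the two ingredients of ``continuous'' at $\theta$). The four assertions are: (i) $\mathbf{C}^*(\theta)$ is nonempty (hence $\sup=\max$); (ii) $\mathbf{C}^*(\theta)$ is compact; (iii) $f^*$ is continuous at $\theta$; (iv) $\mathbf{C}^*$ is upper hemi-continuous at $\theta$. Since the ambient spaces are only assumed topological, I will phrase limiting arguments using nets (or, equivalently, generalized subsequences extracted from convergent nets), so that the proof does not secretly rely on first-countability; when the spaces are metric, these arguments specialize to ordinary sequences.

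For (i) and (ii), fix $\theta$ and restrict $f$ to the section $f(\cdot,\theta):\mathbf{C}(\theta)\to\mathbb{R}$, which is continuous on the compact nonempty set $\mathbf{C}(\theta)$. Weierstrass's extreme value theorem then gives a maximizer, so $\mathbf{C}^*(\theta)\neq\emptyset$ and $\sup$ is attained. Moreover, $\mathbf{C}^*(\theta)=\{x\in\mathbf{C}(\theta):f(x,\theta)=f^*(\theta)\}$ is the preimage of the closed singleton $\{f^*(\theta)\}$ under a continuous map, intersected with the compact set $\mathbf{C}(\theta)$; hence it is closed in a compact set, and therefore compact.

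For (iii), I prove upper and lower semicontinuity of $f^*$ at $\theta$ separately, using the two halves of the continuity of $\mathbf{C}$. For upper semicontinuity, let $\theta_\alpha\to\theta$ and pick $x_\alpha\in\mathbf{C}^*(\theta_\alpha)$; upper hemi-continuity combined with the fact that $\mathbf{C}(\theta)$ is compact allows one to extract a convergent subnet $x_{\alpha'}\to x\in\mathbf{C}(\theta)$, and joint continuity of $f$ gives $f^*(\theta_{\alpha'})=f(x_{\alpha'},\theta_{\alpha'})\to f(x,\theta)\le f^*(\theta)$, establishing $\limsup_\alpha f^*(\theta_\alpha)\le f^*(\theta)$. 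For lower semicontinuity, pick any maximizer $x^\star\in\mathbf{C}^*(\theta)$; lower hemi-continuity of $\mathbf{C}$ yields a net $y_\alpha\in\mathbf{C}(\theta_\alpha)$ with $y_\alpha\to x^\star$, so $f^*(\theta_\alpha)\ge f(y_\alpha,\theta_\alpha)\to f(x^\star,\theta)=f^*(\theta)$, giving $\liminf_\alpha f^*(\theta_\alpha)\ge f^*(\theta)$.

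For (iv), let $\theta_\alpha\to\theta$ and $x_\alpha\in\mathbf{C}^*(\theta_\alpha)$. By upper hemi-continuity plus compactness of $\mathbf{C}(\theta)$, extract a subnet $x_{\alpha'}\to x\in\mathbf{C}(\theta)$. Joint continuity of $f$ gives $f(x_{\alpha'},\theta_{\alpha'})\to f(x,\theta)$, while part (iii) gives $f(x_{\alpha'},\theta_{\alpha'})=f^*(\theta_{\alpha'})\to f^*(\theta)$; uniqueness of limits forces $f(x,\theta)=f^*(\theta)$, i.e.\ $x\in\mathbf{C}^*(\theta)$. This is precisely the closed-graph-with-compact-values characterization of upper hemi-continuity at $\theta$.

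The main obstacle, and the place where care is required, is handling the general topological setting: the textbook proofs in economics typically assume metrizability and argue with sequences, but here ``upper hemi-continuity'' must be taken in the sense that every net in the correspondence has a subnet converging to a point of $\mathbf{C}(\theta)$ (guaranteed by the compact-valuedness together with the closed-graph condition). I would therefore pin down the definitions of upper and lower hemi-continuity used at the outset, and verify that upper hemi-continuity plus compact values is exactly what powers the subnet-extraction step; once those definitions are fixed, the three parts above fit together cleanly, and the ``$\sup=\max$'' conclusion is just the non-emptiness of $\mathbf{C}^*(\theta)$ from part (i).
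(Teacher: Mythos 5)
The paper does not actually prove this statement: it is Berge's Maximum Theorem, quoted verbatim from the cited reference and included in the appendix only ``for the sake of completeness,'' so there is no in-paper argument to compare against. Your proposal is the standard textbook proof of Berge's theorem, and it is essentially correct: Weierstrass on the compact section gives non-emptiness and $\sup=\max$; $\mathbf{C}^*(\theta)$ is a closed subset of the compact $\mathbf{C}(\theta)$; upper and lower semicontinuity of $f^*$ follow from the two halves of continuity of $\mathbf{C}$; and the closed-limit argument plus the subnet characterization of upper hemi-continuity for compact-valued correspondences handles $\mathbf{C}^*$. Two points deserve the care you already flag. First, in the upper-semicontinuity step the subnet extraction should be applied to a subnet along which $f^*(\theta_\alpha)$ realizes (or exceeds by $\varepsilon$) the limsup, otherwise the displayed inequality only bounds the limit along the particular subnet you constructed. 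Second, lower hemi-continuity does not in general hand you a net $y_\alpha\in\mathbf{C}(\theta_\alpha)$ indexed by the \emph{same} directed set converging to $x^\star$; one must pass to a subnet (or a refined index set of pairs (index, neighborhood of $x^\star$)), which suffices for the liminf contradiction argument. Also note that the subnet-extraction step from upper hemi-continuity uses the neighborhood definition together with compact values (cover $\mathbf{C}(\theta)$ by finitely many neighborhoods avoiding cluster points to reach a contradiction), and none of the steps require separation axioms beyond $\mathbb{R}$ being Hausdorff, so your net-based formulation does deliver the theorem at the stated level of generality. With those definitional points pinned down as you propose, the argument is complete.
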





Equipped with these theorems, we are ready to show Theorem \ref{thm:existence}.

\thmleo*

\begin{proof}

Define $\Gamma:= \{\z \in \mathbb{R}_+^{|A|} \mid \mathbf{1}^T\z = B \} $. We construct the following correspondence:


$$\mathcal{L}: \prod_{v=1}^n[0,1]^{|A \cup \{\emptyset\}|}\times \Gamma \times \prod_{v=1}^n[0,1]^{|A \cup \{\emptyset\}|} \rightrightarrows \prod_{v=1}^n[0,1]^{|A \cup \{\emptyset\}|} \times \Gamma  \times\prod_{v=1}^n[0,1]^{|A \cup \{\emptyset\}|}$$
with
$$\mathcal{L}\left((\{\x_{v}\}_{v\in V},\y,\{\prices_{v}\}_{v\in V})\right) = (\{\x'_{v}\}_{v\in V},Y',\{\prices'_{v}\}_{v\in V})$$ where
\begin{subequations}
\begin{align}
x'_{v,a} &= \Pr[\mathcal{D}_{v}(\prices_{v}, \randbud{}) =a] & \forall v \in V, a \in A \cup \{\emptyset\}, \label{eq:fp-x} \\
Y' &= \arg \max_{\z \in \Gamma} \;\; \left(\sum_{v\in V} \prices_v\right)^T \z, \label{eq:fp-y} \\ 
p'_{v,a} &= 
\begin{cases}
\max\left\{\min\left\{1,p_{v,a}+(x'_{v,a}-y_a)\right\} ,0 \right\} \text{  if } a \in A,\\
0 \text{  if } a = \emptyset,
\end{cases} & \forall v \in V, a \in A \cup \{\emptyset\}. \label{eq:fp-p}
\end{align}
\end{subequations}
Our goal is to show that
\begin{enumerate}[label=(\roman*)]
    \item \label{pf-leo-1} By Theorem \ref{thm:maximum}, the correspondence $\prod_{v=1}^n[0,1]^{|A \cup \{\emptyset\}|} \rightrightarrows \Gamma$ (from $\prices$ to $Y'$) is upper hemi-continuous, non-empty, and compact valued. Furthermore, $Y'$ is convex, bounded, and non-empty.
    \item \label{pf-leo-2} By Theorem \ref{thm:kakutani}, there exists a \emph{fixed point} $(\{\x_{v}\}_{v\in V},\y,\{\prices_{v}\}_{v\in V})$ such that 
    $$(\{\x_{v}\}_{v\in V},\y,\{\prices_{v}\}_{v\in V}) \in \mathcal{L}(\{\x_{v}\}_{v\in V},\y,\{\prices_{v}\}_{v\in V}).$$
    \item \label{pf-leo-3} The fixed point $(\{\x_{v}\}_{v\in V},\y,\{\prices_{v}\}_{v\in V})$ satisfies all the conditions of a LEO in Definition \ref{def:LEO}. 
\end{enumerate}

To show \ref{pf-leo-1}, we employ Theorem \ref{thm:maximum} with $\mathbf{X} = \Gamma$ (the space of the producer's decision) and $\mathbf{\Theta} = \prod_{v=1}^n[0,1]^{|A \cup \{\emptyset\}|}$ (the space of prices $\prices$). Let $f(\z,\prices) = \left(\sum_{v\in V} \prices_v\right)^T \z$ be the producer revenue given the producer decision $\z \in \Gamma$ and the prices $\prices \in \mathbf{\Theta}$. For any prices $\prices \in \mathbf{\Theta}$, any producer decision in $\Gamma$ is feasible, so $\mathbf{C}(\prices) = \Gamma$ for any $\prices \in \mathbf{\Theta}$. $\mathbf{C}$ is a compact valued correspondence since $\Gamma$ is compact. $f^*$ maps $\prices$ to the maximum producer revenue, thus, given prices $\prices$, we have $\mathbf{C}^*(\prices) = \arg \max_{\z \in \Gamma} \;\; \left(\sum_{v\in V} \prices_v\right)^T \z$. That is, $\mathbf{C}^*(\prices)$ is the set of the optimal producer decision. By Theorem \ref{thm:maximum}, $\mathbf{C}$ is continuous at any $\prices \in \mathbf{\Theta}$, so $f^*$ is continuous and $\mathbf{C}^*$ is upper hemi-continuous, non-empty, and compact valued. Furthermore, $\mathbf{C}^*(\prices)$ is convex, non-empty, and bounded for any $\prices \in \mathbf{\Theta}$ since it is the set of the optimal solutions of a linear program with a bounded and non-empty feasible region.

To show \ref{pf-leo-2}, we consider $S = \prod_{v=1}^n[0,1]^{|A \cup \{\emptyset\}|}\times \Gamma \times \prod_{v=1}^n[0,1]^{|A \cup \{\emptyset\}|}$ and let $\psi = \mathcal{L}$. The range of $\psi$ is restricted to having $\x'$ and $\prices'$ as a point and $Y' \subseteq \Gamma$ as a set. Clearly, $S$ is non-empty, compact, and convex. For all $s \in S$, $\psi(s)$ is non-empty and convex since $Y'$ is convex, non-empty, and bounded. The remaining is to show that $\mathcal{L}$ is upper hemi-continuous. To show this, we prove that (a) the mapping from $s \in S$ to $\x'$ is continuous, (b) the mapping from $s \in S$ to $Y'$ is upper hemi-continuous, and (c) the mapping from $s \in S$ to $\prices'$ is continuous.
(b) directly holds from \ref{pf-leo-1} and (c) holds once (a) holds since $p'_{v,a}$ defined in \eqref{eq:fp-p} is a continuous function of $\x'$, $\y$, and $\prices$. We focus on showing (a). Let $F_{\randbud{}}$ be the CDF for the budget given to a voter $v \in V$. We have that
\begin{align*}
    x'_{v,a} &= 
\Pr[\mathcal{D}_{v}(\prices_{v}, \randbud{v}) =a] \\
&=  
\begin{cases}
    \underset{a' \in A \cup \{\emptyset\}: a'\succ_v a}{\min}\bigg( F_{\randbud{}}(p_{v,a'}) - F_{\randbud{}}(p_{v,a})\bigg)^+ & \text{if } a \in A \cup \{\emptyset\} \text{ is not the top-ranked in } \succ_v, \\
    1 - F_{\randbud{}}(p_{v,a}) & \text{if } a \in A \cup \{\emptyset\} \text{ is the top-ranked in } \succ_v.
\end{cases}
\end{align*}
Intuitively, if $a$ is the top-ranked in $\succ_v$, then $v$ demands $a$ as long as $p_{v,a}$ is at most the budget given to $v$. The probability of having such a budget is $1 - F_{\randbud{}}(p_{v,a})$. If $a$ is the second-ranked and $a'$ is the top-ranked in $\succ_v$, then $v$ demands $a$ if the budget is below $p_{v,a'}$ and at least $p_{v,a}$. This probability is captured by $\left(F_{\randbud{}}(p_{v,a'}) - F_{\randbud{}}(p_{v,a})\right)^+$. Note that if $p_{v,a'} < p_{v,a}$, then $v$ never demands $a$ since it is more expensive and less preferred. Applying analogous reasoning results in the closed form above for the probability that $v$ demands $a$ under a random budget $\randbud{}$. More specifically, consider $a' \succ_v a$, the probability that $a$ is demanded by $v$ under $\prices_v$ is zero when there exists $p_{v,a'} \le p_{v,a}$. Otherwise, it is the minimum difference between the CDF at $p_{v,a'}$ and $p_{v,a}$. Since $F_{\randbud{}}$ and the closed form above are continuous, (a) holds. By Theorem \ref{thm:kakutani}, there exists $s \in S$ such that $s \in \mathcal{L}(s)$.

To show \ref{pf-leo-3}, we prove that any fixed point $s=(\{\x_{v}\}_{v\in V},\y,\{\prices_{v}\}_{v\in V})$ such that $s \in \mathcal{L}(s)$ satisfies conditions \ref{leo-1}, \ref{leo-2}, and \ref{leo-3} in Definition \ref{def:LEO}. Conditions \ref{leo-1} and \ref{leo-3} follow directly by \eqref{eq:fp-x} and \eqref{eq:fp-y}, respectively. For condition \ref{leo-2}, we first show that $x_{v,a} \le y_a$ for all $a \in A \cup \{\emptyset\}$ and $v \in V$. Suppose for the sake of contradiction that $x_{v,a} > y_a$ for some $a \in A \cup \{\emptyset\}$ and $v \in V$. Since $x_{v,a} - y_a > 0$ and $s$ is a fixed point, from \eqref{eq:fp-p}, we must have $p_{v,a} = 1$. Consequently, $v$ cannot afford $a$ with a strictly positive probability unless $\Pr[\randbud{}=1]$ is strictly positive, violating the fact that $\Pr[\randbud{}=1]=0$. Therefore, $x_{v,a} \le y_a$. Now suppose $x_{v,a} < y_a$, then from \eqref{eq:fp-p}, the fixed point $s$ forces $p_{v,a} = 0$. Hence, $s=(\{\x_{v}\}_{v\in V},\y,\{\prices_{v}\}_{v\in V})$ satisfies all the conditions of Definition \ref{def:LEO}, so $s$ is a LEO. 
\end{proof}

\section{Omitted Proofs in Section \ref{sec:generalized_cond}} \label{app:gen_cond}

\subsection{Proof of Theorem \ref{thm:k-alpha-lower}} \label{app:pf-thm:k-alpha-lower}

\thmgenerallowerbound*

\begin{proof}[]

Let $\ell$ be a large integer. We construct an election with $(k+1) \ell$ voters and $(k+1) \ell$ candidates. For ease of presentation, we define $[\ell] = \{0,1,...,\ell-1\}$ as the residue classes modulo $\ell$ throughout the proof. We will work with modulo arithmetic throughout this proof. 
    
Each voter and each candidate are associated with a unique tuple in $V = [k+1] \times [\ell]$ and $A = [k+1] \times [\ell]$, respectively. The voter $(p,q)$ ranks the candidates in reverse lexicographical order after shifting the first entry down by $p$ and the second entry down by $q$, modulo $k+1$ and $\ell$, respectively. That is, $(x,y) \succ_{(p,q)} (x',y')$ if
    \begin{itemize}
        \item $x-p < x'-p$, or
        \item $x=x'$ and $y-q < y'-q$.
    \end{itemize}
We show an example below.

\begin{table}[h]
    \centering
    \begin{tabular}{l*{20}{C}}
        $v_0$ & $v_1$ & $v_2$ & $v_3$ & $v_4$ & $v_5$ & $v_6$ & $v_7$ & $v_8$ & $v_9$ & $v_{10}$ & $v_{11}$ & $v_{12}$ & $v_{13}$ & $v_{14}$ & $v_{15}$ & $v_{16}$ & $v_{17}$ & $v_{18}$ & $v_{19}$\\
        \hline
        \textcolor{red}{0} & \textcolor{red}{1} & \textcolor{red}{2} & \textcolor{red}{3} & \textcolor{red}{4} & \textcolor{blue}{5} & \textcolor{blue}{6} & \textcolor{blue}{7} & \textcolor{blue}{8} & \textcolor{blue}{9} & \textcolor{ForestGreen}{10} & \textcolor{ForestGreen}{11} & \textcolor{ForestGreen}{12} & \textcolor{ForestGreen}{13} & \textcolor{ForestGreen}{14} & \textcolor{BurntOrange}{15} & \textcolor{BurntOrange}{16} & \textcolor{BurntOrange}{17} & \textcolor{BurntOrange}{18} & \textcolor{BurntOrange}{19} \\
        \textcolor{red}{1} & \textcolor{red}{2} & \textcolor{red}{3} & \textcolor{red}{4} & \textcolor{red}{0} & \textcolor{blue}{6} & \textcolor{blue}{7} & \textcolor{blue}{8} & \textcolor{blue}{9} & \textcolor{blue}{5} & \textcolor{ForestGreen}{11} & \textcolor{ForestGreen}{12} & \textcolor{ForestGreen}{13} & \textcolor{ForestGreen}{14} & \textcolor{ForestGreen}{10} & \textcolor{BurntOrange}{16} & \textcolor{BurntOrange}{17} & \textcolor{BurntOrange}{18} & \textcolor{BurntOrange}{19} & \textcolor{BurntOrange}{15} \\
        \textcolor{red}{2} & \textcolor{red}{3} & \textcolor{red}{4} & \textcolor{red}{0} & \textcolor{red}{1} & \textcolor{blue}{7} & \textcolor{blue}{8} & \textcolor{blue}{9} & \textcolor{blue}{5} &\textcolor{blue}{6} &  \textcolor{ForestGreen}{12} & \textcolor{ForestGreen}{13} & \textcolor{ForestGreen}{14} & \textcolor{ForestGreen}{10} & \textcolor{ForestGreen}{11} & \textcolor{BurntOrange}{17} & \textcolor{BurntOrange}{18} & \textcolor{BurntOrange}{19} & \textcolor{BurntOrange}{15} & \textcolor{BurntOrange}{16} \\
        \textcolor{red}{3} & \textcolor{red}{4} & \textcolor{red}{0} & \textcolor{red}{1} & \textcolor{red}{2} &  \textcolor{blue}{8} & \textcolor{blue}{9} & \textcolor{blue}{5} & \textcolor{blue}{6} & \textcolor{blue}{7} &  \textcolor{ForestGreen}{13} & \textcolor{ForestGreen}{14} & \textcolor{ForestGreen}{10} & \textcolor{ForestGreen}{11} & \textcolor{ForestGreen}{12} & \textcolor{BurntOrange}{18} & \textcolor{BurntOrange}{19} & \textcolor{BurntOrange}{15} & \textcolor{BurntOrange}{16} & \textcolor{BurntOrange}{17} \\
        \textcolor{red}{4} & \textcolor{red}{0} & \textcolor{red}{1} & \textcolor{red}{2} &  \textcolor{red}{3} & \textcolor{blue}{9} & \textcolor{blue}{5} & \textcolor{blue}{6} & \textcolor{blue}{7} & \textcolor{blue}{8} &  \textcolor{ForestGreen}{14} & \textcolor{ForestGreen}{10} & \textcolor{ForestGreen}{11} & \textcolor{ForestGreen}{12} & \textcolor{ForestGreen}{13} & \textcolor{BurntOrange}{19} & \textcolor{BurntOrange}{15} & \textcolor{BurntOrange}{16} & \textcolor{BurntOrange}{17} & \textcolor{BurntOrange}{18} \\
        
        \textcolor{blue}{5} & \textcolor{blue}{6} & \textcolor{blue}{7} & \textcolor{blue}{8} & \textcolor{blue}{9} & \textcolor{ForestGreen}{10} & \textcolor{ForestGreen}{11} & \textcolor{ForestGreen}{12} & \textcolor{ForestGreen}{13} & \textcolor{ForestGreen}{14} & \textcolor{BurntOrange}{15} & \textcolor{BurntOrange}{16} & \textcolor{BurntOrange}{17} & \textcolor{BurntOrange}{18} & \textcolor{BurntOrange}{19} & \textcolor{red}{0} & \textcolor{red}{1} & \textcolor{red}{2} & \textcolor{red}{3} & \textcolor{red}{4} \\
        \textcolor{blue}{6} & \textcolor{blue}{7} & \textcolor{blue}{8} & \textcolor{blue}{9} & \textcolor{blue}{5} & \textcolor{ForestGreen}{11} & \textcolor{ForestGreen}{12} & \textcolor{ForestGreen}{13} & \textcolor{ForestGreen}{14} & \textcolor{ForestGreen}{10} & \textcolor{BurntOrange}{16} & \textcolor{BurntOrange}{17} & \textcolor{BurntOrange}{18} & \textcolor{BurntOrange}{19} & \textcolor{BurntOrange}{15} & \textcolor{red}{1} & \textcolor{red}{2} & \textcolor{red}{3} & \textcolor{red}{4} & \textcolor{red}{0} \\
        \textcolor{blue}{7} & \textcolor{blue}{8} & \textcolor{blue}{9} & \textcolor{blue}{5} & \textcolor{blue}{6} &  \textcolor{ForestGreen}{12} & \textcolor{ForestGreen}{13} & \textcolor{ForestGreen}{14} & \textcolor{ForestGreen}{10} & \textcolor{ForestGreen}{11} & \textcolor{BurntOrange}{17} & \textcolor{BurntOrange}{18} & \textcolor{BurntOrange}{19} & \textcolor{BurntOrange}{15} & \textcolor{BurntOrange}{16} & \textcolor{red}{2} & \textcolor{red}{3} & \textcolor{red}{4} & \textcolor{red}{0} & \textcolor{red}{1} \\
        \textcolor{blue}{8} & \textcolor{blue}{9} & \textcolor{blue}{5} & \textcolor{blue}{6} & \textcolor{blue}{7} &  \textcolor{ForestGreen}{13} & \textcolor{ForestGreen}{14} & \textcolor{ForestGreen}{10} & \textcolor{ForestGreen}{11} & \textcolor{ForestGreen}{12} & \textcolor{BurntOrange}{18} & \textcolor{BurntOrange}{19} & \textcolor{BurntOrange}{15} & \textcolor{BurntOrange}{16} & \textcolor{BurntOrange}{17} & \textcolor{red}{3} & \textcolor{red}{4} & \textcolor{red}{0} & \textcolor{red}{1} & \textcolor{red}{2} \\
        \textcolor{blue}{9} & \textcolor{blue}{5} & \textcolor{blue}{6} & \textcolor{blue}{7} & \textcolor{blue}{8} &  \textcolor{ForestGreen}{14} & \textcolor{ForestGreen}{10} & \textcolor{ForestGreen}{11} & \textcolor{ForestGreen}{12} & \textcolor{ForestGreen}{13} & \textcolor{BurntOrange}{19} & \textcolor{BurntOrange}{15} & \textcolor{BurntOrange}{16} & \textcolor{BurntOrange}{17} & \textcolor{BurntOrange}{18} & \textcolor{red}{4} & \textcolor{red}{0} & \textcolor{red}{1} & \textcolor{red}{2} &  \textcolor{red}{3} \\

        \textcolor{ForestGreen}{10} & \textcolor{ForestGreen}{11} & \textcolor{ForestGreen}{12} & \textcolor{ForestGreen}{13} & \textcolor{ForestGreen}{14} & \textcolor{BurntOrange}{15} & \textcolor{BurntOrange}{16} & \textcolor{BurntOrange}{17} & \textcolor{BurntOrange}{18} & \textcolor{BurntOrange}{19} & \textcolor{red}{0} & \textcolor{red}{1} & \textcolor{red}{2} & \textcolor{red}{3} & \textcolor{red}{4} & \textcolor{blue}{5} & \textcolor{blue}{6} & \textcolor{blue}{7} & \textcolor{blue}{8} & \textcolor{blue}{9} \\
        \textcolor{ForestGreen}{11} & \textcolor{ForestGreen}{12} & \textcolor{ForestGreen}{13} & \textcolor{ForestGreen}{14} & \textcolor{ForestGreen}{10} & \textcolor{BurntOrange}{16} & \textcolor{BurntOrange}{17} & \textcolor{BurntOrange}{18} & \textcolor{BurntOrange}{19} & \textcolor{BurntOrange}{15} & \textcolor{red}{1} & \textcolor{red}{2} & \textcolor{red}{3} & \textcolor{red}{4} & \textcolor{red}{0} & \textcolor{blue}{6} & \textcolor{blue}{7} & \textcolor{blue}{8} & \textcolor{blue}{9} & \textcolor{blue}{5} \\
        \textcolor{ForestGreen}{12} & \textcolor{ForestGreen}{13} & \textcolor{ForestGreen}{14} & \textcolor{ForestGreen}{10} & \textcolor{ForestGreen}{11} & \textcolor{BurntOrange}{17} & \textcolor{BurntOrange}{18} & \textcolor{BurntOrange}{19} & \textcolor{BurntOrange}{15} & \textcolor{BurntOrange}{16} & \textcolor{red}{2} & \textcolor{red}{3} & \textcolor{red}{4} & \textcolor{red}{0} & \textcolor{red}{1} & \textcolor{blue}{7} & \textcolor{blue}{8} & \textcolor{blue}{9} & \textcolor{blue}{5} & \textcolor{blue}{6} \\
        \textcolor{ForestGreen}{13} & \textcolor{ForestGreen}{14} & \textcolor{ForestGreen}{10} & \textcolor{ForestGreen}{11} & \textcolor{ForestGreen}{12} & \textcolor{BurntOrange}{18} & \textcolor{BurntOrange}{19} & \textcolor{BurntOrange}{15} & \textcolor{BurntOrange}{16} & \textcolor{BurntOrange}{17} & \textcolor{red}{3} & \textcolor{red}{4} & \textcolor{red}{0} & \textcolor{red}{1} & \textcolor{red}{2} & \textcolor{blue}{8} & \textcolor{blue}{9} & \textcolor{blue}{5} & \textcolor{blue}{6} & \textcolor{blue}{7} \\
        \textcolor{ForestGreen}{14} & \textcolor{ForestGreen}{10} & \textcolor{ForestGreen}{11} & \textcolor{ForestGreen}{12} & \textcolor{ForestGreen}{13} & \textcolor{BurntOrange}{19} & \textcolor{BurntOrange}{15} & \textcolor{BurntOrange}{16} & \textcolor{BurntOrange}{17} & \textcolor{BurntOrange}{18} & \textcolor{red}{4} & \textcolor{red}{0} & \textcolor{red}{1} & \textcolor{red}{2} &  \textcolor{red}{3} & \textcolor{blue}{9} & \textcolor{blue}{5} & \textcolor{blue}{6} & \textcolor{blue}{7} & \textcolor{blue}{8} \\

        \textcolor{BurntOrange}{15} & \textcolor{BurntOrange}{16} & \textcolor{BurntOrange}{17} & \textcolor{BurntOrange}{18} & \textcolor{BurntOrange}{19} & \textcolor{red}{0} & \textcolor{red}{1} & \textcolor{red}{2} & \textcolor{red}{3} & \textcolor{red}{4} & \textcolor{blue}{5} & \textcolor{blue}{6} & \textcolor{blue}{7} & \textcolor{blue}{8} & \textcolor{blue}{9} & \textcolor{ForestGreen}{10} & \textcolor{ForestGreen}{11} & \textcolor{ForestGreen}{12} & \textcolor{ForestGreen}{13} & \textcolor{ForestGreen}{14} \\
        \textcolor{BurntOrange}{16} & \textcolor{BurntOrange}{17} & \textcolor{BurntOrange}{18} & \textcolor{BurntOrange}{19} & \textcolor{BurntOrange}{15} & \textcolor{red}{1} & \textcolor{red}{2} & \textcolor{red}{3} & \textcolor{red}{4} & \textcolor{red}{0} & \textcolor{blue}{6} & \textcolor{blue}{7} & \textcolor{blue}{8} & \textcolor{blue}{9} & \textcolor{blue}{5} & \textcolor{ForestGreen}{11} & \textcolor{ForestGreen}{12} & \textcolor{ForestGreen}{13} & \textcolor{ForestGreen}{14} & \textcolor{ForestGreen}{10} \\
        \textcolor{BurntOrange}{17} & \textcolor{BurntOrange}{18} & \textcolor{BurntOrange}{19} & \textcolor{BurntOrange}{15} & \textcolor{BurntOrange}{16} & \textcolor{red}{2} & \textcolor{red}{3} & \textcolor{red}{4} & \textcolor{red}{0} & \textcolor{red}{1} & \textcolor{blue}{7} & \textcolor{blue}{8} & \textcolor{blue}{9} & \textcolor{blue}{5} & \textcolor{blue}{6} & \textcolor{ForestGreen}{12} & \textcolor{ForestGreen}{13} & \textcolor{ForestGreen}{14} & \textcolor{ForestGreen}{10} & \textcolor{ForestGreen}{11} \\
        \textcolor{BurntOrange}{18} & \textcolor{BurntOrange}{19} & \textcolor{BurntOrange}{15} & \textcolor{BurntOrange}{16} & \textcolor{BurntOrange}{17} & \textcolor{red}{3} & \textcolor{red}{4} & \textcolor{red}{0} & \textcolor{red}{1} & \textcolor{red}{2} & \textcolor{blue}{8} & \textcolor{blue}{9} & \textcolor{blue}{5} & \textcolor{blue}{6} & \textcolor{blue}{7} & \textcolor{ForestGreen}{13} & \textcolor{ForestGreen}{14} & \textcolor{ForestGreen}{10} & \textcolor{ForestGreen}{11} & \textcolor{ForestGreen}{12} \\
        \textcolor{BurntOrange}{19} & \textcolor{BurntOrange}{15} & \textcolor{BurntOrange}{16} & \textcolor{BurntOrange}{17} & \textcolor{BurntOrange}{18} & \textcolor{red}{4} & \textcolor{red}{0} & \textcolor{red}{1} & \textcolor{red}{2} &  \textcolor{red}{3} & \textcolor{blue}{9} & \textcolor{blue}{5} & \textcolor{blue}{6} & \textcolor{blue}{7} & \textcolor{blue}{8} & \textcolor{ForestGreen}{14} & \textcolor{ForestGreen}{10} & \textcolor{ForestGreen}{11} & \textcolor{ForestGreen}{12} & \textcolor{ForestGreen}{13} \\
    \end{tabular}
    \caption{In this example, $k=3$ and $\ell=5$. For $(p,q) \in [k+1] \times [\ell]$, $v_i$ denotes the voter $(p,q)$ with $i=\ell p + q$. For $(x,y) \in [k+1] \times [\ell]$ the candidate $(x,y)$ is labeled as $i = \ell x + y$. The ranking goes from the top (highest ranked) to the bottom (lowest ranked). This instance is adapted from \cite{sixCandidates}.}
    \label{tab:my_label}
\end{table}

We claim that for any committee $C$ of size $k$, there exists a candidate $a \in A \setminus C$ such that at least a $\frac{1+t}{1+k}\left(1-\frac{1}{\ell}\right)$ fraction of voters $v \in [k+1] \times [\ell]$ is such that $a \succ^t_v C$.

We use $s(x)$ to denote the number of candidates in $C$ with the form $(x,\cdot)$. By the fact that $|C| = k$, we have that $\sum_{x=0}^k s(x) = k$. The following states that there exists a small set of neighboring candidates (with the form from $(x-t,\cdot)$ to $(x,\cdot)$) in $C$.

    \begin{claim} \label{cl:pig}
        For any committee $C$ of size $k$, there exists $x \in [k+1]$ such that either
        \begin{enumerate}[label=(\roman*)]
            \item \label{pig-1} $s(x-t) + s(x-t+1) + ... + s(x-1) + s(x) < t$ or
            \item \label{pig-2} $s(x-t) + s(x-t+1) + ... + s(x-1) + s(x) = t$ and $s(x) > 0$.
        \end{enumerate}
    \end{claim}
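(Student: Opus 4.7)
The plan is to proceed by contradiction. Introduce the cyclic window sum
\[
W_x := s(x-t) + s(x-t+1) + \cdots + s(x-1) + s(x), \qquad x \in [k+1],
\]
and observe that $\sum_{x} W_x = (t+1)\sum_{y} s(y) = (t+1)k$, so the average of $W_x$ equals $(t+1)k/(k+1) < t+1$. Hence some $x$ satisfies $W_x \le t$. If $W_x < t$ at such a position, case~(i) holds; otherwise the minimum window sum equals $t$, and, negating~(ii), every $x$ with $W_x = t$ has $s(x) = 0$. It suffices to derive a contradiction from these two assumptions.

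Let $T := \{x : W_x = t\}$ and $B := [k+1] \setminus T$. Using $W_x \ge t+1$ on $B$, the estimate $(t+1)k = \sum_x W_x \ge t|T| + (t+1)|B| = t(k+1) + |B|$ forces $|B| \le k - t$, so $|T| \ge t+1$. Since $s$ vanishes on $T$ by assumption, $\mathrm{supp}(s) \subseteq B$, so the support of $s$ has size at most $k - t$ while carrying total mass $k$. The key structural observation comes from the discrete derivative $W_{x+1} - W_x = s(x+1) - s(x-t)$: when $x, x+1 \in T$, both sides equal $0$ and $s(x+1) = 0$, so $s(x-t) = 0$. Iterating along any putative run $x, x+1, \ldots, x+t \in T$ yields $s(x-t) = \cdots = s(x-1) = 0$ together with $s(x) = 0$, collapsing $W_x$ to $0$, a contradiction. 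Thus $T$ contains no $t+1$ cyclically consecutive indices, which in turn gives $|B| \ge \lceil (k+1)/(t+1)\rceil$.

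The final step, where the main difficulty lies, is to combine these structural constraints with the integrality of $s$ to rule out the remaining scenarios. When $k \le t+1$ the two bounds $|B| \le k-t$ and $|B| \ge \lceil (k+1)/(t+1)\rceil$ are already in contradiction, handling the smallest cases immediately. For larger $k$, the plan is to use the exact identity $s(x-t) + \cdots + s(x-1) = t$ for each $x \in T$, partition $T$ into maximal cyclic runs separated by elements of $B$, and propagate the zero-forcing along each run to further restrict $s$ on $B$. The main obstacle I anticipate is verifying algebraically that the cyclic convolution equation $s \ast \mathbf{1}_{[0,t]} = W$ admits no nonnegative integer solution compatible with $s|_T = 0$ and $\sum s = k$; closing this appears to require either inverting the convolution on $\mathbb{Z}/(k+1)\mathbb{Z}$ (clean when $\gcd(t+1, k+1) = 1$, where the unique solution forces non-integer or negative entries), or a careful case analysis on the cyclic run structure of $T$ modeled on the explicit computations that already work for $k \in \{t+1, t+2\}$.
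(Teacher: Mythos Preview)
The paper's proof is a single averaging step: under the negation it bounds $\sum_x W_x \ge (t+1) + kt$ (one term $\ge t+1$ from some $x$ with $s(x)>0$, the remaining $k$ terms $\ge t$) and declares this a contradiction with $\sum_x W_x = k(t+1)$. As written this is not a contradiction---the inequality simplifies to $k \ge t+1$---so your instinct that the bare averaging is insufficient is well-founded; the paper's own argument appears to have a gap here.

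That said, your proposal does not close the gap either. The bounds $|B|\le k-t$ and $|B|\ge \lceil(k+1)/(t+1)\rceil$ you derive are correct but, as you acknowledge, only collide when $k\le t+1$. Your ``final step'' is a description of two possible strategies, not a proof. The convolution-inversion route is problematic because the kernel $\mathbf 1_{[0,t]}$ fails to be invertible on $\mathbb Z/(k+1)\mathbb Z$ exactly when $(t+1)\mid(k+1)$, and even in the coprime case you have not shown that the unique preimage must be non-integral or negative. The run-structure case analysis has no visible termination: the information you have extracted (no $T$-run of length $t+1$, $s\equiv 0$ on $T$, the zero-forcing along runs) is compatible with many configurations, and nothing in your outline isolates a contradiction for general $k$. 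To finish you need an inequality that genuinely exploits more than the pointwise bound $W_x\ge t+[s(x)>0]$---for instance, combining the constraints $W_z\ge t$ at zeros with $W_y\ge t+1$ at positives through a weighted double count that tracks which positive positions supply each zero window---but as it stands the proposal identifies the difficulty without resolving it.
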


    \begin{proof}
        For the sake of contradiction, suppose there is no such $x$. This assumption implies that for every $x \in [k+1]$, either
        \begin{itemize}
            \item $s(x) > 0$ and $\sum_{i=0}^t s(x-i) \ge t+1$, or
            \item $s(x) = 0$ and $\sum_{i=0}^t s(x-i) \ge t$.
        \end{itemize}
        From $\sum_{x=0}^k s(x) = k$, we have that
        \begin{align*}
            k(t+1) &= \sum_{x=0}^k \sum_{i=0}^t s(x-i) \\
            &= \sum_{x \in [k+1] : s(x) > 0} \sum_{i=0}^t s(x-i) + \sum_{x \in [k+1] : s(x) = 0} \sum_{i=0}^t s(x-i) \\
            &\ge t+1 + kt
        \end{align*}
        which leads to a contradiction. The last inequality follows from the fact that there must be at least one $x \in [k+1]$ such that $s(x) > 0$.
    \end{proof}
    From Claim \ref{cl:pig}, we do a case analysis on $x \in [k+1]$ that satisfies either \ref{pig-1} or \ref{pig-2}.
    
    If case \ref{pig-1} occurs, then for voters $v$ in the form $(x-t,\cdot),(x-t+1,\cdot), ..., (x-1,\cdot), (x,\cdot)$, any candidate $a = (x,\cdot) \in A \setminus C$ is such that $a \succ^t_v C$. There is a $\frac{1+t}{1+k}$ fraction of such voters.

    If case \ref{pig-2} occurs, then we pick a candidate $a = (x,y-1)$ such that $(x,y) \in C$. Among the $\ell$ voters in the form $(x,\cdot)$, we have $\ell-1$ voters preferring $(x,y-1)$ to $(x,y)$. Since $s(x-t) + s(x-t+1) + ... + s(x-1) + s(x) = t$, for each voter form $(x-t,\cdot),(x-t+1,\cdot), ..., (x-1,\cdot)$, and $(x,\cdot)$, there is at least a $1-\frac{1}{\ell}$ fraction of voters $v$ such that $a \succ^t_v C$. 
    Overall, there is at least a $\frac{1+t}{1+k}\left(1-\frac{1}{\ell}\right)$ fraction of such voters.

    To conclude, for any $\alpha < \frac{1+t}{1+k}$, there exists $\ell$ such that $\alpha < \frac{1+t}{1+k}\left(1-\frac{1}{\ell}\right)$. Applying the construction above with such an $\ell$, the result follows.
    \end{proof}

\subsection{Proof of Theorem \ref{thm:sSLEO}} \label{app:pf-thm:sSLEO}

\thmssleo*

\begin{proof}
We use a strategy similar to Appendix \ref{app:LEO}.

Define $\Gamma:= \{\z \in \mathbb{R}_+^{|A|} \mid \mathbf{1}^T\z = B \text{ and } \z \le \mathbf{1}\} $. We construct the following correspondence:


$$\mathcal{L}: \prod_{v=1}^n[0,1]^{|A \cup \{\emptyset\}|}\times \Gamma \times \prod_{v=1}^n[0,1]^{|A \cup \{\emptyset\}|} \rightrightarrows \prod_{v=1}^n[0,1]^{|A \cup \{\emptyset\}|} \times \Gamma  \times\prod_{v=1}^n[0,1]^{|A \cup \{\emptyset\}|}$$
with
$$\mathcal{L}\left((\{\x_{v}\}_{v\in V},\y,\{\prices_{v}\}_{v\in V})\right) = (\{\x'_{v}\}_{v\in V},Y',\{\prices'_{v}\}_{v\in V})$$ where
\begin{subequations}
\begin{align}
x'_{v,a} &= \Pr[\mathcal{D}_{v}(\prices_{v}, \randbud{}) =a] & \forall v \in V, a \in A \cup \{\emptyset\}, \label{eq:sfp-x} \\
Y' &= \arg \max_{\z \in \Gamma} \;\; \left(\sum_{v\in V} \prices_v\right)^T \z, \label{eq:sfp-y} \\ 
p'_{v,a} &= 
\begin{cases}
\max\left\{\min\left\{1,p_{v,a}+(s \cdot x'_{v,a}-y_a)\right\} ,0 \right\} \text{  if } a \in A,\\
0 \text{  if } a = \emptyset,
\end{cases} & \forall v \in V, a \in A \cup \{\emptyset\}. \label{eq:sfp-p}
\end{align}
\end{subequations}
By analogous reasoning from Appendix \ref{app:LEO}, we have that
\begin{enumerate}[label=(\roman*)]
    \item \label{pf-ssleo-1} By Theorem \ref{thm:maximum}, the correspondence $\prod_{v=1}^n[0,1]^{|A \cup \{\emptyset\}|} \rightrightarrows \Gamma$ (from $\prices$ to $Y'$) is upper hemi-continuous, non-empty, and compact valued. Furthermore, $Y'$ is convex, bounded, and non-empty. These statements still hold because $\Gamma$ is compact and $Y'$ is the set of optimal solutions of a linear program with a bounded and non-empty feasible region as $B \le |A|$ and $\Gamma:= \{\z \in \mathbb{R}_+^{|A|} \mid \mathbf{1}^T\z = B \text{ and } \z \le \mathbf{1}\} $.
    \item By Theorem \ref{thm:kakutani}, there exists a \emph{fixed point} $(\{\x_{v}\}_{v\in V},\y,\{\prices_{v}\}_{v\in V})$ such that 
    $$(\{\x_{v}\}_{v\in V},\y,\{\prices_{v}\}_{v\in V}) \in \mathcal{L}(\{\x_{v}\}_{v\in V},\y,\{\prices_{v}\}_{v\in V}).$$
    This holds since $\mathcal{L}$ is upper hemi-continuous, following the same reasoning that the mappings from $(\{\x_{v}\}_{v\in V},\y,\{\prices_{v}\}_{v\in V})$ to $\x'$ and $\prices'$ are continuous, and to $Y'$ is upper hemi-continuous because of \ref{pf-ssleo-1}.
    \item The fixed point $(\{\x_{v}\}_{v\in V},\y,\{\prices_{v}\}_{v\in V})$ satisfies all the conditions of a LEO in Definition \ref{def:sSLEO}. This holds because conditions \ref{ssleo-1} and \ref{ssleo-3} follow directly by \eqref{eq:sfp-x} and \eqref{eq:sfp-p}, respectively. Condition \ref{ssleo-2} follows by the same reasoning in Appendix \ref{app:LEO}. The main difference is forcing $s \cdot x_{v,a} \le y_a$, and whenever $s \cdot x_{v,a} < y_a$, we must have $p_{v,a} = 0$.
\end{enumerate}
\end{proof}

\subsection{Proof of Claim \ref{clm:algorithm-correctness}} \label{app:pf-clm:algorithm-correctness}

\clmalgorithmcorrectness*

\begin{proof}

We first prove the correctness of Algorithm \ref{alg:iterative}, i.e., the output $C$ of the iterative algorithm is indeed a $(t, \frac{\alpha}{1-\epsilon})$-undominance set. To prove this, we notice that during the process, adding more members will never make any voter worse off. 

Let $i^*$ be the last index for which $B_i\geq \gamma t$ so that $B_{i^*+1}\leq \gamma t$. Then, assuming the worst case scenario in which all voters in $V_{i^*+1}$ will deviate to $a$, we have 
$$ \big | v\in V: a\succ_v^t C\big| \leq \bigg(\sum_{i=1}^{i^*}\big|v\in S(C_i,V_i): a\succ_v^t C_i \big|\bigg) + \big|V_{i^*+1}\big |.$$

As seen in the proof of Claim \ref{lemma:geometric}, we can bound $|V_{i^*+1}|$ as: 
\begin{align*}
|V_i^* + 1 | & \leq |V|\cdot \omega(\gamma,\tau)^{i^*+1}\\
& \leq \frac{B_0}{\tau^{i^*+1}}\cdot \frac{|V|}{B_0}\cdot [\omega(\gamma,\tau)\cdot \tau]^{i^*+1}\\
& \leq B_{i^*+1}\cdot \frac{|V|}{B_0}\cdot [\omega(\gamma,\tau)\cdot \tau]^{i^*+1}\\
& \leq \gamma t\cdot \frac{|V|}{B_0}\cdot [\omega(\gamma,\tau)\cot \tau]^{i^*+1}.
\end{align*}

Using Claim \ref{claim:deviating_people} and plugging in the value of $B_0$, we can obtain the following.  
\begin{align*}
\big | v\in V: a\succ_v^t C\big| 
& \leq \bigg(\sum_{i=1}^{i^*}\big|v\in S(C_i,V_i): a\succ_v^t C_i \big|\bigg) + \big|V_{i^*+1}\big |\\
& \leq  \sum_{i = 1}^{i^*+1} \frac{\gamma t}{1-\epsilon}\cdot [\tau\omega(\gamma,t)]^i\cdot \frac{|V|}{B_0}\\
& \leq \sum_{i = 1}^{\infty}\frac{\gamma t}{1-\epsilon}\cdot [\tau\omega(\gamma,t)]^i\cdot \frac{|V|}{B_0}\\
& \leq \frac{\gamma t}{1-\epsilon}\cdot \frac{|V|}{B_0} \cdot \sum_{i} [\tau\omega(\gamma,t)]^i\\
& \leq \frac{\gamma t}{1-\epsilon}\cdot \frac{|V|}{B_0} \cdot \frac{1}{1-\tau\omega(\gamma,t)}     \\
& \leq \frac{1}{1-\epsilon}\cdot \alpha n.
\end{align*}

Now we want to bound the size of $C$. Since we require $B_i\geq \gamma t$ during the execution of the algorithm, $|C_i|\leq \lceil B_i\rceil \leq B_i+1$. So, we obtain

\begin{align*}
|C|\leq \sum_{i: \gamma t \leq B_i \leq B_0}|C_i|\leq \sum_{i: \gamma t \leq B_i \leq B_0} (B_i+1) &= B_0 \sum_{i: \gamma t \leq B_i \leq B_0}\frac{1}{\tau^i}+\sum_{i: \gamma t \leq B_i \leq B_0} 1\\  
& \leq B_0\cdot \sum_{i=0}^{\infty} \frac{1}{\tau^i} + \log_{\tau}(\frac{B_0}{\gamma t})\\
&=  \frac{\gamma \tau}{\tau - 1 }\cdot \frac{1}{1-\omega(\gamma,t)\cdot \tau}\cdot \frac{t}{\alpha}+ \log_{\tau}(\frac{1}{1-\omega(\gamma, t)\cdot \tau}\cdot \frac{1}{\alpha})
\end{align*}
and this completes the proof.
\end{proof}

\subsection{Proof of Theorem \ref{thm:generalized-cond} (Omitted Part)} \label{app:pf-thm:generalized-cond}

\thmgeneralupperbound*

\begin{proof}

Now we show that for any integer $t\geq 2$, $\delta(t)\leq 4.75$. We can first verify that whenever $\alpha>0.045$, $s_2(\alpha, t)\leq 4.75\cdot \frac{t}{\alpha}$ for all $t \ge 2$, as shown in Figure \ref{fig:sawtooth}. 

So to prove $\delta(t)\le 4.75$ for $t\ge 2$, we assume $\alpha\leq 0.045$ and only consider $s_1$. Then we can verify that for any $\alpha\leq 0.045$, $\tau \geq 3.47$, $\gamma\geq 1$, and $t \ge 2$, we have 
$$\log_\tau(\frac{1}{1-\omega(\gamma, t)\cdot \tau}\cdot \frac{1}{\alpha})\leq 0.025\cdot \frac{\gamma\cdot \tau}{\tau-1}\cdot \frac{1}{1- \omega(\gamma,t)\tau}\cdot \frac{t}{\alpha} .$$
With this, we can compute that whenever $\alpha \le 0.045$ and $t \ge 2$: 
\begin{align*}
s_1(\alpha, t)& \leq  \min_{\gamma\geq 1, \tau\geq 3.47} \left\{\frac{\gamma \tau}{\tau-1}\cdot \frac{1}{1- \omega(\gamma,t)\tau}\cdot \frac{t}{\alpha} + \log_{\tau}(\frac{1}{1-\omega(\gamma, t) \tau}\cdot \frac{1}{\alpha})\right\}\\
& \leq 1.025\cdot  \min_{\gamma\geq 1, \tau\geq 3.47} \left\{\frac{\gamma \tau}{\tau-1}\cdot \frac{1}{1- \omega(\gamma,t)\tau}\cdot \frac{t}
{\alpha}\right\}\\
& \leq 1.025\cdot \frac{t}{\alpha}\cdot  \min_{\gamma\geq 1, \tau\geq 3.47} \left\{\frac{\gamma \tau}{\tau-1}\cdot \frac{1}{1- \omega(\gamma,t)\tau}\right\}\\
&\leq 4.75 \cdot \frac{t}{\alpha}.
\end{align*}

Then we need to show that $\delta(t)\rightarrow 1$ as $t\rightarrow \infty$. It suffices to show $s_2(\alpha, t)\rightarrow \frac{t}{\alpha}$ as $t\rightarrow \infty$. For any $\eta>0$, we need to find a $T\in \mathbb{N}$ s.t. for any $t\geq T$, $$s_2(\alpha, t)\leq \left\lceil\frac{(1+\eta) t}{\alpha}\right\rceil.$$
We first pick a $\gamma_0 = 1+\frac{\eta}{2}$ and for each $t$, let $B_t =  \left\lceil\frac{(1+\eta) t}{\alpha}\right\rceil$. Since $\omega(\gamma_0, t)\rightarrow 0$ as $t\rightarrow \infty$, we can find a $T$ s.t. for any $t\geq T$, 
$$\omega(\gamma_0,t) \le \alpha \left(1 - \frac{\gamma_0}{1+\eta}\right) \implies \frac{\alpha \gamma_0}{(1+\eta)} + \omega(\gamma_0,t) \le \alpha \implies \frac{\gamma_0 t}{B_t} + \omega(\gamma_0, t) \leq \alpha.$$
Consequently, we have for any $t\geq T$, 
$$ s_2(\alpha ,t )\leq B_t =\left\lceil\frac{(1+\eta) t}{\alpha}\right\rceil$$
and this completes the proof. 
\end{proof}

\newpage

\section{Plots for $s_1$ and $s_2$ in the Proof of Theorem \ref{thm:generalized-cond}} \label{app:plot}

\input{t_alpha_und}

\newpage

\bibliographystyle{ACM-Reference-Format}
\bibliography{reference}

\end{document}